\tikzset{snake it/.style={decorate, decoration=snake}}
\edef\restoreparindent{\parindent=\the\parindent\relax}
\newtheoremstyle{break}
  {\topsep}{\topsep}%
  {\upshape}{}%
  {\bfseries}{}%
  {\newline}{}%
\theoremstyle{break}
\newtheorem{proposition}{Proposition}[section]
\def\Tr{{\rm Tr}}
\def\i{{\rm i}}
\def\CH{{\cal H}}
\def\CI{{\cal I}}
\def\CN{{\cal N}}
\def\CO{{\cal O}}
\def\CT{{\cal T}}
\def\BC{\mathbb{C}}
\def\BF{\mathbb{F}}
\def\BR{\mathbb{R}}
\def\BZ{\mathbb{Z}}
\def\U{\mathrm{U}}
\title{
Orbifolds of chiral fermionic CFTs and their duality}
\author[a]{Kohki Kawabata}
\author[a]{and Shinichiro Yahagi}
\affiliation[a]{Department of Physics, Faculty of Science,
The University of Tokyo,\\
Bunkyo-Ku, Tokyo 113-0033, Japan}
\preprint{}
\abstract{
    We consider chiral fermionic conformal field theories (CFTs) constructed from lattices and investigate their orbifolds under reflection and shift $\mathbb{Z}_2$ symmetries. For lattices based on binary error-correcting codes, we show the duality between reflection and shift orbifolds using a triality structure inherited from the binary codes. Additionally, we systematically compute the partition functions of the orbifold theories for both binary and nonbinary codes. Finally, we explore applications of this code-based construction in the search for supersymmetric CFTs and chiral fermionic CFTs without continuous symmetries.
}
\begin{document}
\maketitle
\flushbottom

\newpage

\section{Introduction}

Orbifold is a powerful tool to construct new consistent two-dimensional quantum field theories from a given theory by gauging a global symmetry~\cite{Dixon:1985jw,Dixon:1986jc,Dixon:1986qv}.
A notable example for us is the construction of the chiral bosonic CFT with the Monster group symmetry, which elucidates the Monstrous moonshine phenomenon~\cite{frenkel1984natural,frenkel1989vertex}.
The Monster CFT is constructed by two orbifolds of the chiral bosonic CFT based on a lattice by the reflection $\mathbb{Z}_2$ symmetry $G$ generated by $X\to -X$ and the shift $\mathbb{Z}_2$ symmetry $H$ generated by $X \to X + \pi \delta$, where $X$ is an $n$-dimensional periodic free boson and $\delta$ is a vector specifying the half shift of the periodicity.
These procedures rule out continuous symmetry in the original theory and lead to a consistent chiral CFT with the Monster group symmetry.

This paper aims to study the reflection and shift orbifolds of chiral fermionic CFTs based on odd self-dual lattices.
The chiral fermionic CFTs are characterized by a half-integer spin operator in the spectrum.
When the lattice is odd self-dual, the corresponding set of vertex operators contains half-integer spin operators and gives rise to the Neveu-Schwarz (NS) sector of a chiral fermionic CFT.
The Ramond (R) sector is given by the vertex operators based on the shadow of the lattice, a kind of half-shift of the lattice.
For a non-anomalous $\BZ_2$ symmetry, we find two ways of $\mathbb{Z}_2$ orbifold denoted by $\pm$ that change the NS sector of the original theory $\CT$.
The shift orbifolds can be interpreted as a modification of the original lattice and the theories $(\CT/H)_\pm$ after orbifolding are still a lattice CFT.
On the other hand, the reflection orbifold theories $(\CT/G)_\pm$ are not a lattice CFT after orbifolding.

One of the main results of this paper is the duality between the reflection and shift orbifolds in chiral fermionic CFTs, when a lattice is constructed from a binary error-correcting code.
This result expands the previous one for the bosonic theories constructed from error-correcting codes~\cite{dolan1990conformal,Dolan:1994st}.
A binary error-correcting code $C\subset \BF_2^n$ is a vector space over a finite field $\BF_2=\{0,1\}$ and yields an odd self-dual lattice $\Lambda(C)\subset \BR^n$ by uplifting codewords to lattice vectors by the so-called Construction A.
Finally, this lattice $\Lambda(C)$ gives rise to a chiral fermionic CFT $\CT$ of central charge $n$~\cite{Kawabata:2023nlt}.
In this case, at least one shift $\BZ_2$ symmetry $H$ is present and becomes non-anomalous when $n\in 8\BZ$.
The shift orbifold is a modification of the original lattice $\Lambda(C)$ to another one $\widetilde{\Lambda}(C)$.
Since the original theory $\CT$ always contains $\mathrm{SU}(2)$ current algebras inherited from codewords in the binary code, we can construct the permutation of three generators in $\mathrm{SU}(2)$ called the triality.
Using the triality, we show the equivalence between the reflection and shift orbifolds:
\begin{align}
    \CT/G \,\cong\, \CT/H\,, \qquad \text{($C$: binary code)}\,,
\end{align}
where we denote them by $\CT/G$ and $\CT/H$ since the two types of orbifold $(\pm)$ coincide in the binary construction.
We can proceed to the orbifold of the shift orbifold $\CT/H$ by the reflection symmetry, which leads to the new theory $\CT/H/G$.
We show the whole picture of the equivalence between the orbifolds in chiral fermionic CFTs based on binary codes in Fig.~\ref{fig:isomorphism}.
We also explicitly present their torus partition functions, which allows us to compute their spectra and identify the profiles of chiral fermionic CFTs.

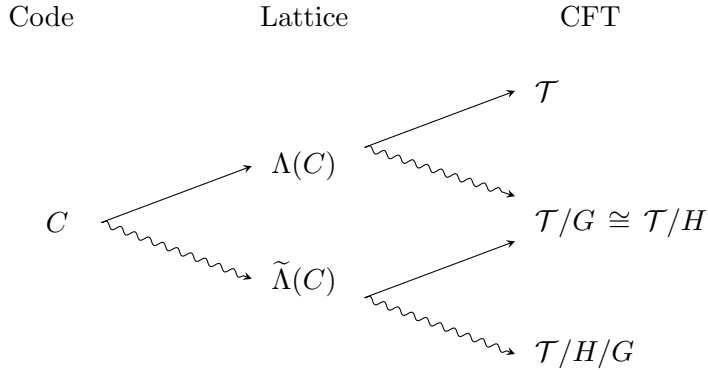
\begin{figure}
    \centering
    \begin{tikzpicture}
    \begin{scope}[yshift=1cm]
    \draw (-0.8,2.5)node[font=\normalsize]{Code};
    \draw (2.7,2.5)node[font=\normalsize]{Lattice};
    \draw (6.5,2.5)node[font=\normalsize]{CFT};
    \end{scope}
    
    \draw[->,>=stealth] (0,0.75)node[left=0.3cm,font=\normalsize]{$C$} to (2,1.5)node[right,font=\normalsize]{\,\,$\Lambda(C)$};
    \draw[->,>=stealth,decorate,decoration={snake,amplitude=.4mm,segment length=2mm,post length=1mm}] (0,0.75) to (2,0)node[right,font=\normalsize]{\,\,$\widetilde{\Lambda}(C)$};
    
    \begin{scope}[xshift=3.5cm,yshift=1cm]
    \draw[->,>=stealth] (0,0.75) to (2,1.5)node[right,font=\normalsize]{$\,\,\CT$};
    \draw[->,>=stealth,decorate,decoration={snake,amplitude=.4mm,segment length=2mm,post length=1mm}] (0,0.75) to (2,0.1);
    \end{scope}
    \begin{scope}[xshift=3.5cm,yshift=-1cm]
    \draw[->,>=stealth] (0,0.75) to (2,1.5);
    \draw[->,>=stealth,decorate,decoration={snake,amplitude=.4mm,segment length=2mm,post length=1mm}] (0,0.75) to (2,0)node[right,font=\normalsize]{$\,\,\CT/H/G$};
    \end{scope}
    \node[right,font=\normalsize] at (5.5,0.75) {$\,\,\CT/G\,\cong\,\CT/H$};

    \end{tikzpicture}
    \caption{The duality between orbifolds of the reflection $\BZ_2$ symmetry $G$ and the shift $\BZ_2$ symmetry $H$ in a chiral fermionic CFT from a binary code. For the code-to-lattice arrows, the straight line shows Construction A and the wavy line does its half shift. For the lattice-to-CFT arrows, the straight lines represent lattice CFT construction and the wavy lines do their reflection orbifold.
    The theory $\CT$ is constructed by the previous work~\cite{Kawabata:2023nlt}, and the reflection and shift orbifold together yield three theories from a binary code. 
    }
    \label{fig:isomorphism}
\end{figure}

We extend our analysis to the shift orbifolds of chiral fermionic CFTs constructed from nonbinary $\BZ_k$ codes $C\subset \BZ_k^n$ where $\BZ_k$ is a ring of integers modulo $k\in \BZ_{\geq 2}$.
We generalize the previous works~\cite{Gaiotto:2018ypj,Kawabata:2023nlt} constructing chiral fermionic CFTs from $\BF_p$ codes for a prime number $p$ to $\BZ_k$ codes for a positive integer $k\geq2$.
When the code is not binary, the equivalence between reflection and shift orbifolds does not hold in general.
However, we can choose a shift $\BZ_2$ symmetry independent of the code $C$, which enables a systematic study of the shift orbifolds. 
We obtain the formula for computing the orbifold partition functions on a torus from the weight enumerators of nonbinary codes.

With the general construction in hand, we provide various chiral fermionic CFTs based on binary and nonbinary codes and their orbifolds.
By applying our construction to binary codes of length $16$, we reproduce the classification result (\cite{BoyleSmith:2023xkd,Rayhaun:2023pgc,Hohn:2023auw}) of chiral fermionic CFTs with central charge $16$.
Up to equivalence, there are only five binary singly-even self-dual codes. We leverage these codes to six odd self-dual lattices and seven chiral fermionic CFTs. This provides the relationship between code, lattice, and CFTs (see Fig.~\ref{fig:n16} for the results).
Also, we find evidence that the reflection and shift orbifolds preserve supersymmetry by checking that some necessary conditions for supersymmetry hold in orbifold theories if the original theory $\CT$ satisfies them. 
Finally, we search for more theories of interest with central charges above $24$.
At $c=24$, we construct the ``Beauty and the Beast" superconformal field theory (SCFT)~\cite{Dixon:1988qd} and the Baby Monster CFT~\cite{hohn2007selbstduale} from binary codes, both of which have sporadic group symmetries.
At $c=32,40$, we give an example of chiral fermionic CFT with spectral gap $\Delta=2$.
This implies that there do not exist spin-one currents generating continuous symmetry in this theory and only discrete symmetry can exist.

The organization of this paper is as follows.
In section~\ref{sec:CFT_Z2}, we review general aspects of chiral fermionic CFTs with a global $\BZ_2$ symmetry. We introduce two types $(\pm)$ of orbifolds by composing the topological operations and present the general prescription for $\BZ_2$-orbifolds.
Section~\ref{sec:orbifold_Lattice} is devoted to the reflection and shift orbifolds of chiral fermionic CFTs based on binary codes. We give a detailed analysis of the shift orbifolds and their interpretation as the lattice modification.
We also show the computation of torus partition functions of the reflection orbifold.
In section~\ref{sec:code_CFT}, we construct chiral fermionic CFTs from $\mathbb{Z}_k$ codes through lattices, which is a generalization of the previous works~\cite{Gaiotto:2018ypj,Kawabata:2023nlt}.
The main purpose of section~\ref{sec:triality} is to show the equivalence between the reflection and shift orbifolds in chiral fermionic CFTs based on binary codes.
We also extend our analysis to nonbinary codes and give the formula for computing the orbifold partition functions in section~\ref{sec:nonbinary}.
In section~\ref{sec:application}, we present various examples of chiral fermionic CFTs and their orbifolds as an application of our general construction.
Finally, we conclude in section~\ref{sec:discussion} and discuss future directions.

\section{Fermionic CFTs and $\BZ_2$ orbifolds} \label{sec:CFT_Z2}

The main interest of this paper is a 2d chiral fermionic CFT with a $\BZ_2$ symmetry.
Since the theory is chiral, its spectrum consists of only the left-moving sector and the right-moving sector is trivial.
A fermionic CFT contains an operator with a half-integer spin in its spectrum.
Any fermionic CFT has a fermion parity symmetry $\BZ_2^f$ generated by $(-1)^F$, which acts on fermionic operators as sign flip and on bosonic operators trivially.
We assume the presence of an additional bosonic $\BZ_2$ symmetry $\mathsf{G}=\{1,\mathsf{g}\}$.
By a bosonic symmetry, we mean a symmetry that does not change a spin structure by the insertion of the corresponding symmetry defect.
Namely, we are working on a 2d theory $\CT$ with $\BZ_2^f\times \BZ_2$ symmetry.

Our spacetime is a manifold $M$ that admits a spin structure. Typically we consider a Riemann surface.
To define a fermionic theory, we need to specify the choice of a spin structure, labeled by the holonomies around the different cycles of $M$.
We denote by $Z_\CT[\gamma]$ the partition function on $M$ equipped with a spin structure $\gamma$.
Furthermore, we can introduce a background $\BZ_2$ connection $\alpha\in H^1(M,\mathsf{G})$ associated with the bosonic $\BZ_2$ symmetry $\mathsf{G}$.
We denote by $Z_\CT[\gamma;\alpha]$ the partition function on a Riemann surface with spin structure $\gamma$ and $\BZ_2$ connection $\alpha$.

In this paper, we mostly set our spacetime as a torus $T^2$.
The cylindrical coordinate is $w=x_1+\i \,x_0$, identified with $w\sim w+1\sim w+\tau$ where $\tau$ is a modulus of the torus.
We set $x_1$ as a spatial direction and $x_0$ as a temporal one.
The additional structure on a torus (e.g., spin structure and background $\BZ_2$ configuration) can be fixed by the periodicities along spatial and temporal cycles. 
For each cycle, the Neveu-Schwarz (NS) sector sets the anti-periodic boundary condition $\psi\to-\psi$ and the Ramond (R) sector does the periodic boundary condition $\psi\to\psi$ where $\psi$ denotes a fermionic field.
We can summarize a spin structure on a torus as $\gamma=(s_0,s_1)$, $s_i\in\{0,1\}$  by denoting $\mathrm{NS} \leftrightarrow 0$ and $\mathrm{R}\leftrightarrow 1$.
Similarly, the $\BZ_2$ connection on a torus can be specified by $\alpha=(a_0,a_1)$, $a_i\in\{0,1\}$ where $a_i$ represents the periodicity $\phi\to \mathsf{g}^{a_i}\cdot \phi$ for a field $\phi$.

The partition function on a torus with a spin structure $\gamma=(s_0,s_1)$ and a background $\BZ_2$ connection $\alpha=(a_0,a_1)$ can be written as
\begin{align}
\begin{aligned}
    Z_\CT[s_0,s_1;a_0,a_1]&= \Tr_{\CH_{\rho(s_1), \iota(a_1)}}\left[\, \mathsf{g}^{a_0} \,(-1)^{s_0F} \,q^{L_0-\frac{n}{24}}\,\right]\,,\\
\end{aligned}
\end{align}
where the partition function of a chiral CFT depends only on $q=e^{2\pi\i\tau}$. 
For notation convenience, we define 
$\rho:\{0,1\}\to \{\mathrm{NS},\mathrm{R}\}$ such that $\rho(0)=\mathrm{NS}$, $\rho(1)=\mathrm{R}$ and $\iota(a)=\mathsf{g}^a$ for $a\in\{0,1\}$.
Here, $\CH_{\rho(s_1),\iota(a_1)}$ is the Hilbert space quantized on the periodicity twisted by the bosonic $\BZ_2$ element $\iota(a_1)\in\BZ_2$ under spin structure $\rho(s_1)\in\{\mathrm{NS},\mathrm{R}\}$.
We often omit the second index for the untwisted sector and simply denote $\CH_{\mathrm{NS},1}=\CH_\mathrm{NS}$ or $\CH_{\mathrm{R},1}=\CH_\mathrm{R}$.

In what follows, we consider the orbifold of a chiral fermionic CFT by a bosonic $\BZ_2$ symmetry $\mathsf{G}$.
To define the orbifold theory consistently, we need to ensure the vanishing 't~Hooft anomaly.
In a 2d chiral theory, there occurs a gravitational anomaly characterized by $\nu_{\mathrm{grav}}=-2c$ where $c$ is the central charge.\footnote{The gravitational anomaly of a chiral CFT with central charge $c$ always can be canceled by coupling $2c$ Majorana-Weyl fermions to the anti-holomorphic sector.}
However, as we want to see the effects of orbifolding, we will focus on an 't~Hooft anomaly of $\BZ_2^f\times \mathsf{G}$ symmetry below.
Note that we need to care about $\BZ_2^f\times \mathsf{G}$ rather than only the bosonic symmetry $\mathsf{G}$ because the orbifold theory has to possess the fermion parity symmetry $\BZ_2^f$.

The 't~Hooft anomaly of $\BZ_2^f\times \mathsf{G}$ symmetry is classified by $\nu\in\BZ_8$~\cite{Kapustin:2014dxa}.
The mod 8 anomaly $\nu$ can be constructed from the three layers $(\nu_1,\nu_2,\nu_3)$:
\begin{align}
    \nu = \nu_1 + 2\,\nu_2 + 4\,\nu_4 \mod 8\,,
\end{align}
where each layer takes a mod 2 value
\begin{align}
    \nu_1\in H^1(\BZ_2,\BZ_2)\cong \BZ_2\,,\quad 
    \nu_2\in H^2(\BZ_2,\BZ_2)\cong \BZ_2\,,\quad
    \nu_3\in H^3(\BZ_2,\U(1))\cong \BZ_2\,.
\end{align}
The three layers $\nu_1$, $\nu_2$, and $\nu_3$ are called the Majorana layer, the Gu-Wen layer~\cite{Gu:2012ib}, and the bosonic layer, respectively.
These layers admit a clear physical and geometric interpretation~\cite{Delmastro:2021xox}.
Concretely, each layer gives a way to encode rules to place SPT phases with no bosonic symmetries on facets of triangulation by $\mathsf{G}$-symmetry lines.
For example, the Majorana layer specifies a way of placing a 2-dimensional Arf theory, which is the non-trivial fermionic SPT phase without bosonic symmetry, on two-dimensional facets. See~\cite{Delmastro:2021xox,Gaiotto:2020iye} for more details.

We can diagnose the 't~Hooft anomalies of symmetry $\mathsf{G}$ in terms of torus partition functions using modular transformation~\cite{Gaiotto:2018ypj,Grigoletto:2021zyv,Delmastro:2021xox}.
The mod 8 anomaly $\nu\in\BZ_8$ can be read off from the eigenvalue $e^{i\pi \nu/4}$ of the modular transformation $ST^2S^{-1}$ on the relative partition function $Z_\CT[0,0;1,0]/Z_\CT[0,0;0,0]$:
\begin{equation}
\label{eq:anomaly_detect}
    ST^2S^{-1}: \;\frac{Z_\CT[0,0;1,0]}{Z_\CT[0,0;0,0]} \longrightarrow e^{\frac{\i\pi\nu}{4}} \,\frac{Z_\CT[0,0;1,0]}{Z_\CT[0,0;0,0]}\,,
\end{equation}
where we take the relative to cancel the contribution from the gravitational anomaly of a chiral CFT.
Since this procedure uses only the NS-NS partition functions $Z_\CT[0,0;1,0]$ and $Z_\CT[0,0;0,0]$, we can compute the anomalies only if we know the action of symmetry in the NS sector.
Below we assume the vanishing anomaly $\nu= 0$ mod 8, which allows us to gauge the $\BZ_2$ symmetry $\mathsf{G}$.

Let us consider the orbifold of the fermionic CFT by a non-anomalous bosonic $\BZ_2$ symmetry $\mathsf{G}$. 
Orbifolding is one of the topological manipulations since it does not change the local structure such as the chiral algebra and only modifies its global structure like the partition function.
By combining topological manipulations, we arrive at various theories associated with a global symmetry $\mathsf{G}$.
Here, we focus on the bosonic operations, which map our fermionic theory to another fermionic theory.
The bosonic operations on a fermionic theory with $\BZ_2^f\times\BZ_2$ are known to be generated by the three following operations~\cite{Gaiotto:2020iye}:
\begin{itemize}
    \item The shift of the spin structure by the $\BZ_2$ gauge field
    \begin{align}
        \pi_F: \; Z_\CT[s_0,s_1;a_0,a_1]\mapsto 
        Z_\CT[s_0+a_0,s_1+a_1;a_0,a_1]\,.
    \end{align}
    \item The stacking of the Arf theory
    \begin{align}
        S_F:\; Z_\CT[s_0,s_1;a_0,a_1]\mapsto 
        (-1)^{s_0s_1}\,Z_\CT[s_0,s_1;a_0,a_1]\,.
    \end{align}
    \item The orbifold of the bosonic $\BZ_2$ symmetry
    \begin{align}
        \CO:\;
        Z_\CT[s_0,s_1;a_0,a_1]\mapsto 
        \frac{1}{2}\, \sum_{c_0,c_1\in\BZ_2} \,(-1)^{a_0c_1-a_1c_0}\,Z_\CT[s_0,s_1;c_0,c_1]\,.
    \end{align}
\end{itemize}
We can combine the three topological manipulations to construct new consistent theories from the original fermionic theory with a non-anomalous $\BZ_2$ symmetry $\mathsf{G}$.
These operations generate 9 independent fermionic theories up to stacking invertible phases~\cite{Gaiotto:2018ypj}.
Among them, we focus on the theories that differ in the NS sector: the original theory $\CT$, the orbifold $(+)$ theory $(\CT/\mathsf{G})_+$, and the orbifold $(-)$ theory $(\CT/\mathsf{G})_-$, which are related by the following topological manipulations:
\begin{align}
    \begin{aligned}
        Z_{(\CT/\mathsf{G})_+}[s_0,s_1;a_0,a_1] &= \CO\cdot Z_\CT[s_0,s_1;a_0,a_1]\,,\\
        Z_{(\CT/\mathsf{G})_-}[s_0,s_1;a_0,a_1] &= (S_F \,\CO \,\pi_F\, S_F \,\pi_F) \cdot Z_{(\CT/\mathsf{G})_+}[s_0,s_1;a_0,a_1]\,.
    \end{aligned}
\end{align}
In terms of the Hilbert space, these theories are related by swapping the sectors as in table~\ref{tab:sec_swap}.
The other theories generated by the topological manipulations are different in the R sector from the three theories $\CT$, $(\CT/\mathsf{G})_+$ and $(\CT/\mathsf{G})_-$.
Since the original theory $\CT$ has two global symmetries $\BZ_2^f$ and $\mathsf{G}$, its Hilbert spaces are extended to twisted sectors with respect to the two symmetries.
In total, the extended Hilbert spaces consist of $\CH_{\mathrm{NS}}$, $\CH_{\mathrm{R}}$, $\CH_{\mathrm{NS},\mathsf{g}}$ and $\CH_{\mathrm{R},\mathsf{g}}$.
Furthermore, each Hilbert space is graded by the charges of the $\BZ_2^f\times\BZ_2$ symmetry, which finally leads to 16 sectors (A, B, $\cdots$, P) on the top in table~\ref{tab:sec_swap}.

Orbifolding $(+)$ swaps the original Hilbert space into the middle in table~\ref{tab:sec_swap}.
After orbifolding $(+)$, the bosonic $\BZ_2$ symmetry becomes trivial because it acts trivially on any local operator.
Instead, a dual $\BZ_2$ symmetry $\check{\mathsf{G}}$ arises in the orbifold $(+)$ theory and the orbifold Hilbert space can be graded by the new $\BZ_2$ symmetry~\cite{Vafa:1989ih,Bhardwaj:2017xup}.
On the other hand, the orbifold $(-)$ theory is still graded by the original $\BZ_2$ symmetry $\mathsf{G}$ since its NS sector includes the sectors $K,L$, which are odd under the $\BZ_2$ symmetry $\mathsf{G}$ before gauging.
Note that the sectors $O,P$ do not appear in the untwisted sector for both orbifold theories.

\begin{table}[t]
\begin{minipage}{\textwidth}
\centering
\begin{tabular}{ll|wc{10mm}wc{10mm}wc{10mm}wc{10mm}}
        & & \multicolumn{2}{c}{untwisted} & \multicolumn{2}{c}{twisted} \\
        & & NS & R & NS & R \\ \midrule
        \multirow{2}{*}{$\mathsf{g}$-even} 
        & boson & A & E & I & M \\
        & fermion & B & F & J & N \\
        \multirow{2}{*}{$\mathsf{g}$-odd} 
        & boson  & C & G & K & O \\
        & fermion & D & H & L & P\\
    \end{tabular}
    \subcaption{original theory $\CT$}
\end{minipage}

\vspace{0.5cm}
\begin{minipage}{\textwidth}
\centering
\begin{tabular}{ll|wc{10mm}wc{10mm}wc{10mm}wc{10mm}}
        & & \multicolumn{2}{c}{untwisted} & \multicolumn{2}{c}{twisted} \\
        & & NS & R & NS & R \\ \midrule
        \multirow{2}{*}{$\check{\mathsf{g}}$-even} 
        & boson & A & E & C & G \\
        & fermion & B & F & D & H \\
        \multirow{2}{*}{$\check{\mathsf{g}}$-odd} 
        & boson  & I & M & K & O \\
        & fermion & J & N & L & P\\
\end{tabular}
\subcaption{orbifold theory $(\CT/\mathsf{G})_+$}
\end{minipage}

\vspace{0.5cm}
\begin{minipage}{\textwidth}
\centering
\begin{tabular}{ll|wc{10mm}wc{10mm}wc{10mm}wc{10mm}}
        & & \multicolumn{2}{c}{untwisted} & \multicolumn{2}{c}{twisted} \\
        & & NS & R & NS & R \\ \midrule
        \multirow{2}{*}{$\mathsf{g}$-even} 
        & boson & A & M & I & E \\
        & fermion & B & N & J & F \\
        \multirow{2}{*}{$\mathsf{g}$-odd} 
        & boson  & L & H & D & P \\
        & fermion & K & G & C & O\\
\end{tabular}
\subcaption{orbifold theory $(\CT/\mathsf{G})_-$}
\end{minipage}
\caption{The sectors in the original, orbifold $(+)$, and orbifold $(-)$ theory.
}
\label{tab:sec_swap}
\end{table}

\section{$\BZ_2$ orbifolds in lattice CFTs}
\label{sec:orbifold_Lattice}

Up to this point, we have described a general discussion about fermionic CFTs and their orbifolds by $\BZ_2$ symmetry.
From this section, we restrict our attention to the orbifolds of chiral fermionic CFTs based on lattices (lattice CFTs).
In section~\ref{sec:lattice_CFT}, we give a brief review of lattice CFTs.
In section~\ref{ss:shift} and \ref{ss:reflection}, we analyze their orbifolds by the shift symmetry $H$ and the reflection symmetry $G$ in detail.

\subsection{Lattice CFTs} \label{sec:lattice_CFT}
Let us recall some definitions associated with lattices following~\cite{conway2013sphere} to introduce lattice CFTs.
A lattice $\Lambda\subset\BR^n$ is a discrete subgroup of $\BR^n$, which spans a vector space.
For a lattice $\Lambda$ with the standard inner product $x\cdot y=\sum_{i=1}^n x_i\,y_i$ for $x,y\in\BR^n$, the dual lattice is defined by
\begin{equation}
    \Lambda^\ast = \left\{ \lambda'\in\BR^n \mid \lambda\cdot \lambda'\in\BZ \text{ for all } \lambda\in\Lambda \right\} \,.
\end{equation}
A lattice $\Lambda$ is called integral when $\Lambda\subset\Lambda^\ast$ and self-dual when $\Lambda=\Lambda^\ast$. Moreover, an integral lattice is called even when $\lambda\cdot\lambda\in2\BZ$ for all $\lambda\in\Lambda$, and odd otherwise.

Let $\Lambda\subset\BR^n$ be an odd self-dual lattice. It can be divided into two disjoint subsets: $\Lambda = \Lambda_0 \sqcup \Lambda_2$ where
\begin{equation}
    \Lambda_0 =  \left\{ \lambda\in\Lambda \mid \lambda^2\equiv0 \;\bmod 2 \right\} \,,\quad
    \Lambda_2 = \left\{ \lambda\in\Lambda \mid \lambda^2\equiv1 \;\bmod 2 \right\} \,.
\end{equation}
The shadow of $\Lambda$ is defined by
\begin{equation}
    S(\Lambda) = \Lambda_0^\ast \setminus \Lambda \,.
\end{equation}

It is convenient to introduce a characteristic vector. A lattice vector $\chi\in\Lambda$ is called characteristic if $\lambda\cdot\lambda \equiv \chi\cdot\lambda \;\bmod 2$ for all $\lambda\in\Lambda$. The shadow can be written as $S(\Lambda)=\Lambda+\frac{\chi}{2}$ for any characteristic vector $\chi\in\Lambda$, or equivalently, $S(\Lambda)=\{\frac{\chi}{2} \mid \chi: \text{a characteristic vector of } \Lambda \}$.

For a self-dual lattice $\Lambda$, one can construct a CFT by giving the set of vertex operators from $\Lambda$ (see, for example, \cite{Lerche:1988np,kac1998vertex}).
We construct a chiral fermionic CFT $\CT$ with central charge $n$ from the lattice $\Lambda\subset\BR^n$ by specifying a set of vertex operators in the Neveu-Schwarz (NS) and Ramond (R) sectors as
\begin{alignat}{2}
    V_\lambda(z) &=  \,: e^{\i \lambda\cdot X(z)}:\,, \quad \lambda\in\Lambda &\qquad&(\text{NS sector}) \\
    V_\xi(z) &=  \,: e^{\i \xi\cdot X(z)}:\,, \quad \xi\in S(\Lambda) &&(\text{R sector})
\end{alignat}
where the colon denotes the normal ordering and $X(z)$ is an $n$-dimensional chiral scalar boson whose mode expansion is
\begin{align}
    X^j(z) = q^j - \i p^j \ln z + \i \sum_{n\neq 0 }\frac{\alpha_n^j}{n}z^{-n}\,.
\end{align}
Here, $q$ is a position operator that only appears in $e^{\i\lambda\cdot q}$ and acts as $e^{\i\lambda\cdot q}\ket{\mu} = \ket{\lambda+\mu}$ where $\ket{\mu}$ denotes a momentum eigenstate $(\mu\in\Lambda)$.
Since the NS sector consists of local operators, it is required to satisfy the mutual locality
\begin{align}
\label{eq:cocyfac}
    V_{\lambda}(z)\, V_{\mu}(w) = \varepsilon\, V_{\mu}(w)\, V_{\lambda}(z)\,,
\end{align}
where $\varepsilon=-1$ if both of the operators are fermionic, and $\varepsilon=+1$ otherwise: $\varepsilon = (-1)^{\lambda^2\mu^2}$.
To respect this condition, we need to introduce a ``cocycle" factor $\sigma_\lambda$ satisfying
\begin{align}
    \hat{\sigma}_\lambda \, \hat{\sigma}_\mu = \varepsilon\,(-1)^{\lambda\cdot \mu} \,\hat{\sigma}_\mu \,\hat{\sigma}_\lambda\,,
\end{align}
where we conventionally take $\hat{\sigma}_\lambda = \sigma_\lambda \,e^{\i\lambda\cdot q}$~\cite{Frenkel:1980rn}.
Thus, the precise definition of vertex operators satisfying the mutual locality is
\begin{align}
    V_\lambda(z) = \,:e^{\i\lambda\cdot X(z)}:\,{\sigma}_\lambda \,.
\end{align}
The cocycle factor satisfies the following multiplication rule
\begin{align}
\label{eq:epsi_def}
    \hat{\sigma}_\lambda\,\hat{\sigma}_\mu = \epsilon(\lambda,\mu)\,\hat{\sigma}_{\lambda+\mu}\,,
\end{align}
where $\epsilon(\lambda,\mu)\in \BC$.
The coefficient $\epsilon$ has to satisfy some conditions.
First, the condition~\eqref{eq:cocyfac} related to the mutual locality requires
\begin{align}
\label{eq:com_cocy}
    \epsilon(\lambda,\mu) = (-1)^{\lambda\cdot\mu+\lambda^2\mu^2}\,\epsilon(\mu,\lambda)\,.
\end{align}
Second, since $\lambda = 0$ represents the identity operator, we have the condition
\begin{align}
    \epsilon(\lambda,0) = \epsilon(0,\lambda) = 1\,.
\end{align}
Finally, the associativity of vertex operators: $(V_\alpha V_\beta)V_\gamma  = V_\alpha (V_\beta V_\gamma) $ imposes the condition
\begin{align}
    \label{eq:cocycle}\epsilon(\alpha,\beta)\,\epsilon(\alpha+\beta,\gamma) = \epsilon(\alpha,\beta+\gamma) \,\epsilon(\beta,\gamma)\,,
\end{align}
which means that $\epsilon$ is a 2-cocycle of $\Lambda$ as an additive group.

Taking the cocycle factors into account, we consider the operator product expansion (OPE) between vertex operators $V_\lambda(z) V_\mu(w)$ where $\lambda^2 =\mu^2= 2$ since this type of OPE appears in a later section.
The singular terms appear only when $\lambda\cdot\mu = -1,-2$ because $\lambda^2 =\mu^2= 2$.
When $\lambda\cdot \mu = -2$, we have $\mu=-\lambda$ and the OPE becomes
\begin{align}
\label{eq:ope1}
    V_\lambda(z) \,V_\mu(w) \sim \frac{\epsilon(\lambda,-\lambda)}{(z-w)^2}\left[\,1 + \i\, (z-w)\,\lambda\cdot \partial X(w) \,\right]\,,
\end{align}
where we used the multiplication rule \eqref{eq:epsi_def} of the cocycle factor.
When $\lambda \cdot \mu = -1$, the OPE is given by
\begin{align}
    V_\lambda(z) \,V_\mu(w) \sim \frac{\epsilon(\lambda,\mu)}{z-w} \,V_{\lambda+\mu}(w)\,.
\end{align}

From the state-operator mapping, the Hilbert space of the NS sector 
 $\CH_{\mathrm{NS}}(\Lambda)$ is spanned by
\begin{equation}
    \prod_{i=1}^n \prod_{m=1}^\infty (\alpha_{-m}^{i})^{N_{im}} \ket{\lambda}\,,\quad \lambda\in\Lambda\,,
\end{equation}
where $\alpha_m^i,\, i=\{1,\dots,n\},\, m\in\BZ$ is the oscillator that satisfies $[\alpha_m^i,\alpha_k^j]=m\delta_{m,-k}\delta^{i,j}$ and $N_{im}\in\BZ_{\geq 0}$ is the occupation number for each mode. The fermion parity $(-1)^F$ acts on states as $(-1)^{\lambda^2}$, which equals $(-1)^{\chi\cdot\lambda}$ for any characteristic vector  $\chi\in\Lambda$.

Similarly, the Hilbert space of the R sector $\CH_{\mathrm{R}}(\Lambda)$ is spanned by
\begin{equation}
    \prod_{i=1}^n \prod_{m=1}^\infty (\alpha_{-m}^{i})^{N_{im}} \ket{\xi}\,,\quad \xi\in S(\Lambda)\,.
\end{equation}
For the R sector, there is ambiguity in the fermion party $(-1)^F$ and in this paper we fix a specific characteristic vector $\chi\in\Lambda$ and define $(-1)^F\ket{\xi}=(-1)^{\chi\cdot\lambda}\ket{\xi}$ where $\xi=\lambda+\frac{\chi}{2},\, \lambda\in\Lambda$.

Let us consider the torus partition functions. For a fermionic CFT, the torus has spin structures specified by spatial and timelike boundary conditions. With the notation introduced in section~\ref{sec:CFT_Z2}, the partition functions of the CFT $\CT$ constructed from the lattice $\Lambda\subset\BR^n$ are
\begin{align} \label{eq:lattice_partition_functions}
\begin{alignedat}{2}
    Z_{\CT}[0,0] &= \Tr_{\CH_{\mathrm{NS}}}\left[{q^{L_0-\frac{n}{24}}}\right] &&=  \frac{1}{\eta(\tau)^n} \sum_{\lambda\,\in\,\Lambda}  q^{\frac{1}{2}\lambda^2} \,, \\
    Z_{\CT}[1,0] &= \Tr_{\CH_{\mathrm{NS}}}\left[{(-1)^F\,q^{L_0-\frac{n}{24}}}\right] &&= \frac{1}{\eta(\tau)^n} \sum_{\lambda\,\in\,\Lambda} (-1)^{\chi\cdot\lambda} q^{\frac{1}{2}\lambda^2} \,, \\
    Z_{\CT}[0,1] &= \Tr_{\CH_{\mathrm{R}}}\left[{q^{L_0-\frac{n}{24}}}\right] &&= \frac{1}{\eta(\tau)^n} \sum_{\lambda\,\in\,\Lambda} q^{\frac{1}{2}(\lambda+\frac{\chi}{2})^2} \,, \\
    Z_{\CT}[1,1] &= \Tr_{\CH_{\mathrm{R}}}\left[{(-1)^F\,q^{L_0-\frac{n}{24}}}\right] &&= \frac{1}{\eta(\tau)^n} \sum_{\lambda\,\in\,\Lambda} (-1)^{\chi\cdot\lambda} q^{\frac{1}{2}(\lambda+\frac{\chi}{2})^2} \,,
\end{alignedat}
\end{align}
where $L_0$ is the Virasoro generator, $\eta(\tau)$ is the Dedekind eta function, and $\chi$ is a characteristic vector. 
Note that there is the ambiguity of the overall sign in $Z_{\CT}[1,1]$, which comes from that of the fermion parity $(-1)^F$.
Using the lattice theta function $\Theta_\Lambda(\tau) = \sum_{\lambda\in\Lambda} q^{\lambda^2/2}$, the NS-NS partition function can be written as $Z_{\CT}[0,0]=\Theta_\Lambda(\tau)/\eta(\tau)^n$.

In the rest of this section, we discuss two types of orbifolds for the lattice CFT: the shift orbifold by $h: X\to X+\pi\delta$ and the reflection orbifold by $g: X\to -X$.

\subsection{Shift orbifold}
\label{ss:shift}

This section is devoted to the orbifold of a fermionic lattice CFT by a shift $\BZ_2$ symmetry $H=\{1,h\}$ generated by a half shift $h: X\to X+\pi\delta$.
In section~\ref{sss:shift_hilbert}, we describe the Hilbert space extended by the shift $\BZ_2$ symmetry $H$.
The extended Hilbert space consists of the untwisted and twisted sectors as in the top of table~\ref{tab:sec_swap}.
The shift orbifold theory $\CT/H$ is given by swapping those sectors. In lattice CFTs, we can interpret the shift orbifold as a modification of lattice, which will be explained in section~\ref{sss:shiftorb_aslat}.

\subsubsection{$\BZ_2$-extended Hilbert space}
\label{sss:shift_hilbert}

Let $\Lambda\subset\BR^n$ be an odd self-dual lattice and $\delta\in\Lambda$ a vector that is not a characteristic vector and satisfies $\frac{1}{2}\delta \notin \Lambda$. We consider the theory on the orbifold obtained by the shift $h: X \rightarrow X + \pi \delta$.
Note that if $\delta$ is a characteristic vector, then the orbifold theory can be constructed by simply swapping four sectors (NS/R and boson/fermion) as described in \cite{Kawabata:2023iss}, and if $\frac{1}{2}\delta\in\Lambda$, then the shift $h$ is trivial.

Under the shift $\BZ_2$ symmetry, the vertex operators are classified into $\BZ_2$ even and odd sectors.
In terms of the momentum lattice $\Lambda$, the inner product with $\delta$ specifies the $\BZ_2$ grading of the corresponding operators.
It is convenient to define
\begin{align}
    \Lambda_{\delta\text{-even}} = \left\{ \lambda\in\Lambda \mid \delta\cdot\lambda\in2\BZ \right\} \,,\qquad
    \Lambda_{\delta\text{-odd}} = \left\{ \lambda\in\Lambda \mid \delta\cdot\lambda\in2\BZ+1 \right\} \,.
\end{align}
The vertex operators $V_\lambda(z)$ are even when $\lambda\in\Lambda_{\delta\text{-even}}$ and odd when $\lambda\in\Lambda_{\delta\text{-odd}}$.
Since the oscillator excitations are bosonic, the untwisted NS Hilbert space can be decomposed into $\CH_{\mathrm{NS}}(\Lambda) = \CH_{\mathrm{NS}}^{+_h}\oplus \CH_{\mathrm{NS}}^{-_h}$ where
\begin{align}
\begin{aligned}
    \CH_{\mathrm{NS}}^{+_h} &= \left\{\prod_{i=1}^n \prod_{m=1}^\infty (\alpha^i_{-m})^{N_{im}}\ket{\lambda}\;\middle|\;\lambda\in\Lambda_{\delta\text{-even}}\right\}\,,\\
    \CH_{\mathrm{NS}}^{-_h} &= \left\{\prod_{i=1}^n \prod_{m=1}^\infty (\alpha^i_{-m})^{N_{im}}\ket{\lambda}\;\middle|\;\lambda\in\Lambda_{\delta\text{-odd}}\right\}\,.
\end{aligned}
\end{align}
We define $\BZ_2$ grading for the vertex operators in the untwisted R sector by $h:V_{\lambda+\frac{\chi}{2}}(z)\to (-1)^{\delta\cdot\lambda}\,V_{\lambda+\frac{\chi}{2}}(z)$ and the R sector decomposes into $\CH_{\mathrm{R}}(\Lambda) = \CH_{\mathrm{R}}^{+_h}\oplus \CH_{\mathrm{R}}^{-_h}$ where 
\begin{align}
\begin{aligned}
    \CH_{\mathrm{R}}^{+_h} &= \left\{\prod_{i=1}^n \prod_{m=1}^\infty (\alpha^i_{-m})^{N_{im}}\ket{\lambda+\tfrac{\chi}{2}}\;\middle|\;\lambda\in\Lambda_{\delta\text{-even}}\right\}\,,\\
    \CH_{\mathrm{R}}^{-_h} &= \left\{\prod_{i=1}^n \prod_{m=1}^\infty (\alpha^i_{-m})^{N_{im}}\ket{\lambda+\tfrac{\chi}{2}}\;\middle|\;\lambda\in\Lambda_{\delta\text{-odd}}\right\}\,.
\end{aligned}
\end{align}

Let us detect its 't~Hooft anomaly by using the technique introduced around~\eqref{eq:anomaly_detect}.
From direct calculation using the expression
\begin{equation}
    Z_\CT[0,0;1,0] = \frac{1}{\eta(\tau)^n} \sum_{\lambda\in\Lambda} (-1)^{\lambda\cdot\delta} q^{\frac{1}{2}\lambda^2} \,,
\end{equation}
the phase from the modular transformation $ST^2S^{-1}$ on $Z_\CT[0,0;1,0]/Z_\CT[0,0;0,0]$ is $e^{2\pi\i\nu/8}$ with $\nu=2\delta^2$.
Thus, we can conclude that the shift orbifold theory is non-anomalous if and only if $\delta^2\in4\BZ$. This condition is required to be satisfied for a consistent definition of the orbifold theory.

As mentioned earlier, we take $\delta$ that is not a characteristic vector and satisfies $\frac{1}{2}\delta\notin\Lambda$. Equivalently, $\delta$ satisfies $\frac{\delta}{2}\notin(\Lambda\sqcup S(\Lambda))$.
Combining with the condition $\delta^2\in4\BZ$ for non-anomalous $\delta$, we call the gauging condition ($\frac{\delta}{2}\notin(\Lambda\sqcup S(\Lambda))$ and $\delta^2\in4\BZ$).

For the shift symmetry, we can construct the twisted sector by shifting the original momentum lattice $\Lambda$ with the half lattice element $\delta/2$. The vertex operators in the twisted NS sector can be expressed as
\begin{align}
    V_{\lambda+\frac{\delta}{2}}(z) = \,:e^{\i(\lambda+\frac{\delta}{2})\cdot X(z)}:\quad (\lambda\in\Lambda)\,.
\end{align}
Note that the cocycle factor is omitted in this section as it does not affect the results.
The operators are identified to be in the twisted sector by considering the operator product expansion with $V_{\lambda'}(w)$ in the untwisted NS sector and circling one operator around the other. Then we obtain the phase
\begin{align}
    V_{\lambda+\frac{\delta}{2}}(z)\, V_{\lambda'}(w) \to (-1)^{\delta\cdot\lambda'} \, V_{\lambda+\frac{\delta}{2}}(z) \,V_{\lambda'}(w)\,.
\end{align}
This implies that the vertex operators $V_{\lambda+\frac{\delta}{2}}(z)$ are lying on a line operator implementing the shift $\BZ_2$ symmetry, from which we can see that they are operators in the twisted sector.
Again, since the oscillators are bosonic, the twisted NS sector is spanned by
\begin{align}
     \prod_{i=1}^n \prod_{m=1}^\infty (\alpha^i_{-m})^{N_{im}}\ket{\lambda+\tfrac{\delta}{2}}\,,
\end{align}
where $N_{im}$ is the occupation number for each mode.
As in the untwisted case, the twisted R sector consists of vertex operators with momenta shifted by the half characteristic vector $\chi/2$.
Thus, the vertex operators in the twisted R sector take the form 
\begin{align}
    V_{\lambda+\frac{\delta}{2}+\frac{\chi}{2}}(z)\,=\,:e^{\i(\lambda+\frac{\delta}{2}+\frac{\chi}{2})\cdot X(z)}:\quad (\lambda\in\Lambda)\,.
\end{align}
Here, we can rewrite $\lambda+\frac{\chi}{2}$ by using an element $\xi$ of the shadow $S(\Lambda)$: $\xi=\lambda+\frac{\chi}{2}$.
The twisted R sector is spanned by
\begin{align}
     \prod_{i=1}^n \prod_{m=1}^\infty (\alpha^i_{-m})^{N_{im}}\ket{\xi+\tfrac{\delta}{2}}\,,
\end{align}
where $\xi\in S(\Lambda)$ and $N_{im}$ is the occupation number for each mode.

To define the orbifold, we need to specify the action of the fermion parity and the $\BZ_2$ symmetry in the twisted Hilbert space.
As in the untwisted sector, we define the $\BZ_2$ symmetry on the vertex operators in the twisted sector by 
\begin{align}
    h:\quad V_{\lambda+\frac{\delta}{2}}(z) \mapsto (-1)^{\delta\cdot(\lambda+\frac{\delta}{2})}\,V_{\lambda+\frac{\delta}{2}}(z)\,,\quad V_{\lambda+\frac{\delta}{2}+\frac{\chi}{2}}(z) \mapsto (-1)^{\delta\cdot(\lambda+\frac{\delta}{2})}\,V_{\lambda+\frac{\delta}{2}+\frac{\chi}{2}}(z)\,,
\end{align}
where $\lambda\in\Lambda$. Note that $(-1)^{\delta\cdot(\lambda+\frac{1}{2}\delta)} = (-1)^{\delta\cdot\lambda}$ from $\delta^2\in 4\BZ$. Additionally, we assume that the fermion parity acts on the twisted Hilbert space as
\begin{align}
    (-1)^F:\; V_{\lambda+\frac{\delta}{2}}(z) \mapsto (-1)^{\chi\cdot(\lambda+\frac{\delta}{2})}\,V_{\lambda+\frac{\delta}{2}}(z)\,,\quad V_{\lambda+\frac{\delta}{2}+\frac{\chi}{2}}(z) \mapsto (-1)^{\chi\cdot(\lambda+\frac{\delta}{2})}\,V_{\lambda+\frac{\delta}{2}+\frac{\chi}{2}}(z)\,.
\end{align}
In table~\ref{tab:grading}, we summarize the action of the shift symmetry and the fermion parity.

To get the spin selection rule for each sector, we take a characteristic vector $\chi\in\Lambda$ of the original momentum lattice $\Lambda$ such that
\begin{align} \label{eq:chi_condition}
    \frac{\chi\cdot \delta}{2} = \frac{\delta^2}{4}\qquad \text{mod } 2\,.
\end{align}
This is always possible since for a characteristic vector $\chi$, a vector $\chi+2\lambda_o$ ($\lambda_o\in\Lambda_{\delta\text{-odd}}$) is also characteristic, but it has a different mod 2 value of the inner product with $\frac{\delta}{2}$.
Then, we obtain the spin selection rule for each sector extended by the shift symmetry in lattice CFTs in table~\ref{tab:selction_shift}.
Note that using the fact that a characteristic vector $\chi$ of a self-dual lattice $\Lambda\subset\BR^n$ satisfies $\chi^2 \equiv n \mod 8$~\cite{serre2012course,elkies1999lattices}, we obtain the spin of the R sector $s\in n/8+\BZ$ except for $s\in \frac{1}{2}+\frac{n}{8}+\BZ$ for the $h$-odd states in the twisted R sector. 

\begin{table}[t]
\centering
\begin{tabular}{l|wc{17mm}wc{17mm}wc{17mm}wc{17mm}}
        & \multicolumn{2}{c}{untwisted} & \multicolumn{2}{c}{twisted} \\
        & NS & R & NS & R \\
        & $V_{\lambda}(z)$ & $V_{\lambda+\frac{\chi}{2}}(z)$ & $V_{\lambda+\frac{\delta}{2}}(z)$ & $V_{\lambda+\frac{\delta}{2}+\frac{\chi}{2}}(z)$ \\ \midrule
        shift symmetry $h$ & $(-1)^{\delta\cdot\lambda}$ & $(-1)^{\delta\cdot\lambda}$ & $(-1)^{\delta\cdot(\lambda+\frac{\delta}{2})}$ & $(-1)^{\delta\cdot(\lambda+\frac{\delta}{2})}$ \\[0.1cm]
        fermion parity $(-1)^F$ & $(-1)^{\chi\cdot\lambda}$ & $(-1)^{\chi\cdot\lambda}$ & $(-1)^{\chi\cdot(\lambda+\frac{\delta}{2})}$ & $(-1)^{\chi\cdot(\lambda+\frac{\delta}{2})}$ \\
    \end{tabular}
    \caption{The action of the shift symmetry $H=\{1,h\}$ and the fermion parity $(-1)^F$ for the vertex operators in each sector.}
    \label{tab:grading}
\end{table}

\begin{table}[t]
\centering
\begin{tabular}{ll|wc{17mm}wc{17mm}wc{17mm}wc{17mm}}
        & & \multicolumn{2}{c}{untwisted} & \multicolumn{2}{c}{twisted} \\
        & & NS & R & NS & R \\ \midrule
        \multirow{2}{*}{$h$-even} & boson & $\BZ$ & $\tfrac{n}{8}+\BZ$ & $\BZ$ & $\tfrac{n}{8}+\BZ$ \\
        & fermion & $\tfrac{1}{2}+\BZ$ & $\tfrac{n}{8}+\BZ$ & $\tfrac{1}{2}+\BZ$ & $\tfrac{n}{8}+\BZ$ \\
        \multirow{2}{*}{$h$-odd} & boson  & $\BZ$ & $\tfrac{n}{8}+\BZ$ & $\tfrac{1}{2}+\BZ$ & $\tfrac{1}{2}+\tfrac{n}{8}+\BZ$ \\
        & fermion & $\tfrac{1}{2}+\BZ$ & $\tfrac{n}{8}+\BZ$ & $\BZ$ & $\tfrac{1}{2}+\tfrac{n}{8}+\BZ$ \\
    \end{tabular}
    \caption{The spin selection rule for the shift symmetry $H=\{1,h\}$ in lattice CFT.
    }
    \label{tab:selction_shift}
\end{table}

\begin{table}[h]
    \begin{minipage}[t]{\hsize}
        \centering
        \begin{tabular}{l|wc{22mm}wc{22mm}wc{22mm}wc{25mm}}
            & \multicolumn{2}{c}{untwisted} & \multicolumn{2}{c}{twisted} \\
            & NS & R & NS & R \\
            & $\Lambda$ & $S(\Lambda)$ & $\Lambda\!+\!\frac{\delta}{2}$ & $S(\Lambda)\!+\!\frac{\delta}{2}$ \\ \midrule
            $\delta$-even & $\Lambda_{\delta\text{-even}}$ & $\Lambda_{\delta\text{-even}}+\tfrac{\chi}{2}$ & $\Lambda_{\delta\text{-even}}+\tfrac{\delta}{2}$ & $\Lambda_{\delta\text{-even}}+\tfrac{\delta}{2}+\tfrac{\chi}{2}$ \\[0.2cm]
            $\delta$-odd  & $\Lambda_{\delta\text{-odd}}$ & $\Lambda_{\delta\text{-odd}}+\tfrac{\chi}{2}$ & $\Lambda_{\delta\text{-odd}}+\tfrac{\delta}{2}$ & $\Lambda_{\delta\text{-odd}}+\tfrac{\delta}{2}+\tfrac{\chi}{2}$ \\
        \end{tabular}
        \subcaption{original}
        \label{tab:orig}
    \end{minipage}

    \vspace{0.5cm}
    
    \begin{minipage}[t]{\hsize}
        \centering
        \begin{tabular}{l|wc{22mm}wc{22mm}wc{22mm}wc{25mm}}
            & \multicolumn{2}{c}{untwisted} & \multicolumn{2}{c}{twisted} \\
            & NS & R & NS & R \\
            & $\Lambda_{\delta}^{\text{orb}+}$ & $S(\Lambda_{\delta}^{\text{orb}+})$ & $\Lambda_{\delta}^{\text{orb}+}\!+\!\lambda_o$ & $S(\Lambda_{\delta}^{\text{orb}+})\!+\!\lambda_o$ \\ \midrule
            $2\lambda_o$-even & $\Lambda_{\delta\text{-even}}$ & $\Lambda_{\delta\text{-even}}+\tfrac{\chi}{2}$ & $\Lambda_{\delta\text{-odd}}$ & $\Lambda_{\delta\text{-odd}}+\tfrac{\chi}{2}$ \\[0.2cm]
            $2\lambda_o$-odd  & $\Lambda_{\delta\text{-even}}+\frac{\delta}{2}$ & $\Lambda_{\delta\text{-even}}+\tfrac{\delta}{2}+\tfrac{\chi}{2}$ & $\Lambda_{\delta\text{-odd}}+\tfrac{\delta}{2}$ & $\Lambda_{\delta\text{-odd}}+\tfrac{\delta}{2}+\tfrac{\chi}{2}$ \\
        \end{tabular}
        \subcaption{orbifold $(+)$}
        \label{tab:plus}
        \end{minipage}

        \vspace{0.5cm}
    
    \begin{minipage}[t]{\hsize}
        \centering
        \begin{tabular}{l|wc{22mm}wc{22mm}wc{22mm}wc{25mm}}
            & \multicolumn{2}{c}{untwisted} & \multicolumn{2}{c}{twisted} \\
            & NS & R & NS & R \\
            & $\Lambda_{\delta}^{\text{orb}-}$ & $S(\Lambda_{\delta}^{\text{orb}-})$ & $\Lambda_{\delta}^{\text{orb}-}\!+\!\frac{\delta}{2}$ & $S(\Lambda_{\delta}^{\text{orb}-})\!+\!\frac{\delta}{2}$ \\ \midrule
            $\delta$-even & $\Lambda_{\delta\text{-even}}$ & $\Lambda_{\delta\text{-even}}+\tfrac{\delta}{2}+\tfrac{\chi}{2}$ & $\Lambda_{\delta\text{-even}}+\tfrac{\delta}{2}$ & $\Lambda_{\delta\text{-even}}+\tfrac{\chi}{2}$ \\[0.2cm]
            $\delta$-odd  & $\Lambda_{\delta\text{-odd}}+\frac{\delta}{2}$ & $\Lambda_{\delta\text{-odd}}+\tfrac{\chi}{2}$ & $\Lambda_{\delta\text{-odd}}$ & $\Lambda_{\delta\text{-odd}}+\tfrac{\delta}{2}+\tfrac{\chi}{2}$ \\
        \end{tabular}
        \subcaption{orbifold $(-)$}
        \label{tab:minus}
        \end{minipage}
        \caption{The shift orbifold in terms of lattices.  After orbifolding, the shift vector $2\lambda_o$ for $(\CT/H)_+$ and $\delta$ for $(\CT/H)_-$ is the generator of the dual symmetry. Note that even or odd refers to the action of $h$ ($\check{h}$) on the vertex operator, not the inner product itself.}
        \label{tab:lattice_sectors}
\end{table}

\subsubsection{Shift orbifold as lattice CFT}
\label{sss:shiftorb_aslat}

Let us consider the orbifold by the shift symmetry $H=\{1,h\}$.

By following table~\ref{tab:sec_swap}, two orbifold theories $(\CT/H)_\pm$ can be obtained. We will see that they are interpreted as CFTs constructed from other lattices.
The NS sectors of the orbifold theories consist of
\begin{align}
\begin{aligned}
    (+)&: \text{$h$-even untwisted NS + $h$-even twisted NS} \,, \\
    (-)&: \text{$h$-even untwisted NS + $h$-odd twisted NS} \,,
\end{aligned}
\end{align}
thus the momentum lattices are
\begin{align} \label{eq:orbifold_lattice}
    \Lambda_{\delta}^{\text{orb}+} = \Lambda_{\delta\text{-even}} \sqcup \left( \Lambda_{\delta\text{-even}} + \tfrac{1}{2}\delta \right) \,,\qquad
    \Lambda_{\delta}^{\text{orb}-} = \Lambda_{\delta\text{-even}} \sqcup \left( \Lambda_{\delta\text{-odd}} + \tfrac{1}{2}\delta \right) \,.
\end{align}
The Hilbert spaces of the NS and R sectors and the torus partition functions with the spin structure can be described with these lattices and their shadows as in section~\ref{sec:lattice_CFT}.

To ensure the consistency of the orbifold theory, we show that $\Lambda_{\delta}^{\text{orb}\pm}$ is an odd self-dual lattice. This guarantees that the orbifold partition functions covariantly transform under the modular transformation.

\begin{proposition}
    Let $\Lambda\subset\BR^n$ be an odd self-dual lattice and $\delta\in\Lambda$ a vector satisfying the gauging condition: $\frac{\delta}{2} \notin (\Lambda \sqcup S(\Lambda))$ and $\delta^2\in4\BZ$. Then $\Lambda_{\delta}^{\text{orb}\pm}$ defined in \eqref{eq:orbifold_lattice} is odd self-dual.
    \label{prop:odd_self-dual}
\end{proposition}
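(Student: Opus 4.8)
The plan is to handle both lattices uniformly in four steps: verify each $\Lambda_\delta^{\mathrm{orb}\pm}$ is actually a lattice (a subgroup), establish integrality, compute the covolume to deduce self-duality, and finally treat oddness as a separate point. Throughout I would use the grading rules (even$+$even$=$even, even$+$odd$=$odd, odd$+$odd$=$even) and the single observation that $\delta^2\in4\BZ$ forces $\delta\in\Lambda_{\delta\text{-even}}$, since $\delta\cdot\delta=\delta^2$ is even.

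\emph{Closure and integrality.} First I would check closure under addition of the unions in \eqref{eq:orbifold_lattice}; the only nontrivial case is $(\Lambda_{\delta\text{-odd}}+\tfrac{\delta}{2})+(\Lambda_{\delta\text{-odd}}+\tfrac{\delta}{2})=(\Lambda_{\delta\text{-odd}}+\Lambda_{\delta\text{-odd}})+\delta\subseteq\Lambda_{\delta\text{-even}}$, where $\delta\in\Lambda_{\delta\text{-even}}$ is used, and discreteness and full rank are inherited from $\Lambda$. For integrality I would expand $x\cdot y$ for generic elements, each written as a lattice vector plus $0$ or $\tfrac{\delta}{2}$, and invoke three facts: $\Lambda$ is integral; $\mu\cdot\delta\in2\BZ$ for $\mu\in\Lambda_{\delta\text{-even}}$ while $\nu\cdot\delta\in2\BZ+1$ for $\nu\in\Lambda_{\delta\text{-odd}}$; and $\tfrac14\delta^2\in\BZ$ from $\delta^2\in4\BZ$. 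For the $(-)$ lattice the two half-integer contributions $\tfrac12\nu_1\cdot\delta+\tfrac12\nu_2\cdot\delta$ sum to an integer, which is the one place the odd grading is genuinely needed.

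\emph{Self-duality via covolume.} The structural point is that both $\Lambda_\delta^{\mathrm{orb}\pm}$ contain $\Lambda_{\delta\text{-even}}$ as an index-$2$ sublattice, while $\Lambda_{\delta\text{-even}}$ has index $2$ in $\Lambda$. The latter holds because the functional $\lambda\mapsto\delta\cdot\lambda\bmod2$ is nonzero precisely when $\tfrac{\delta}{2}\notin\Lambda=\Lambda^\ast$, part of the gauging condition; the former because $\tfrac{\delta}{2}\notin\Lambda$ keeps the shifted coset distinct from $\Lambda_{\delta\text{-even}}$. Hence $\mathrm{covol}(\Lambda_\delta^{\mathrm{orb}\pm})=\tfrac12\,\mathrm{covol}(\Lambda_{\delta\text{-even}})=\tfrac12\cdot2\,\mathrm{covol}(\Lambda)=1$, and an integral lattice of covolume $1$ satisfies $L\subseteq L^\ast$ with equal covolumes, hence $L=L^\ast$.

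\emph{Oddness (the crux).} It remains to produce a vector of odd norm, and I expect this to be the main obstacle, since a priori the construction could yield an even lattice. It suffices to show $\Lambda_{\delta\text{-even}}$ itself contains an odd-norm vector, as it lies inside both orbifold lattices. Passing to $\Lambda/2\Lambda\cong\BF_2^n$ with its induced nondegenerate form, the odd-norm vectors are those $\lambda$ with $\chi\cdot\lambda$ odd, so $\Lambda_{\delta\text{-even}}$ contains none exactly when $\ker(\bar\delta)\subseteq\ker(\bar\chi)$; since $\bar\delta\neq0$, this forces $\bar\chi\in\{0,\bar\delta\}$. But $\bar\chi=0$ gives $\tfrac{\chi}{2}\in\Lambda$, making $\Lambda$ even, and $\bar\chi=\bar\delta$ gives $\tfrac{\chi-\delta}{2}\in\Lambda$, i.e. $\tfrac{\delta}{2}\in S(\Lambda)$; both are excluded, the first because $\Lambda$ is odd and the second by the remaining half $\tfrac{\delta}{2}\notin S(\Lambda)$ of the gauging condition. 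This is exactly where the full gauging condition is used, and it cleanly separates self-duality (needing $\tfrac{\delta}{2}\notin\Lambda$) from oddness (needing $\tfrac{\delta}{2}\notin S(\Lambda)$).
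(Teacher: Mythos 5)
Your proof is correct and follows essentially the same route as the paper's: integrality by the same direct inner-product computation, self-duality from unit covolume (you spell out the index-$2$ chain $\Lambda_{\delta\text{-even}}\subset\Lambda_\delta^{\mathrm{orb}\pm}$ and $\Lambda_{\delta\text{-even}}\subset\Lambda$ that the paper asserts ``from the construction''), and oddness by showing that failure would force $\delta$ to be characteristic, contradicting $\tfrac{\delta}{2}\notin S(\Lambda)$. Your $\Lambda/2\Lambda$ phrasing of the last step is just a linear-algebra repackaging of the paper's contradiction argument, so the two proofs are equivalent in substance.
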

\begin{proof}
    The self-orthogonality is obvious from
    \begin{align}
    \begin{aligned}
        \lambda_e \cdot (\lambda+\tfrac{1}{2}\delta) &\equiv \tfrac{1}{2}\lambda_e\cdot\delta \equiv 0 \mod 1 \\
        (\lambda+\tfrac{1}{2}\delta)\cdot(\lambda'+\tfrac{1}{2}\delta) &\equiv \tfrac{1}{2}(\lambda+\lambda')\cdot\delta + \tfrac{1}{4}\delta^2 \equiv 0 \mod 1
    \end{aligned}
    \end{align}
    where $\lambda_e\in\Lambda_{\delta\text{-even}}$ and $\lambda,\lambda'$ are in $\Lambda_{\delta\text{-even}}$ for $\Lambda_{\delta}^{\text{orb}+}$ and in $\Lambda_{\delta\text{-odd}}$ for $\Lambda_{\delta}^{\text{orb}-}$. By combining with the fact that the volume of the fundamental region of $\Lambda_{\delta}^{\text{orb}\pm}$ is equal to that of $\Lambda$ from the construction, i.e., $d(\Lambda_{\delta}^{\text{orb}\pm})=d(\Lambda)=1$, the self-orthogonality leads to the self-duality.
    
    Next, we prove that there exists $\lambda \in \Lambda_{\delta\text{-even}}$ s.t. $\lambda^2 \equiv 1 \mod 2$. If any $\lambda_e \in \Lambda_{\delta\text{-even}}$ satisfies $\lambda_e^2 \equiv 0$, then there exists $\lambda' \in \Lambda_{\delta\text{-odd}}$ s.t. ${\lambda'}^2 \equiv 1$ since the lattice $\Lambda$ is odd. By using this $\lambda'$, any $\lambda_o \in \Lambda_{\delta\text{-odd}}$ can be written as $\lambda_o=\lambda'+\lambda_e,\, \lambda_e \in \Lambda_{\delta\text{-even}}$, thus its norm is odd from $\lambda_o^2 \equiv {\lambda'}^2 + \lambda_e^2 \equiv 1$. However, this means that $\delta$ is a characteristic vector, which contradicts our assumption $\frac{\delta}{2}\notin S(\Lambda)$.
    
    From $\Lambda_{\delta\text{-even}} \subset \Lambda_{\delta}^{\text{orb}\pm}$, we can conclude that $\Lambda_{\delta}^{\text{orb}\pm}$ is odd self-dual.
\end{proof}

It is clear from the definition that $\Lambda_{\delta}^{\text{orb}\pm} = \Lambda_{\delta+2\lambda_e}^{\text{orb}\pm}$ (double sign in same order) for any $\lambda_e\in\Lambda_{\delta\text{-even}}$ and thus the orbifold theories by the shift $\delta$ and $\delta+2\lambda_e$ are equivalent. It can be easily checked that when $\delta$ satisfies the gauging condition, so does $\delta+2\lambda_e$.
Similarly, $\Lambda_{\delta+2\lambda_o}^{\text{orb}\pm} = \Lambda_{\delta}^{\text{orb}\mp}$ for any $\lambda_o \in \Lambda_{\delta\text{-odd}}$ since $\Lambda_{\delta\text{-even}} + \lambda_o = \Lambda_{\delta\text{-odd}}$.

To define the fermion parity as the original theory, the characteristic vector must be fixed. We take $\chi^{\text{orb}+}=\chi$ for $\Lambda_{\delta}^{\text{orb}+}$ and $\chi^{\text{orb}-}=\chi+\delta$ for $\Lambda_{\delta}^{\text{orb}-}$, which is justified by the following proposition.

\begin{proposition}
    We adopt the same conventions as in Proposition \ref{prop:odd_self-dual}. Given a characteristic vector $\chi$ of $\Lambda$ that satisfies \eqref{eq:chi_condition}, then the vectors
    \begin{align} \label{eq:chi_orb}
        \chi^{\text{orb}+} = \chi\,,\qquad \chi^{\text{orb}-} = \chi+\delta
    \end{align}
    are characteristic vectors of $\Lambda_{\delta}^{\text{orb}+}$ and $\Lambda_{\delta}^{\text{orb}-}$, respectively.
\end{proposition}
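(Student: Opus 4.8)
The plan is to verify the defining condition of a characteristic vector directly on each orbifold lattice. Recall that $\chi^{\text{orb}\pm}$ is characteristic for $\Lambda_\delta^{\text{orb}\pm}$ precisely when $v^2 \equiv \chi^{\text{orb}\pm}\cdot v \pmod 2$ holds for every $v$ in that lattice. Since each orbifold lattice in \eqref{eq:orbifold_lattice} is a disjoint union of two cosets, I would check the condition coset by coset, reducing everything to three inputs: the fact that $\chi$ is already characteristic for $\Lambda$, the parity of $\delta\cdot\lambda$ dictated by membership in $\Lambda_{\delta\text{-even}}$ or $\Lambda_{\delta\text{-odd}}$, and the gauging data $\delta^2\in 4\BZ$ together with the normalization \eqref{eq:chi_condition}.

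For the $(+)$ case with $\chi^{\text{orb}+}=\chi$, vectors of the form $\lambda_e\in\Lambda_{\delta\text{-even}}\subset\Lambda$ trivially inherit $\lambda_e^2\equiv\chi\cdot\lambda_e$ from $\Lambda$. For a shifted vector $v=\lambda_e+\tfrac12\delta$ I would expand $v^2=\lambda_e^2+\lambda_e\cdot\delta+\tfrac14\delta^2$ and $\chi\cdot v=\chi\cdot\lambda_e+\tfrac12\chi\cdot\delta$. Using $\lambda_e^2\equiv\chi\cdot\lambda_e$ and $\lambda_e\cdot\delta\equiv 0\pmod 2$, the required identity collapses to $\tfrac14\delta^2\equiv\tfrac12\chi\cdot\delta\pmod 2$, which is exactly \eqref{eq:chi_condition}.

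The $(-)$ case with $\chi^{\text{orb}-}=\chi+\delta$ runs along the same lines, but first I would confirm membership: since \eqref{eq:chi_condition} forces $\chi\cdot\delta$ to be even, one has $\chi\in\Lambda_{\delta\text{-even}}$ and hence $\chi+\delta\in\Lambda_{\delta\text{-even}}\subset\Lambda_\delta^{\text{orb}-}$, so $\chi^{\text{orb}-}$ is genuinely a lattice vector and not merely an element of $\Lambda$. On the coset $\Lambda_{\delta\text{-even}}$ the extra $\delta$ contributes $\delta\cdot\lambda_e\equiv0$ and the check reduces to the one for $\Lambda$. On the shifted coset $v=\lambda_o+\tfrac12\delta$ with $\lambda_o\in\Lambda_{\delta\text{-odd}}$, the odd inner product $\lambda_o\cdot\delta\equiv1$ produces a $+1$ on both sides of the congruence that cancels, and after using $\lambda_o^2\equiv\chi\cdot\lambda_o$ and $\tfrac12\delta^2\equiv0\pmod 2$ the statement again reduces to \eqref{eq:chi_condition}.

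I expect no deep obstacle here; the content is modular bookkeeping of the half-shift cross terms $\lambda\cdot\delta$ and the quadratic term $\tfrac14\delta^2$. The only point that is not purely automatic, and the one I would state carefully, is that both cases collapse to the single congruence \eqref{eq:chi_condition} — which is precisely why that normalization of $\chi$ was imposed, and which is also what guarantees that $\chi+\delta$ lands inside $\Lambda_\delta^{\text{orb}-}$.
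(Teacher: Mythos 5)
Your proof is correct and follows essentially the same route as the paper's: expand the cross terms $\lambda\cdot\delta$ and $\tfrac14\delta^2$ coset by coset and observe that everything reduces to the normalization \eqref{eq:chi_condition}. The one small thing you add that the paper leaves implicit is the check that $\chi+\delta$ actually lies in $\Lambda_{\delta}^{\text{orb}-}$ (via $\chi,\delta\in\Lambda_{\delta\text{-even}}$), which is a worthwhile remark but not a different argument.
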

\begin{proof}
    For $\Lambda_{\delta}^{\text{orb}+}$, any $\lambda_e\in\Lambda_{\delta\text{-even}}$ satisfies
    \begin{align}
    \begin{aligned}
        \lambda_e^2 &\equiv \chi\cdot\lambda_e \mod 2 \,, \\
        (\lambda_e + \tfrac{1}{2}\delta)^2
        &\equiv \lambda_e^2 + \tfrac{1}{4}\delta^2
        \equiv \chi\cdot\lambda_e + \tfrac{1}{2}\chi\cdot\delta
        \equiv \chi \cdot (\lambda_e + \tfrac{1}{2}\delta) \mod 2 \,.
    \end{aligned}
    \end{align}
    For $\Lambda_{\delta}^{\text{orb}-}$, any $\lambda_e\in\Lambda_{\delta\text{-even}},\, \lambda_o\in\Lambda_{\delta\text{-odd}}$ satisfies
    \begin{align}
    \begin{aligned}
        \lambda_e^2 &\equiv \chi\cdot\lambda_e \equiv (\chi+\delta)\cdot\lambda_e \mod 2 \,, \\
        (\lambda_o + \tfrac{1}{2}\delta)^2
        &\equiv \lambda_o^2 + \tfrac{1}{4}\delta^2 + 1
        \equiv \chi\cdot\lambda_o + \tfrac{1}{2}\chi\cdot\delta + 1
        \equiv (\chi+\delta) \cdot (\lambda_o + \tfrac{1}{2}\delta) \mod 2 \,.
    \end{aligned}
    \end{align}
\end{proof}

In table~\ref{tab:sec_swap}, the fermion parity is flipped for states in $h$-odd sectors of $(\CT/H)_-$. In terms of lattices, this comes from $\delta$ in $\chi^{\text{orb}-}$.

It can be easily shown that $(\Lambda_{\delta}^{\text{orb}+})_{2\lambda_o}^{\text{orb}+}=\Lambda$ for any $\lambda_o\in\Lambda_{\delta\text{-odd}}$ and $(\Lambda_{\delta}^{\text{orb}-})_{\delta}^{\text{orb}-}=\Lambda$, thus the shifts by $2\lambda_o$ for $(\CT/H)_+$ and $\delta$ for $(\CT/H)_-$ can be regarded as the dual operations $\check{h}$.
In the orbifold theories, these vectors and the characteristic vectors also satisfy the condition \eqref{eq:chi_condition} as
\begin{align}
    \begin{aligned}
        \frac{\chi^{\text{orb}+} \cdot2\lambda_o}{2} \equiv \frac{(2\lambda_o)^2}{4}\quad \mathrm{mod}\;\; 2\,,\qquad
        \frac{\chi^{\text{orb}-} \cdot \delta}{2} \equiv \frac{\delta^2}{4} \quad \mathrm{mod} \;\;2\,.
    \end{aligned}
\end{align}
Therefore, the orbifolded theories have the same spin selection rule as in the original theory.

From \eqref{eq:orbifold_lattice}, \eqref{eq:chi_orb} and $\delta\in\Lambda_{\delta\text{-even}}$, the shadows can be expressed as
\begin{align}
    S(\Lambda_{\delta}^{\text{orb}+})
    &= \Lambda_{\delta}^{\text{orb}+} + \tfrac{1}{2} \chi^{\text{orb}+}
    = \left( \Lambda_{\delta\text{-even}}+\tfrac{1}{2}\chi \right) \sqcup \left( \Lambda_{\delta\text{-even}} + \tfrac{1}{2}(\chi+\delta) \right) \,, \\
    S(\Lambda_{\delta}^{\text{orb}-})
    &= \Lambda_{\delta}^{\text{orb}-} + \tfrac{1}{2} \chi^{\text{orb}-}
    = \left( \Lambda_{\delta\text{-even}}+\tfrac{1}{2}(\chi+\delta) \right) \sqcup \left( \Lambda_{\delta\text{-odd}} + \tfrac{1}{2}\chi \right) \,.
\end{align}

We have seen that the original and orbifold theories are lattice CFTs and consist of sets of vertex operators. The corresponding momentum lattices are summarized in table \ref{tab:lattice_sectors}.

For later convenience, we write general expressions of the shadow including the case where the fixed characteristic vector $\chi$ of $\Lambda$ does not satisfy the condition \eqref{eq:chi_condition}: 
\begin{align}
    S(\Lambda_{\delta}^{\text{orb}+})
    &=
    \begin{cases}
        \left( \Lambda_{\delta\text{-even}}+\frac{1}{2}\chi \right) \sqcup \left( \Lambda_{\delta\text{-even}} + \tfrac{1}{2}(\chi+\delta) \right) & \left( \frac{1}{2}\chi\cdot\delta \equiv \frac{1}{4}\delta^2 \mod 2 \right) \\
        \left( \Lambda_{\delta\text{-odd}}+\frac{1}{2}\chi \right) \sqcup \left( \Lambda_{\delta\text{-odd}} + \tfrac{1}{2}(\chi+\delta) \right) & (\text{otherwise})
    \end{cases} \label{eq:S+} \\
    &=
    \begin{cases}
        S(\Lambda)_{\delta\text{-even}} \sqcup \left( S(\Lambda)_{\delta\text{-even}} + \frac{1}{2}\delta \right) & \left( \frac{1}{4}\delta^2 \equiv 0 \mod 2 \right) \\
        S(\Lambda)_{\delta\text{-odd}} \sqcup \left( S(\Lambda)_{\delta\text{-odd}} + \frac{1}{2}\delta \right) & (\text{otherwise})
    \end{cases} \label{eq:S+_2} \\
    S(\Lambda_{\delta}^{\text{orb}-})
    &= 
    \begin{cases}
        \left( \Lambda_{\delta\text{-even}}+\frac{1}{2}(\chi+\delta) \right) \sqcup \left( \Lambda_{\delta\text{-odd}} + \tfrac{1}{2}\chi \right) & \left( \frac{1}{2}\chi\cdot\delta \equiv \frac{1}{4}\delta^2 \mod 2 \right) \\
        \left( \Lambda_{\delta\text{-even}}+\frac{1}{2}\chi \right) \sqcup \left( \Lambda_{\delta\text{-odd}} + \tfrac{1}{2}(\chi+\delta) \right) & (\text{otherwise})
    \end{cases} \label{eq:S-} \\
    &= 
    \begin{cases}
        \left( S(\Lambda)_{\delta\text{-even}}+\frac{1}{2}\delta \right) \sqcup S(\Lambda)_{\delta\text{-odd}} & \left( \frac{1}{4}\delta^2 \equiv 0 \mod 2 \right) \\
        S(\Lambda)_{\delta\text{-even}} \sqcup \left( S(\Lambda)_{\delta\text{-odd}} + \tfrac{1}{2}\delta \right) & (\text{otherwise})
    \end{cases} \label{eq:S-_2}
\end{align}
where $S(\Lambda)_{\delta\text{-even(odd)}}=\{\xi\in S(\Lambda) \mid \delta\cdot \xi \text{ is even(odd)} \}$.

\subsection{Reflection orbifold} \label{ss:reflection}

This section is devoted to the orbifold of lattice CFT by the reflection symmetry $g:X\to -X$. 
In the bosonic case, the reflection orbifold in lattice CFTs was carefully analyzed in~\cite{Dolan:1989vr}. 
Our interest is chiral fermionic CFTs constructed from odd self-dual lattices.
The reflection symmetry $G = \{1, g\}$ acts as
\begin{align}
    g \,\alpha_n^j\, g^{-1} = - \alpha_n^j\,,\qquad g\ket{\lambda} = \ket{-\lambda}\,,
\end{align}
where $\alpha_n^j$ is a bosonic oscillator and $\ket{\lambda}$ denotes a momentum eigenstate $(\lambda\in\Lambda)$.
Under this $\BZ_2$ symmetry, we can decompose the NS sector as $\CH_\mathrm{NS}(\Lambda) = \CH^{+_g}_\mathrm{NS}\oplus \CH^{-_g}_\mathrm{NS}$ where the $\BZ_2$ even and odd sectors are
\begin{align}
\begin{aligned}
    \CH^{+_g}_\mathrm{NS} &= \left\{\alpha^{j_1}_{-n_1}\cdots\alpha^{j_{2k}}_{-n_{2k}}(\ket{\lambda}+\ket{-\lambda})\right\}\cup\left\{\alpha^{j_1}_{-n_1}\cdots\alpha^{j_{2k+1}}_{-n_{2k+1}}(\ket{\lambda}-\ket{-\lambda})\right\}\,,\\
    \CH^{-_g}_\mathrm{NS} &= \left\{\alpha^{j_1}_{-n_1}\cdots\alpha^{j_{2k+1}}_{-n_{2k+1}}(\ket{\lambda}+\ket{-\lambda})\right\}\cup\left\{\alpha^{j_1}_{-n_1}\cdots\alpha^{j_{2k}}_{-n_{2k}}(\ket{\lambda}-\ket{-\lambda})\right\}\,.
\end{aligned}
\end{align}

Similarly, the R sector is decomposed into $\CH_\mathrm{R}(\Lambda) = \CH^{+_g}_\mathrm{R}\oplus \CH^{-_g}_\mathrm{R}$ where
\begin{align}
\begin{aligned}
    \CH^{+_g}_\mathrm{R} &= \left\{\alpha^{j_1}_{-n_1}\cdots\alpha^{j_{2k}}_{-n_{2k}}(\ket{\xi}+\ket{-\xi})\right\}\cup\left\{\alpha^{j_1}_{-n_1}\cdots\alpha^{j_{2k+1}}_{-n_{2k+1}}(\ket{\xi}-\ket{-\xi})\right\}\,,\\
    \CH^{-_g}_\mathrm{R} &= \left\{\alpha^{j_1}_{-n_1}\cdots\alpha^{j_{2k+1}}_{-n_{2k+1}}(\ket{\xi}+\ket{-\xi})\right\}\cup\left\{\alpha^{j_1}_{-n_1}\cdots\alpha^{j_{2k}}_{-n_{2k}}(\ket{\xi}-\ket{-\xi})\right\}\,,
\end{aligned}
\end{align}
for $\xi\in S(\Lambda)$.

Before proceeding to the gauging of this $\BZ_2$ symmetry, we need to diagnose whether this symmetry is anomalous or not.
We consider the $g$-graded partition function
\begin{align}
\label{eq:refl_nsgra}
    Z_{\CT}[0,0;1,0] = \Tr_{\CH_{\mathrm{NS}}(\Lambda)} \left[g\, q^{L_0-\frac{n}{24}}\right] = \frac{q^{-\tfrac{n}{24}}}{\prod_{m=1}^\infty (1+q^m)^n}\,.
\end{align}
To compute its modular transformation, it is useful to rewrite it as
\begin{align}
    Z_{\CT}[0,0;1,0] = \frac{\eta(\tau)^n}{\eta(2\tau)^{n}} 
    = \left(\frac{2\eta(\tau)}{\theta_2(\tau)}\right)^{\frac{n}{2}} \,,
\end{align}
where $\theta_i(\tau)$ $(i=2,3,4)$ are the Jacobi theta functions.
The modular transformation $ST^2 S^{-1}$ acts on $Z_{\CT}[0,0;1,0]/Z_{\CT}[0,0;0,0]$ with the eigenvalue $e^{2\pi\i n/8}$.
Thus, the non-anomalous condition is $n\in 8\BZ$.
This condition is required to be satisfied for a consistent definition of the orbifold theory.

The theory can be quantized under the periodicity twisted by the $\BZ_2$ symmetry $g$.
Then, a chiral bosonic field $R^j(z)$ under the twisted periodicity admits the mode expansion
\begin{align}
    R^j(z) = \i \sum_{r\,\in\,\BZ+1/2} \frac{c^j_r}{r} z^{-r}\,.
\end{align}
The oscillators with half-integer modes satisfy the commutation relation
\begin{align}
    [c_r^i,c_s^j] = r\delta^{i,j}\delta_{r+s,0}\,.
\end{align}
As in the straight lattice construction, we have vertex operators defined by
\begin{align}
    V_\lambda^T(z) = :e^{\i\lambda\cdot R(z)}:\,\gamma_\lambda\,,
\end{align}
where $\gamma_\lambda$ is a cocycle factor to respect the mutual locality.

Since a vertex operator $V_\lambda^T(z)$ acts on a twisted ground state $\ket{a}$ as $\gamma_\lambda\ket{a}$, the twisted ground states form a representation $\Upsilon(\Lambda)$ of gamma matrices $\gamma_\lambda$ $(\lambda\in\Lambda)$ satisfying the algebra
\begin{align}
    \gamma_\lambda\, \gamma_\mu = \varepsilon\, (-1)^{\lambda\cdot\mu}\,\gamma_\mu\gamma_\lambda \,,
\end{align}
where $\varepsilon=-1$ if both $\lambda,\mu$ have an odd norm, and $\varepsilon=+1$ otherwise, so it can be written as $\varepsilon=(-1)^{\lambda^2\mu^2}$.
While this algebra is infinite-dimensional, the non-trivial part is given by $\Lambda/2\Lambda$ since $\gamma_\lambda$ for $\lambda\in 2\Lambda$ commutes with any element.
Since the central elements of the algebra are only $\gamma_\lambda$ ($\lambda\in2\Lambda$), following~\cite[Appendix C]{Dolan:1989vr}, we can show that the gamma matrices can be represented from the Dirac gamma matrices and thus it has a unique irreducible representation of dimension $2^{\frac{n}{2}}$.
We denote its basis as $\ket{a} = \ket{\pm\pm\cdots\pm}$.

By the oscillator excitation, the twisted NS Hilbert space $\CH_{\mathrm{NS},g}$ is spanned by
\begin{align}
\label{eq:gen_sta_re}
    \prod_{i=1}^n \prod_{r=1/2}^\infty (c_{-r}^i)^{N_{ir}}\ket{a}\,,\quad \ket{a} \in\Upsilon(\Lambda)\,,
\end{align}
where $N_{ir}$ is occupation number for each mode.

Under the anti-periodic boundary condition, the ground states acquire the vacuum energy $E_0 = n/48$ on the cylindrical coordinate. Correspondingly, the Virasoro generators are
\begin{align} \label{eq:ref_twi_virasoro}
    L_m = \frac{1}{2}\sum_{r=1/2}^\infty c_{m-r}\cdot c_{r} + \frac{n}{16}\delta_{m,0}\,.
\end{align}
This implies that the twisted ground states have the conformal weight $h=n/16$. Therefore, a general state \eqref{eq:gen_sta_re} has 
\begin{align}
    h= \sum_{i=1}^n\sum_{r=1/2}^\infty r N_{ir} + \frac{n}{16}\,.
\end{align}
The partition function for the twisted NS sector is given by
\begin{align}
    Z_{\CT}[0,0;0,1] = 
    \Tr_{\CH_{\mathrm{NS},g}}\left[\,q^{L_0-\frac{n}{24}}\,\right] =  \frac{2^{\frac{n}{2}} q^{\frac{n}{48}}}{\prod_{m=1}^\infty(1-q^{m-\frac{1}{2}})^n} = \left(\frac{2\eta}{\theta_4}\right)^{\frac{n}{2}}\,.
\end{align}
This partition function can be reproduced from~\eqref{eq:refl_nsgra} by applying the modular $S$ transformation.

We define the $\BZ_2$ grading and the fermion parity in the twisted NS sector under the non-anomalous condition $n\in8\BZ$ for the reflection symmetry.
Concretely, we propose to define the $\BZ_2$ action $g$ on the twisted NS sector by
\begin{align}
    g \, c_r^i\, g^{-1} = - c_r^i\,,\qquad 
    g\ket{a} = 
    \begin{dcases}
        +(-1)^{\frac{n}{8}}\ket{a} & (a:+\pm\cdots\pm)\\
        -(-1)^{\frac{n}{8}}\ket{a} & (a:-\pm\cdots\pm)\,.
    \end{dcases}
\end{align}
Half of the ground states are even and the others are odd.
Combining the oscillator excitation, we obtain the $\BZ_2$ even and odd sectors in the twisted NS sector.
For example, in the case of $n\in16\BZ$, each Hilbert space is given by
\begin{align} \label{eq:reflection_NS_twist}
\begin{aligned}
    \CH^{+_g}_{\mathrm{NS},g} &= \left\{c^{j_1}_{-r_1}\cdots c^{j_{2k}}_{-r_{2k}} \ket{+\pm\cdots\pm}\right\}\cup\left\{c^{j_1}_{-r_1}\cdots c^{j_{2k+1}}_{-r_{2k+1}}\ket{-\pm\cdots\pm}\right\},\\
    \CH^{-_g}_{\mathrm{NS},g} &= \left\{c^{j_1}_{-r_1}\cdots c^{j_{2k+1}}_{-r_{2k+1}}\ket{+\pm\cdots\pm}\right\}\cup\left\{c^{j_1}_{-r_1}\cdots c^{j_{2k}}_{-r_{2k}}\ket{-\pm\cdots\pm}\right\}.
\end{aligned}
\end{align}
Since the even and odd sectors have the same energy spectrum, the partition functions for the $g$-even and odd states are
\begin{align}
    \label{eq:refl_nsgraded_part}
    \Tr_{\CH_{\mathrm{NS},g}^{\pm_g}}\left[\,q^{L_0-\frac{n}{24}}\,\right] =\frac{1}{2}\left(\frac{2\eta}{\theta_4}\right)^{\frac{n}{2}}\,.
\end{align}
In other words, $Z_{\CT}[0,0;1,1]=0$.
Also, we need to introduce the fermion parity in the twisted NS sector.
By using the modular transformation depicted in Fig.~\ref{fig:modular}, we can see that $Z_{\CT}[1,0;0,1]$ is vanishing and $Z_{\CT}[1,0;1,1] \propto (2\eta/\theta_3)^{\frac{n}{2}}$ up to a phase factor.
The results suggest that, under the fermion parity, half of the twisted ground states are odd, and the others are even.
Furthermore, the diagonal action $(-1)^Fg$ acts as a constant on the twisted ground states.
Thus, the fermion parity on the twisted NS sector can be defined by
\begin{align}
    (-1)^F\ket{a} = 
    \begin{dcases}
        +\ket{a} & (a:+\pm\cdots\pm)\\
        -\ket{a} & (a:-\pm\cdots\pm)\,, 
    \end{dcases}
\end{align}
and the bosonic oscillators are invariant under the action of the fermion parity.

\begin{figure}
    \centering
    \begin{tikzpicture}

\begin{scope}[>=stealth]
\draw (0,0) rectangle (2,2);

\draw[blue] (2,0.9) arc[start angle=270, end angle=180, radius=1.1];
\draw[blue] (0.9,0) arc[start angle=0, end angle=90, radius=0.9];
\draw[dashed] (2,1.1) arc[start angle=270, end angle=180, radius=0.9];
\draw[dashed] (1.1,0) arc[start angle=0, end angle=90, radius=1.1];

\node at (-0.6,1) {$\mathrm{NS},g$};
\node at (1,-0.3) {$\mathrm{NS},g$};
\end{scope}

\draw[->,>=stealth] (2.5,1)--node[above]{$T$}(3.7,1);

\begin{scope}[>=stealth,xshift=5cm]

\draw (0,0) rectangle (2,2);

\draw[blue] (0.9,0)--(0.9,2);
\draw[dashed] (1.1,0)--(1.1,2);

\node at (-0.5,1) {$\mathrm{R},1$};
\node at (1,-0.3) {$\mathrm{NS},g$};
\end{scope}

\begin{scope}[yshift=-3cm]
    \begin{scope}[>=stealth,xshift=5cm]
\draw (0,0) rectangle (2,2);

\draw[blue] (2,0.9) arc[start angle=270, end angle=180, radius=1.1];
\draw[blue] (0.9,0) arc[start angle=0, end angle=90, radius=0.9];
\draw[dashed] (1.1,0)--(1.1,2);

\node at (-0.6,1) {$\mathrm{R},g$};
\node at (1,-0.3) {$\mathrm{NS},g$};
\end{scope}

\draw[->,>=stealth] (2.5,1)--node[above]{$T$}(3.7,1);

\begin{scope}[>=stealth]

\draw (0,0) rectangle (2,2);

\draw[blue] (0.9,0)--(0.9,2);
\draw[dashed] (2,1.1) arc[start angle=270, end angle=180, radius=0.9];
\draw[dashed] (1.1,0) arc[start angle=0, end angle=90, radius=1.1];

\node at (-0.5,1) {$\mathrm{NS},1$};
\node at (1,-0.3) {$\mathrm{NS},g$};
\end{scope}
\end{scope}

\end{tikzpicture}
    \caption{The schematic illustration for the modular transformation of the partition functions. The above and below show the modular $T$ transformation of $Z_{\CT}[0,0;1,1]$ and $Z_{\CT}[0,0;0,1]$, respectively. The blue (black dashed) line represents the insertion of the $\BZ_2$ action $g$ (the fermion parity) on the torus.}
    \label{fig:modular}
\end{figure}
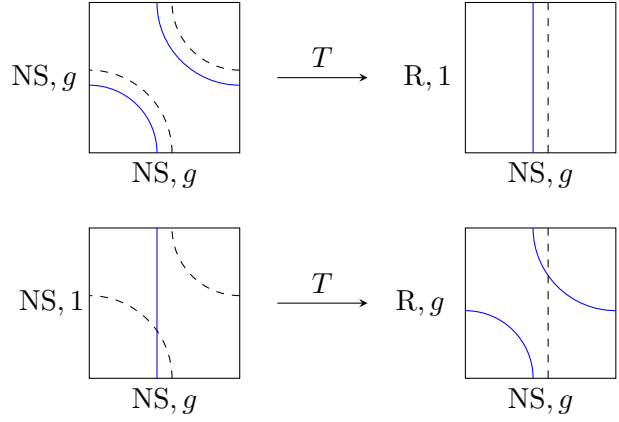

Now we move onto the $\BZ_2$ grading and the fermion parity on the twisted R sector.
Since we have the partition function $Z_{\CT}[0,1;0,1]$ through the modular transformation from \eqref{eq:refl_nsgra}
\begin{align}
    Z_{\CT}[0,1;0,1] = \left(\frac{2\eta(\tau)}{\theta_4(\tau)}\right)^{\frac{n}{2}} = q^{\frac{n}{48}} \left(2^{\frac{n}{2}}  +2^{\frac{n}{2}} n \sqrt{q} + \dots\right)\,.
\end{align}
We see that there are $2^{\frac{n}{2}}$ twisted R ground states $\ket{\Omega_j}$ $(j=1,2,\cdots,2^{\frac{n}{2}})$ with conformal weight $h=\frac{n}{16}$.
The twisted R sector consists of the excited states by the bosonic oscillators
\begin{align}
    \prod_{i=1}^n \prod_{r=1/2}^\infty (c_{-r}^i)^{N_{ir}}\ket{\Omega_j}\,,
\end{align}
where $j=1,2,\cdots,2^{\frac{n}{2}}$.
From $Z_{\CT}[0,1;1,1] \propto (2\eta/\theta_3)^{\frac{n}{2}}$ up to a phase factor, we propose to define the $\BZ_2$ symmetry $g$ acts as 
\begin{align}
    g\ket{\Omega_j} = (-1)^{\frac{n}{8}}\ket{\Omega_j}\,,\quad (j=1,2,\cdots,2^{\frac{n}{2}})\,.
\end{align}
On the other hand, the fermion parity acts as $+1$ on half of the twisted R ground states and as $-1$ on the other half.
Otherwise, the partition function $Z_{\CT}[1,1;1,1]$ is no longer invariant under the modular transformation $T$ and does not have a proper modular transformation rule.
Thus, we can always set
\begin{align}
    (-1)^F\ket{\Omega_j} =
    \begin{dcases}
        +  \ket{\Omega_j} & (j\leq2^{\frac{n}{2}-1})\,,\\
        - \ket{\Omega_j} & (j>2^{\frac{n}{2}-1})\,.
    \end{dcases}
\end{align}

Up to this point, we have defined the states in each sector and the action of the reflection $g$ and the fermion parity $(-1)^F$ on them.
The partition functions can be summarized as follows:
\begin{align} \label{eq:refl_part_funcs}
\begin{aligned}
    Z_{\CT}[s_0,s_1;0,0] &= Z_{\CT}[s_0,s_1] \\
    Z_{\CT}[0,0;1,0] &= Z_{\CT}[1,0,1,0] = (2\eta/\theta_2)^{\frac{n}{2}} \\
    Z_{\CT}[0,0;0,1] &= Z_{\CT}[0,1;0,1] = (2\eta/\theta_4)^{\frac{n}{2}} \\
    Z_{\CT}[1,0;1,1] &= Z_{\CT}[0,1;1,1] = (-1)^{\frac{n}{8}} (2\eta/\theta_3)^{\frac{n}{2}} \\
    Z_{\CT}[0,1;1,0] &= Z_{\CT}[1,1;1,0] = Z_{\CT}[1,0;0,1] \\
    &= Z_{\CT}[0,0;1,1] = Z_{\CT}[1,1;0,1] = Z_{\CT}[1,1;1,1] = 0
\end{aligned}
\end{align}
where $Z_{\CT}[s_0,s_1]$ is given by \eqref{eq:lattice_partition_functions}.

We can obtain the spin selection rule as summarized in table~\ref{tab:selction_refl}. 
The spin selection rule in the untwisted sector is ordinary: The NS sector consists of local operators satisfying the spin-statistics theorem. The spin of the R sector is at least a multiple of $1/16$.
For a generic central charge $n$, the twisted sector contains operators with a fractional spin $s\notin \BZ/2$.
The non-anomalous condition $n\in8\BZ$ ensures that the twisted NS sector consists of $s\in\BZ$ (boson) and $s\in\BZ+\frac{1}{2}$ (fermion). After orbifolding, those operators join in the spectrum of the theory.
The selection rule can be obtained through the modular $T$ transformation.
For example, to give the spin selection rule for $g$-even and bosonic states in the twisted NS sector, we consider
\begin{align}
\begin{aligned}
    &\Tr_{\CH_{\mathrm{NS},g}}\left[\frac{1+(-1)^F}{2}\,\frac{1+g}{2} \,q^{L_0-\frac{n}{24}}\right] \\
    &= \frac{1}{4} ( Z_{\CT}[0,0;0,1] + Z_{\CT}[0,0;1,1] + Z_{\CT}[1,0;0,1] + Z_{\CT}[1,0;1,1] ) \\
    &= \frac{1}{4}\left(\frac{2\eta}{\theta_4}\right)^{\frac{n}{2}}+\frac{1}{4}(-1)^{\frac{n}{8}}\left(\frac{2\eta}{\theta_3}\right)^{\frac{n}{2}} \,.
\end{aligned}
\end{align}
The modular $T$ transformation gives an eigenvalue $e^{\pi\i n/24}$ when $n\in16\BZ$ and $-e^{\pi\i n/24}$ when $n\in8+16\BZ$.
Since the $T$ transformation acts on a state with spin $s$ by an eigenvalue $e^{2\pi\i\, (s-n/24)}$, we obtain the spin selection rule $s\in \BZ$.

As another example, we can compute the spin selection rule for the $g$-even/odd and bosonic states in the twisted R sector from
\begin{align}
    \Tr_{\CH_{\mathrm{R,g}}}\left[\frac{1+(-1)^F}{2}\,\frac{1\pm g}{2} \,q^{L_0-\frac{n}{24}}\right] &= \frac{1}{4} \left(\frac{2\eta}{\theta_4}\right)^{\frac{n}{2}} \pm \frac{1}{4} (-1)^{\frac{n}{8}}\left(\frac{2\eta}{\theta_3}\right)^{\frac{n}{2}} \,,
\end{align}
which gives the phase $\pm (-1)^{\frac{n}{8}}\,e^{\frac{\pi \i n}{24}}$ by the $T$ transformation. Therefore, the $g$-even sector consists of operators with $s\in\BZ$, while the odd sector contains only operators with $s\in 1/2+\BZ$. This means that a consistent orbifold theory cannot include the $g$-odd states in the twisted R sector of the original theory.

\begin{table}[t]
\centering
\begin{tabular}{ll|wc{15mm}wc{11mm}wc{15mm}wc{15mm}}
        & & \multicolumn{2}{c}{untwisted} & \multicolumn{2}{c}{twisted} \\
        & & NS & R & NS & R \\ \midrule
        \multirow{2}{*}{$g$-even} & boson & $\BZ$ & $\BZ$ & $\BZ$ & $\BZ$ \\
        & fermion & $\tfrac{1}{2}+\BZ$ & $\BZ$ & $\tfrac{1}{2}+\BZ$ & $\BZ$ \\
        \multirow{2}{*}{$g$-odd} & boson  & $\BZ$ & $\BZ$ & $\tfrac{1}{2}+\BZ$ & $\tfrac{1}{2}+\BZ$ \\
        & fermion & $\tfrac{1}{2}+\BZ$ & $\BZ$ & $\BZ$ & $\tfrac{1}{2}+\BZ$ \\
    \end{tabular}
    \caption{Spin selection rule for the reflection symmetry $g$ in lattice CFT. We assume the non-anomalous condition ($n\in8\BZ)$. Including the twisted sector, the theory only contains operators with an integral and half-integral spin.}
    \label{tab:selction_refl}
\end{table}

As in the case of the shift symmetry, two orbifold theories can be constructed for the reflection symmetry.
In this case, it is clear from the construction that their spectra are identical.
For example, the NS sectors of the orbifold theories are
\begin{align}
    \CH_{\mathrm{NS}}^{\text{orb}+}(\Lambda) = \CH_{\mathrm{NS}}^{+_g} \oplus \CH_{\mathrm{NS},g}^{+_g} \,,\quad
    \CH_{\mathrm{NS}}^{\text{orb}-}(\Lambda) = \CH_{\mathrm{NS}}^{+_g} \oplus \CH_{\mathrm{NS},g}^{-_g}
\end{align}
and the difference lies only in the choice of the ground states $\ket{a}$ in $\CH_{\mathrm{NS},g}$ as shown in \eqref{eq:reflection_NS_twist}.
Indeed, from table~\ref{tab:sec_swap} and \eqref{eq:refl_part_funcs}, the NS-NS and R-R partition functions of $(\CT/G)_+$ are
\begin{align} 
    Z_{(\CT/G)_+}[0,0] &= \frac{1}{2} \left( Z_{\CT}[0,0;0,0] + Z_{\CT}[0,0;1,0] \,\right) + \frac{1}{2} \left( Z_{\CT}[0,0;0,1] + Z_{\CT}[0,0;1,1] \,\right) \nonumber \\
    &= \frac{\Theta_\Lambda(\tau) + (\theta_3\theta_4)^{\frac{n}{2}}+ (\theta_2\theta_3)^{\frac{n}{2}}}{2\eta(\tau)^n} \label{eq:nsns-refl} \,, \\
    Z_{(\CT/G)_+}[1,1] &= \frac{1}{2} \left( Z_{\CT}[1,1;0,0] + Z_{\CT}[1,1;1,0]\, \right) + \frac{1}{2} \left( Z_{\CT}[1,1;0,1] + Z_{\CT}[1,1;1,1]\, \right) \nonumber \\
    &= \frac{1}{2} Z_{\CT}[1,1] \label{eq:rr-refl} \,,
\end{align}
and the same functions are obtained for $(\CT/G)_-$.

Let $V_{1/2}(\CT)$ be the space of the operators with the weight $h=1/2$ in the NS sector of $\CT$. From \eqref{eq:nsns-refl}, the dimension of the space in the reflection orbifold theory becomes
\begin{equation} \label{eq:num_1/2_refl}
    |V_{1/2}((\CT/G)_\pm)| = \frac{1}{2} |V_{1/2}(\CT)| + 8 \, \delta_{n,8} \,.
\end{equation}
It is known that a chiral CFT can be decomposed into Majorana-Weyl fermions $\psi$ and a sector without $h=1/2$ operators \cite{Goddard:1988wv}, thus this means that the number of Majorana-Weyl fermions is halved except for $n=8$. In the case of $n=8$, the only chiral fermionic theory is $16\psi$ with $|V_{1/2}(16\psi)|=16$ and then $\CT \cong (\CT/G)_\pm \cong 16\psi$, which is consistent with \eqref{eq:num_1/2_refl}.

\section{Chiral fermionic CFTs from $\BZ_k$ codes}
\label{sec:code_CFT}

In this section, we introduce the construction of chiral fermionic CFTs from classical $\BZ_k$ codes where $\BZ_k$ is the ring of integers modulo $k$ with $k\geq2$.
Starting with $\BZ_k$ codes, we give lattice CFTs based on odd self-dual lattices.
This is a generalization of the construction from ternary codes $(k=3)$~\cite{Gaiotto:2018ypj} and $p$-ary codes ($k=p:$ a prime number)~\cite{Kawabata:2023nlt}.\footnote{Recently, the construction of CFTs from classical and quantum codes through lattices has been developed (see~\cite{Dymarsky:2020bps,Dymarsky:2020qom,Dymarsky:2021xfc,Buican:2021uyp,Henriksson:2021qkt,Yahagi:2022idq,Furuta:2022ykh,Henriksson:2022dnu,Angelinos:2022umf,Kawabata:2022jxt,Dymarsky:2022kwb,Kawabata:2023usr,Alam:2023qac,Furuta:2023xwl,Buican:2023bzl,Aharony:2023zit,Barbar:2023ncl,Buican:2023ehi,Ando:2024gcf}
for recent progress in this direction).}

A $\BZ_k$ code $C$ of length $n$ is an additive abelian subgroup of $\BZ_k^n$.
For a code $C\subset\BZ_k^n$ with the standard inner product $c\cdot c'=\sum_{i=1}^n c_ic'_i \in\BZ_k$ for $c,c'\in\BZ_k^n$, the dual code is defined by
\begin{equation}
    C^\perp = \left\{ c'\in\BZ_k^n \mid c\cdot c'=0 \text{ for all } c\in C \right\} \,.
\end{equation}
A code $C$ is called self-dual when $C=C^\perp$. In particular, when $k$ is even, a self-dual code is called Type II when $c^2\in 2k\BZ$ for all $c\in C$ and Type I when it is not Type II.
In the case of binary codes, Type I codes are called singly-even.

For $c\in\BZ_k^n$ and $a\in\BZ_k$, $\text{wt}_a(c)=|\left\{ i\in\{1,\dots,n\} \mid c_i=a \right\}|$ is called the weight and satisfies $\sum_{a\in\BZ_k} \text{wt}_a(c) = n$. In the binary case $k=2$, we simply denote $\mathrm{wt}_1(c)$ by $\mathrm{wt}(c)$ and the error correction ability of a code $C\subset\BZ_k^n$ is represented by the minimum weight $d(C)=\min\left\{\text{wt}(c) \mid c\in C,\, c\neq 0 \right\}$.

The complete weight enumerator of a subset $D\subset\BZ_k^n$ is
\begin{equation}
    W_D(\{x_a\}) = \sum_{c\,\in\, D} \prod_{a\,\in\,\BZ_k} x_a^{\text{wt}_a(c)} \,.
\end{equation}

The simplest way to construct a lattice from a code $C\subset\BZ_k^n$ is the method called Construction A:
\begin{equation}
    \label{eq:constructionA}
    \Lambda(C) = \left\{\,\frac{c+k\,m}{\sqrt{k}}\in\BR^n \;\middle|\; c\in C,\, m\in\BZ^n \right\} \,.
\end{equation}
It is known that $\Lambda(C)$ is odd self-dual if and only if $C$ is Type I when $k\in2\BZ$ and self-dual when $k\in2\BZ+1$~\cite{conway2013sphere}.
Once an odd self-dual lattice is obtained, we can construct a fermionic CFT according to the method in section \ref{sec:lattice_CFT}. In the following, we show the expression of the partition functions using the weight enumerator for even/odd $k$.

\paragraph{$k$ : even}
A Type I code $C\subset\BZ_k^n$ for even $k$ can be divided into two disjoint subsets $C=C_0\sqcup C_2$ where
\begin{equation}
    C_0 = \{  c\in C \mid c^2\in 2k\BZ \} \,,\quad
    C_2 = \{  c\in C \mid c^2\in 2k\BZ+k \} \,.
\end{equation}
The shadow of $C$ is defined by
\begin{equation}
    S(C) = C_0^\perp \setminus C \,.
\end{equation}
We choose a specific element $s\in S(C)$ and define
\begin{equation}
    C_1 = C_0+s \,,\quad C_3 = C_2+s \,,
\end{equation}
which satisfy $S(C) = C_1 \sqcup C_3$. Note that $C_1$ and $C_3$ can be interchanged depending on the choice of $s\in S(C)$. For later convenience, we define
\begin{equation} \label{eq:def_wep_RR}
    \widetilde{W}_C(\{x_a\}) = W_{C_1}(\{x_a\}) - W_{C_3}(\{x_a\}) \,.
\end{equation}
A characteristic vector of $\Lambda(C)$ can be written as
\begin{equation}
    \chi=\frac{2}{\sqrt{k}}(s+km) \,,\quad s\in S(C) \,,\quad m\in\BZ^n \,.
\end{equation}
Using the theta function
\begin{equation}
    \Theta_{a,l}(\tau) = \sum_{m=-\infty}^\infty q^{l\left(m+\frac{a}{2l}\right)^2} \,,
\end{equation}
the partition functions of the CFT $\CT$ constructed from $\Lambda(C)$ can be expressed as
\begin{subequations} \label{eq:code_part_funcs}
\begin{align}
    Z_{\CT}[0,0] &= \frac{1}{\eta(\tau)^n} \left( W_{C_0}(\{\Theta_{a,\frac{k}{2}}\}) + W_{C_2}(\{\Theta_{a,\frac{k}{2}}\}) \right) = \frac{1}{\eta(\tau)^n} W_C(\{\Theta_{a,\frac{k}{2}}\}) \,, \label{eq:NSNS_peven} \\
    Z_{\CT}[1,0] &= \frac{1}{\eta(\tau)^n} \left( W_{C_0}(\{\Theta_{a,\frac{k}{2}}\}) - W_{C_2}(\{\Theta_{a,\frac{k}{2}}\}) \right) \,, \\
    Z_{\CT}[0,1] &= \frac{1}{\eta(\tau)^n} \left( W_{C_1}(\{\Theta_{a,\frac{k}{2}}\}) + W_{C_3}(\{\Theta_{a,\frac{k}{2}}\}) \right) = \frac{1}{\eta(\tau)^n} W_{S(C)}(\{\Theta_{a,\frac{k}{2}}\}) \,, \\
    Z_{\CT}[1,1] &= \frac{1}{\eta(\tau)^n} \left( W_{C_1}(\{\Theta_{a,\frac{k}{2}}\}) - W_{C_3}(\{\Theta_{a,\frac{k}{2}}\}) \right) = \frac{1}{\eta(\tau)^n} \widetilde{W}_{C}(\{\Theta_{a,\frac{k}{2}}\}) \label{eq:RR_peven} \,,
\end{align}
\end{subequations}
where $C_1$ and $C_3$ are defined by $s\in S(C)$ when the fermion parity is determined by the characteristic vector $\chi=\frac{2}{\sqrt{k}}(s+km),\, m\in\BZ^n$.
In the binary case, $\Theta_{0,1}(\tau)=\theta_3(2\tau)$ and $\Theta_{1,1}(\tau)=\theta_2(2\tau)$.

\paragraph{$k$ : odd}
For a self-dual code $C\subset\BZ_k^n$ for odd $k$, the vector $\chi=\sqrt{k}(1,\dots,1)\in\Lambda(C)$ is always the characteristic vector of $\Lambda(C)$ and we choose this $\chi$ to determine the fermion parity in R sector.

By using a slightly modified version of \cite{Kawabata:2023nlt} (3.36) :
\begin{equation}
    f^{\alpha,\beta}_{a,k}(\tau) = \sum_{m\,\in\,\BZ}\, (-1)^{\alpha(km+a)}\, q^{\frac{k}{2}\left(m+\beta\frac{1}{2}+\frac{a}{k}\right)^2} \,,
\end{equation}
the partition functions are
\begin{equation}
    Z_{\CT}[\alpha,\beta] = \frac{1}{\eta(\tau)^n} W_C \left( f^{\alpha,\beta}_{0,k}(\tau),\,f^{\alpha,\beta}_{1,k}(\tau),\,\cdots,\,f^{\alpha,\beta}_{k-1,k}(\tau) \right)
\end{equation}
where $\alpha,\beta\in\{0,1\}$.
In concrete calculations, relations such as $f_{0,k}^{1,1}(\tau)=0$ and $f_{a,k}^{\alpha,\beta}(\tau)=(-1)^{\alpha\beta} f_{-a,k}^{\alpha,\beta}(\tau)$ are useful.

\section{Triality structure from binary codes}
\label{sec:triality}

In this section, we discuss the equivalence between the reflection orbifold and the shift orbifold for the binary codes. We first compute the partition functions of the lattice CFT from binary codes and then show the equivalence between the reflection and shift orbifolds.

\subsection{Shift orbifold for binary codes}
As discussed in section~\ref{ss:shift}, we can define the shift orbifold of the lattice CFT based on an odd self-dual lattice $\Lambda$ and the shift vector $\delta$ that is not a characteristic vector and $\delta/2\notin \Lambda$.
The non-anomalous condition of the shift symmetry $H$ is $\delta^2\in4\BZ$.

In the rest of this subsection, we focus on the shift orbifold of the lattice CFT constructed from a binary code $C\subset\BZ_2^n$ and compute their explicit partition functions.
For a singly-even self-dual code $C\subset\BZ_2^n$, we can construct an odd self-dual lattice $\Lambda(C)$ by the Construction A \eqref{eq:constructionA}. 
Correspondingly, we can construct a chiral fermionic CFT $\CT$ based on the lattice $\Lambda(C)$.
By the shift symmetry $H$, the original theory $\CT$ is graded into two parts.
In terms of the momentum lattice $\Lambda(C)$, this grading is given by $\Lambda(C) = \Lambda_{\delta\text{-even}} \sqcup \Lambda_{\delta\text{-odd}}$ where
\begin{align}
\begin{aligned}
    \Lambda_{\delta\text{-even}} &= \left\{ \lambda\in\Lambda \mid \lambda\cdot\delta\in2\BZ \right\} \,,\\
    \Lambda_{\delta\text{-odd}} &= \left\{ \lambda\in\Lambda \mid \lambda\cdot\delta\in2\BZ+1 \right\} \,.
\end{aligned}
\end{align}
Following the general prescription~\eqref{eq:orbifold_lattice}, the momentum lattices of the orbifold theories $(\CT/H)_\pm$ are given by
\begin{align}
\begin{aligned}
    \label{eq:shift_lattice_bn}
    \Lambda_{\delta}^{\text{orb}+} &= \Lambda_{\delta\text{-even}} \sqcup \left( \Lambda_{\delta\text{-even}} + \tfrac{1}{2}\delta \right) \,, \\
    \Lambda_{\delta}^{\text{orb}-} &= \Lambda_{\delta\text{-even}} \sqcup \left( \Lambda_{\delta\text{-odd}} + \tfrac{1}{2}\delta \right) \,.
\end{aligned}
\end{align}
In the binary case, we set the shift symmetry $H$ generated by the shift vector
\begin{align}
    \delta = \frac{1}{\sqrt{2}}(1,1,\cdots,1)\,.
\end{align}
The shift vector is not a characteristic vector and $\delta/2\notin \Lambda(C)$. The non-anomalous condition leads to $n\in8\BZ$.

For the shift vector $\delta$, we can write down
\begin{align}
\begin{aligned}
\label{eq:bin_grade_lat}
    \Lambda_{\delta\text{-even}} &= \left(\frac{C_0 + 2 \,\BZ_+^n}{\sqrt{2}}\right) \sqcup \left(\frac{C_2 + 2 \,\BZ_-^n}{\sqrt{2}}\right)\,,\\
    \Lambda_{\delta\text{-odd}} &= \left(\frac{C_0 + 2\, \BZ_-^n}{\sqrt{2}}\right) \sqcup \left(\frac{C_2 + 2 \,\BZ_+^n}{\sqrt{2}}\right)\,,
\end{aligned}
\end{align}
where $C_0$ and $C_2$ are the set of doubly-even and singly-even codewords, respectively ($C=C_0 \sqcup C_2$).
Here, we use the notation
\begin{align}
\begin{aligned}
    \BZ_+^n &= \left\{m\in\BZ^n\mid m^2 \in2\BZ\right\}\,,\\
    \BZ_-^n &= \left\{m\in\BZ^n\mid m^2 \in2\BZ+1\right\}\,.
\end{aligned}
\end{align}
On the other hand, the two parts of the shifted lattice $\Lambda+\delta/2$ are given by
\begin{align}
\begin{aligned}
    (\Lambda+\tfrac{\delta}{2})_{\delta\text{-even}} &= \left(\frac{C_0 + 2 \,\BZ_+^n}{\sqrt{2}} + \frac{\delta}{2}\right) \sqcup \left(\frac{C_2 + 2 \,\BZ_-^n}{\sqrt{2}}+ \frac{\delta}{2}\right)\,,\\
    (\Lambda+\tfrac{\delta}{2})_{\delta\text{-odd}} &= \left(\frac{C_2 + 2 \,\BZ_+^n}{\sqrt{2}}+ \frac{\delta}{2}\right)\sqcup \left(\frac{C_0 + 2\, \BZ_-^n}{\sqrt{2}}+ \frac{\delta}{2}\right)\,.
\end{aligned}
\end{align}
The twisted ground states with conformal weight $h = n/16$ are in the first terms in the above two equations.
The total number of the twisted ground states is $2^{n/2}$.
We can see that half of them are even, and the other half are odd under the $\BZ_2$ symmetry given by $\delta$.

In what follows, we compute the NS-NS partition function of the shift orbifold theory $(\CT/H)_\pm$. Since the theory after orbifold is still a lattice CFT, the NS-NS partition function is given by 
\begin{align}
    \label{eq:partition_theta}
    Z_{(\CT/H)_\pm}[0,0] = \frac{1}{\eta(\tau)^n}\, \Theta_{\Lambda_{\delta}^{\text{orb}\pm}}(q)\,,
\end{align}
where $\Theta_{\Lambda_{\delta}^{\text{orb}\pm}}(q)$ is the theta function of the shifted lattice $\Lambda_{\delta}^{\text{orb}\pm}$:
\begin{align}
    \Theta_{\Lambda_{\delta}^{\text{orb}\pm}}(q) = \sum_{\lambda\,\in\,\Lambda_{\delta}^{\text{orb}\pm}} q^{\lambda^2/2}\,.
\end{align}
From \eqref{eq:shift_lattice_bn}, we are required to have the theta function of $\Lambda_{\delta\text{-even}}$, $\Lambda_{\delta\text{-even}}+\delta/2$ and $\Lambda_{\delta\text{-odd}}+\delta/2$.
To this purpose, we show the following propositions:

\begin{proposition}
For a singly-even self-dual code $C\subset\BZ_2^n$, the theta functions of $\Lambda_{\delta\text{-even}}$ and $\Lambda_{\delta\text{-odd}}$ are
    \begin{align}
        \begin{aligned}
        \Theta_{\Lambda_{\delta\text{-even}}}(q) = \frac{\Theta_{\Lambda(C)}(q) + (\theta_3(q)\theta_4(q))^{n/2}}{2}\,,\\
        \Theta_{\Lambda_{\delta\text{-odd}}}(q) = \frac{\Theta_{\Lambda(C)}(q) - (\theta_3(q)\theta_4(q))^{n/2}}{2}\,,
        \end{aligned}
    \end{align}
where $\theta_i(q)$ are the Jacobi theta functions.
\end{proposition}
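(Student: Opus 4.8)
The plan is to reduce both identities to a single computation of the \emph{difference} $\Theta_{\Lambda_{\delta\text{-even}}}-\Theta_{\Lambda_{\delta\text{-odd}}}$, since the \emph{sum} is trivially $\Theta_{\Lambda(C)}$ because of the disjoint decomposition $\Lambda(C)=\Lambda_{\delta\text{-even}}\sqcup\Lambda_{\delta\text{-odd}}$. Solving the two resulting linear equations then yields the stated formulas at once. The difference is nothing but the $\delta$-graded lattice sum
\begin{align}
    \Theta_{\Lambda_{\delta\text{-even}}}(q)-\Theta_{\Lambda_{\delta\text{-odd}}}(q)=\sum_{\lambda\,\in\,\Lambda(C)}(-1)^{\delta\cdot\lambda}\,q^{\lambda^2/2}\,,
\end{align}
which is exactly $\eta(\tau)^n\,Z_\CT[0,0;1,0]$ for the shift symmetry and already appeared in the anomaly analysis of section~\ref{ss:shift}.

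To evaluate it I would parametrize the lattice by Construction A, writing $\lambda=(c+2m)/\sqrt2$ with $c\in C$ and $m\in\BZ^n$. Then $\delta\cdot\lambda=\tfrac12\,\mathrm{wt}(c)+\sum_i m_i$, and since every codeword of a self-dual binary code has even weight this is an integer with $(-1)^{\delta\cdot\lambda}=(-1)^{\mathrm{wt}(c)/2}\prod_i(-1)^{m_i}$. Because the exponent $\lambda^2/2=\sum_i(c_i+2m_i)^2/4$ also factorizes over coordinates, the whole sum becomes
\begin{align}
    \sum_{c\,\in\,C}(-1)^{\mathrm{wt}(c)/2}\prod_{i=1}^n\left[\,\sum_{m_i\,\in\,\BZ}(-1)^{m_i}\,q^{(c_i+2m_i)^2/4}\,\right]\,.
\end{align}

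The key step --- and the only one needing genuine input --- is that the single-coordinate sum vanishes whenever $c_i=1$: under the substitution $m_i\mapsto-1-m_i$ the exponent $(m_i+\tfrac12)^2$ is invariant while the sign $(-1)^{m_i}$ flips, so $\sum_{m_i}(-1)^{m_i}q^{(m_i+1/2)^2}=0$. Hence every codeword with $\mathrm{wt}(c)>0$ contributes zero, and only $c=0$ survives, where each coordinate gives $\sum_{m_i}(-1)^{m_i}q^{m_i^2}=\theta_4(2\tau)$ and the prefactor is $+1$. The difference therefore equals $\theta_4(2\tau)^n$, \emph{independently of the code $C$}. A final appeal to the Jacobi duplication identity $\theta_4(2\tau)^2=\theta_3(\tau)\theta_4(\tau)$ rewrites this as $(\theta_3\theta_4)^{n/2}$, matching the form in the statement.

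The remaining work is purely routine bookkeeping; the two places to be careful are the global sign $(-1)^{\mathrm{wt}(c)/2}$ (harmless, since only $c=0$ survives, where it equals $+1$) and converting the naturally produced $\theta_4(2\tau)^n$ into $(\theta_3\theta_4)^{n/2}$ via the duplication formula. It is worth flagging that the $C$-independence of the difference is precisely what later makes this factor $(\theta_3\theta_4)^{n/2}$ coincide with the one appearing in the reflection-orbifold partition function~\eqref{eq:nsns-refl}, foreshadowing the triality of section~\ref{sec:triality}.
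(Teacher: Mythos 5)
Your proof is correct and rests on the same core ingredients as the paper's: the sign-flip involution $m_i\mapsto -1-m_i$ on a coordinate with $c_i=1$, which isolates the all-zeros codeword as the only asymmetric contribution, followed by the duplication identity $\theta_4(q^2)^2=\theta_3(q)\theta_4(q)$. The only (cosmetic) difference is that you compute the difference $\Theta_{\Lambda_{\delta\text{-even}}}-\Theta_{\Lambda_{\delta\text{-odd}}}$ directly as a coordinate-factorized signed sum, whereas the paper organizes the same cancellation via the decomposition of $\BZ^n$ into $\BZ_+^n$ and $\BZ_-^n$; both yield $(\theta_3\theta_4)^{n/2}$ for the difference and the stated formulas follow.
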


\begin{proof}
    The essential ingredient of the proof is the equality 
    \begin{align}
    \label{eq:aux_eq}
        \sum_{c\,\in\, K} \sum_{m\,\in\,\BZ_+^n} q^{(m+\tfrac{c}{2})^2} = \sum_{c\,\in\, K} \sum_{m\,\in\,\BZ_-^n} q^{(m+\tfrac{c}{2})^2}
    \end{align}
    This holds for any subset $K\subset\BZ_2^n$ that does not contain the all-zeros vector: $0^n\notin K$ since if $c_i=1$, we can transform into $m_i \to -m_i-1$, which only exchanges $\BZ_+^n$ and $\BZ_-^n$ of the region where $m$ runs. Now we apply this equality for proof.

    Let us consider
    \begin{align}
        \Theta_{\Lambda(C)}(q) = \Theta_{\Lambda_{\delta\text{-even}}}(q) + \Theta_{\Lambda_{\delta\text{-odd}}}(q)\,,
    \end{align}
    where 
    \begin{align}
    \label{eq:ev_lst1}
        \Theta_{\Lambda_{\delta\text{-even}}}(q) = \sum_{c\,\in\,C_0} \sum_{m\,\in\,\BZ_+^n} q^{(m+\tfrac{c}{2})^2} + \sum_{c\,\in\,C_2} \sum_{m\,\in\,\BZ_-^n} q^{(m+\tfrac{c}{2})^2}\,,\\
        \Theta_{\Lambda_{\delta\text{-odd}}}(q) = \sum_{c\,\in\,C_0} \sum_{m\,\in\,\BZ_-^n} q^{(m+\tfrac{c}{2})^2} + \sum_{c\,\in\,C_2} \sum_{m\,\in\,\BZ_+^n} q^{(m+\tfrac{c}{2})^2}\,.
    \label{eq:ev_lst2}
    \end{align}
    Since the equality~\eqref{eq:aux_eq}, the last terms in \eqref{eq:ev_lst1} and \eqref{eq:ev_lst2} are the same.
    On the other hand, the first terms are not due to the all-zeros vector $c=0^n$, which contributes to $\Theta_{\Lambda(C)}(q)$ as
    \begin{align}
        \sum_{m\,\in\,\BZ^n} q^{m^2} = \theta_3(q^2)^n\,.
    \end{align}
    Therefore, after using \eqref{eq:aux_eq}, we can write
    \begin{align}
    \begin{aligned}
        \Theta_{\Lambda(C)}(q) &= \theta_3(q^2)^n + 2 \sum_{c\,\in\,C_0-\{0^n\}} \sum_{m\,\in\,\BZ_+^n} q^{(m+\tfrac{c}{2})^2} + 2\sum_{c\,\in\,C_2} \sum_{m\,\in\,\BZ_-^n} q^{(m+\tfrac{c}{2})^2} \\
        &= -\theta_4(q^2)^n + 2\,\Theta_{\Lambda_{\delta\text{-even}}}(q)\,,
    \end{aligned}
    \end{align}
    where we have used $2\sum_{m\in\BZ_+^n} q^{m^2}= \theta_3(q^2)^n + \theta_4(q^2)^n$. Lastly, we can use the identity $\theta_4(q^2) = \sqrt{\theta_3(q)\theta_4(q)}$.
\end{proof}

\begin{proposition}
    For a singly-even self-dual code $C\subset\BZ_2^n$, the theta functions of $\Lambda_{\delta\text{-even}}+\tfrac{\delta}{2}$ and $\Lambda_{\delta\text{-odd}}+\tfrac{\delta}{2}$ are
    \begin{align}
        \begin{aligned}
        \Theta_{\Lambda_{\delta\text{-even}}+\delta/2}(q) = \Theta_{\Lambda_{\delta\text{-odd}}+\delta/2}(q) =\frac{(\theta_2(q)\theta_3(q))^{n/2}}{2}\,.
        \end{aligned}
    \end{align}
\end{proposition}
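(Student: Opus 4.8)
The plan is to evaluate both theta functions by passing to the Construction~A description of $\Lambda(C)$ and factorizing over the $n$ coordinates. Writing a generic vector of $\Lambda(C)+\tfrac{\delta}{2}$ as $\tfrac{1}{\sqrt2}(c+2m)+\tfrac{\delta}{2}$ with $c\in C$ and $m\in\BZ^n$, half its squared length is $\sum_{i=1}^n\bigl(m_i+\tfrac{c_i}{2}+\tfrac14\bigr)^2$, so each coordinate contributes an independent one-variable sum depending only on $c_i\in\{0,1\}$. Introducing $f(q)=\sum_{m\in\BZ}q^{(m+1/4)^2}$, the reflection $m\mapsto-m-1$ shows $\sum_{m}q^{(m+3/4)^2}=f(q)$, so the contribution of \emph{every} codeword is simply $f(q)^n$ — including $0^n$, which no longer plays the special role it did in the previous proposition's key identity~\eqref{eq:aux_eq}, because the quarter shift destroys the symmetric point. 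Hence $\Theta_{\Lambda(C)+\delta/2}(q)=|C|\,f(q)^n=2^{n/2}f(q)^n$, using $|C|=2^{n/2}$ for a self-dual code.

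Next I would identify $f(q)$. Squaring and changing variables to $u=m+m'$, $v=m-m'$ (which range over integers of equal parity) gives $f(q)^2=\sum_{u\equiv v\,(2)}q^{\frac12(u+1/2)^2+\frac12 v^2}$. Inserting the projector $\tfrac12\bigl(1+(-1)^{u+v}\bigr)$ splits this into $\tfrac12\theta_2(q)\theta_3(q)$ plus a cross term proportional to $\bigl(\sum_u(-1)^u q^{(u+1/2)^2/2}\bigr)\,\theta_4(q)$; the first factor vanishes by the antisymmetry $u\mapsto-u-1$. Therefore $f(q)^2=\tfrac12\theta_2(q)\theta_3(q)$, and $\Theta_{\Lambda(C)+\delta/2}(q)=\bigl(\theta_2(q)\theta_3(q)\bigr)^{n/2}$.

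It then remains to show that the two halves $\Theta_{\Lambda_{\delta\text{-even}}+\delta/2}$ and $\Theta_{\Lambda_{\delta\text{-odd}}+\delta/2}$ are equal; together with the total this forces each to be $\tfrac12(\theta_2\theta_3)^{n/2}$. Using the decomposition~\eqref{eq:bin_grade_lat}, their difference is $\sum_{c\in C_0}\prod_i g_{c_i}(q)-\sum_{c\in C_2}\prod_i g_{c_i}(q)$, where $g_{c_i}(q)=\sum_m(-1)^{m}q^{(m+c_i/2+1/4)^2}$ is the single-coordinate sum weighted by the sign distinguishing $\BZ_+^n$ from $\BZ_-^n$. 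The same reflection $m\mapsto-m-1$ now gives $g_1=-g_0$, and since every codeword of a self-dual code has even weight, $\prod_i g_{c_i}=(-1)^{\mathrm{wt}(c)}g_0^{\,n}=g_0^{\,n}$ for every $c$, independently of $c$. The difference thus collapses to $(|C_0|-|C_2|)\,g_0(q)^n$, which vanishes because the doubly-even subcode $C_0$ has index two in a Type~I code, so $|C_0|=|C_2|$.

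The main obstacle is this last step: one must notice that the quarter shift renders the single-coordinate factor independent of $c_i$ up to a sign, and that self-duality (even weights) together with the equinumerosity $|C_0|=|C_2|$ is precisely what is needed to equate the even and odd shifted theta functions. By contrast, the evaluation $f^2=\tfrac12\theta_2\theta_3$ is a routine theta manipulation once the change of variables $(u,v)=(m+m',m-m')$ is chosen.
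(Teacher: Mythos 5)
Your proof is correct, but it is organized differently from the paper's. The paper evaluates $\Theta_{\Lambda_{\delta\text{-even}}+\delta/2}$ directly: it first proves the parity-class-preserving identity \eqref{eq:aux_2} (the reflection $m_i\to -m_i-1$ on the even-size support of $c$ maps $\BZ_\pm^n$ to itself and removes the $c$-dependence), which reduces the double sum to $|C_0|\sum_{\BZ_+^n}q^{(m+\frac14)^2}+|C_2|\sum_{\BZ_-^n}q^{(m+\frac14)^2}=\tfrac{|C|}{2}\sum_{\BZ^n}q^{(m+\frac14)^2}$, and then cites the identity $2\rho_0(q^2)^2=\theta_2\theta_3$. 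You instead compute the \emph{sum} $\Theta_{\Lambda(C)+\delta/2}=(\theta_2\theta_3)^{n/2}$ (where the unrestricted sum over $\BZ^n$ factorizes coordinate-wise and the single-variable reflection kills the $c$-dependence outright) and the \emph{difference} $\Theta_{\Lambda_{\delta\text{-even}}+\delta/2}-\Theta_{\Lambda_{\delta\text{-odd}}+\delta/2}=(|C_0|-|C_2|)\,g_0^n=0$ (where the signed factor $g_1=-g_0$ together with even codeword weights makes every term identical), and then average. Both routes rest on the same two ingredients --- the reflection $m\to-m-1$ and $|C_0|=|C_2|$ --- but your sum/difference split avoids having to track the parity classes $\BZ_\pm^n$ through the change of variables, at the cost of an extra (easy) cancellation argument; you also derive $f(q)^2=\tfrac12\theta_2(q)\theta_3(q)$ from scratch via the $(u,v)=(m+m',m-m')$ substitution rather than quoting it. Your observation that the all-zeros codeword loses its special role here (in contrast to the previous proposition, where \eqref{eq:aux_eq} fails precisely for $0^n$) is accurate and explains why the quarter-shifted theta functions are insensitive to the distinction between $C_0$ and $C_2$ beyond their cardinalities.
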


\begin{proof}
    Let us consider
    \begin{align}
        \begin{aligned}
        \Theta_{\Lambda_{\delta\text{-even}}+\delta/2}(q) = \sum_{c\,\in\,C_0} \sum_{m\,\in\,\BZ_+^n} q^{(m+\tfrac{c}{2}+\tfrac{1}{4})^2} + \sum_{c\,\in\,C_2} \sum_{m\,\in\,\BZ_-^n} q^{(m+\tfrac{c}{2}+\tfrac{1}{4})^2} \,.
        \end{aligned}
    \end{align}
    To evaluate the above equation, we utilize the equality
    \begin{align}
    \label{eq:aux_2}
        \sum_{m\,\in\,\BZ_\pm^n} q^{(m+\tfrac{c}{2}+\tfrac{1}{4})^2} = \sum_{m\,\in\,\BZ_\pm^n} q^{(m+\tfrac{1}{4})^2}\,,
    \end{align}
    where $\mathrm{wt}(c) \in2\BZ$.
    To see this, suppose that the first $l$ components of $c$ are $1$ and the others are $0$ where $l$ is even due to $\mathrm{wt}(c)\in2\BZ$.
    Then, we change the variable $m$ by $m_i\to -m_i-1$ $(i=1,2,\cdots,l)$ and $m_i\to m_i$ $(i=l+1,\cdots,n)$.
    Since $l\in2\BZ$, this change of variables does not affect the region where $m$ runs and 
    \begin{align}
    \begin{aligned}
         &\left(m_1+\tfrac{1}{2}+\tfrac{1}{4},\cdots,m_l+\tfrac{1}{2}+\tfrac{1}{4},m_{l+1}+\tfrac{1}{4},\cdots,m_{n}+\tfrac{1}{4}\right)\\
         \to \quad & \left(-m_1-\tfrac{1}{4},\cdots,-m_l-\tfrac{1}{4},m_{l+1}+\tfrac{1}{4},\cdots,m_{n}+\tfrac{1}{4}\right)\,.
        \end{aligned}
    \end{align}
    Thus, we obtain \eqref{eq:aux_2}.
    If one takes the sum over the subset $K\subset\BZ_2^n$ satisfying $\mathrm{wt}(c)\in2\BZ$ for any $c\in K$, we obtain the equality
    \begin{align}
    \label{eq:aux_3}
        \sum_{c\,\in\,K} \sum_{m\,\in\,\BZ_\pm^n} q^{(m+\tfrac{c}{2}+\tfrac{1}{4})^2} = |K|\sum_{m\,\in\,\BZ_\pm^n}q^{(m+\tfrac{1}{4})^2}\,,
    \end{align}
    where $|K|$ is the number of elements in $K\subset\BZ_2^n$.

    By using \eqref{eq:aux_3}, we obtain
    \begin{align}
    \begin{aligned}
        \Theta_{\Lambda_{\delta\text{-even}}+\delta/2}(q) &= |C_0|\sum_{m\,\in\,\BZ_+^n}q^{(m+\tfrac{1}{4})^2} + |C_2|\sum_{m\,\in\,\BZ_-^n}q^{(m+\tfrac{1}{4})^2}\,,\\
        &= \frac{|C|}{2}\sum_{m\,\in\,\BZ^n} q^{(m+\tfrac{1}{4})^2} = \frac{(\theta_2(q)\theta_3(q))^{n/2}}{2}\,,
    \end{aligned}
    \end{align}
    where we used $|C_0|=|C_2| = |C|/2$ and the identity $2 \rho_0(q^2)^2 =\theta_2(q)\theta_3(q)$~\cite{kondo1986theta} where $\rho_0(q)$ is the theta function of lattice $\BZ+\tfrac{1}{4}$: $\rho_0(q) = \sum_{m\,\in\,\BZ} q^{\tfrac{1}{2}(m+\tfrac{1}{4})^2}$.
    For the other $\Lambda_{\delta\text{-odd}}+\delta/2$, the same argument can be made.
\end{proof}

Combining the above two propositions, we obtain the theta functions of the momentum lattices of the orbifold theory $(\CT/H)_\pm$:
\begin{proposition}
    For a singly-even self-dual code $C\subset \BZ_2^n$, the theta functions of the shifted lattices are
    \begin{align}
        \begin{aligned}
            \Theta_{\Lambda_{\delta}^{\text{orb}\pm}}(q) = \frac{\Theta_{\Lambda(C)}(q) + (\theta_2\theta_3)^{n/2} + (\theta_3\theta_4)^{n/2} }{2}\,.
        \end{aligned}
    \end{align}
    \label{prop:theta_shift_lat}
\end{proposition}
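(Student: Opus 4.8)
The plan is to reduce the statement to the two preceding propositions of this subsection by exploiting the additivity of the lattice theta function over disjoint unions. Since $\Theta_\Lambda(q)=\sum_{\lambda\in\Lambda}q^{\lambda^2/2}$ depends on $\Lambda$ only as a set of vectors, any decomposition $\Lambda=A\sqcup B$ gives $\Theta_\Lambda=\Theta_A+\Theta_B$, where the terms on the right denote the corresponding partial sums. Applying this to the defining decompositions in \eqref{eq:shift_lattice_bn} immediately yields
\[
\Theta_{\Lambda_{\delta}^{\text{orb}+}}(q)=\Theta_{\Lambda_{\delta\text{-even}}}(q)+\Theta_{\Lambda_{\delta\text{-even}}+\delta/2}(q),\qquad
\Theta_{\Lambda_{\delta}^{\text{orb}-}}(q)=\Theta_{\Lambda_{\delta\text{-even}}}(q)+\Theta_{\Lambda_{\delta\text{-odd}}+\delta/2}(q).
\]

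Next I would substitute the value $\Theta_{\Lambda_{\delta\text{-even}}}=\tfrac12\big(\Theta_{\Lambda(C)}+(\theta_3\theta_4)^{n/2}\big)$ from the first proposition into both right-hand sides, and then invoke the second proposition, which supplies both the equality $\Theta_{\Lambda_{\delta\text{-even}}+\delta/2}=\Theta_{\Lambda_{\delta\text{-odd}}+\delta/2}$ and the common value $\tfrac12(\theta_2\theta_3)^{n/2}$. Collecting terms over a common denominator then produces
\[
\Theta_{\Lambda_{\delta}^{\text{orb}\pm}}(q)=\frac{\Theta_{\Lambda(C)}(q)+(\theta_3\theta_4)^{n/2}}{2}+\frac{(\theta_2\theta_3)^{n/2}}{2},
\]
which is precisely the claimed formula.

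The decisive observation — and the reason the two orbifold theories $(\CT/H)_\pm$ end up sharing the same NS-NS partition function — is the coincidence $\Theta_{\Lambda_{\delta\text{-even}}+\delta/2}=\Theta_{\Lambda_{\delta\text{-odd}}+\delta/2}$ already established; once it is granted, the $+$ and $-$ computations collapse into one. I do not anticipate any genuine obstacle here, since all the analytic content resides in the two earlier propositions (which in turn rest on the reflection identities \eqref{eq:aux_eq} and \eqref{eq:aux_2}), and the present statement is purely an assembly of those results. The only point requiring a word of care is that the unions in \eqref{eq:shift_lattice_bn} are genuinely disjoint: this is immediate because the translate by $\delta/2$ shifts every coordinate out of $\tfrac{1}{\sqrt 2}\BZ$, so no vector can lie simultaneously in $\Lambda_{\delta\text{-even}}\subset\Lambda$ and in a $\delta/2$-translate, guaranteeing that the term-by-term splitting of the theta function is legitimate.
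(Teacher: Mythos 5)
Your proof is correct and follows essentially the same route as the paper, which obtains Proposition~\ref{prop:theta_shift_lat} simply by combining the two preceding propositions via the additivity of the theta function over the disjoint unions in \eqref{eq:shift_lattice_bn}. Your added remark on disjointness (the $\delta/2$-translate moves all coordinates off $\tfrac{1}{\sqrt{2}}\BZ$) is a valid and harmless extra check.
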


Under the modular transformations, the theta functions of the shifted lattice transform as ($n\in8\BZ$ due to the non-anomalous condition)
\begin{align}
    S&:\quad \Theta_{\Lambda_{\delta}^{\text{orb}\pm}}(q) \to (-\i\tau)^{\tfrac{n}{2}}\,\Theta_{\Lambda_{\delta}^{\text{orb}\pm}}(q)\,,\\
    T^2&:\quad \Theta_{\Lambda_{\delta}^{\text{orb}\pm}}(q) \to \Theta_{\Lambda_{\delta}^{\text{orb}\pm}}(q)\,.
\end{align}
This implies that the theta functions of the shifted lattices are a modular form of weight $n/2$ for the subgroup $\Gamma \subset \mathrm{SL}(2,\BZ)$ generated by $S$ and $T^2$.
Therefore, from \eqref{eq:partition_theta}, the NS-NS partition function of the orbifold theory
\begin{align}
    \label{eq:nsns-shift}
    Z_{(\CT/H)_\pm}[0,0] = \frac{\Theta_{\Lambda(C)}(q) + (\theta_2\theta_3)^{n/2} + (\theta_3\theta_4)^{n/2} }{2\eta(\tau)^n}\,,
\end{align}
properly transforms under modular transformation.

The above proposition states that the two orbifold theories $(\CT/H)_\pm$ have the same NS-NS partition function.
The following proposition ensures that the two orbifold theories are equivalent since their momentum lattices are equivalent up to a reflection transformation.
Thus, in the rest of this paper, we denote the shift orbifold theory from binary codes by $\CT/H$ and its momentum lattice by $\Lambda_{\delta}^{\text{orb}}$, omitting types of orbifold $(\pm)$.
\begin{proposition}
In the binary construction, the two shifted lattices are equivalent 
    \begin{align}
        \Lambda_{\delta}^{\text{orb}+} \;\cong \;\Lambda_{\delta}^{\text{orb}-}\,.
    \end{align}
\end{proposition}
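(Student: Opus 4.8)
The plan is to exhibit an explicit lattice isometry of $\BR^n$ carrying $\Lambda_{\delta}^{\text{orb}+}$ onto $\Lambda_{\delta}^{\text{orb}-}$, built out of coordinate sign flips. For a subset $S\subset\{1,\dots,n\}$ let $O_S$ denote the orthogonal map flipping the signs of the coordinates indexed by $S$. Since $\Lambda(C)=\tfrac{1}{\sqrt2}(C+2\BZ^n)$ and $-c_i\equiv c_i \bmod 2$, each $O_S$ preserves $\Lambda(C)$ as a set. The first step is to track how $O_S$ acts on the $\delta$-even/$\delta$-odd decomposition: using $\delta\cdot O_S\lambda=(O_S\delta)\cdot\lambda$ and that $O_S$ is a bijection of $\Lambda(C)$ with $O_S^2=\mathrm{id}$, one finds $O_S(\Lambda_{\delta\text{-even}})=\Lambda_{(O_S\delta)\text{-even}}$, while trivially $O_S(\tfrac12\delta)=\tfrac12 O_S\delta$. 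Hence for every $S$ one obtains the clean identity $O_S(\Lambda_{\delta}^{\text{orb}+})=\Lambda_{O_S\delta}^{\text{orb}+}$; that is, sign flips permute the family of $(+)$-orbifold lattices while merely relabelling the shift vector.

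The second step is to choose $S$ so that this relabelling converts the $(+)$-construction into the $(-)$-construction. I would take $S=\mathrm{supp}(c^\ast)$ for a singly-even codeword $c^\ast\in C_2$; such a codeword exists precisely because $C$ is singly-even (Type I). Writing $\mathbf 1_S=c^\ast$, the difference $O_S\delta-\delta=-\sqrt2\,\mathbf 1_S$ satisfies $(O_S\delta-\delta)\cdot\lambda\equiv c^\ast\cdot c\equiv 0 \bmod 2$ for every $\lambda=\tfrac{1}{\sqrt2}(c+2m)\in\Lambda(C)$, using $c^\ast\in C=C^\perp$. Therefore $O_S\delta$ and $\delta$ define the same even sublattice, $\Lambda_{(O_S\delta)\text{-even}}=\Lambda_{\delta\text{-even}}$.

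Finally I would compute the glue vector. One has $\tfrac12 O_S\delta=\tfrac12\delta-v^\ast$ with $v^\ast=\tfrac{1}{\sqrt2}c^\ast\in\Lambda(C)$, and since $\mathrm{wt}(c^\ast)\in4\BZ+2$ the inner product $\delta\cdot v^\ast=\tfrac12\mathrm{wt}(c^\ast)$ is odd, so $v^\ast\in\Lambda_{\delta\text{-odd}}$. Using $\Lambda_{\delta\text{-even}}+\lambda_o=\Lambda_{\delta\text{-odd}}$ for any $\delta$-odd $\lambda_o$, the glued coset becomes $\Lambda_{\delta\text{-even}}+\tfrac12\delta-v^\ast=\Lambda_{\delta\text{-odd}}+\tfrac12\delta$, whence
\[
O_S(\Lambda_{\delta}^{\text{orb}+})=\Lambda_{\delta\text{-even}}\sqcup\bigl(\Lambda_{\delta\text{-odd}}+\tfrac12\delta\bigr)=\Lambda_{\delta}^{\text{orb}-}\,.
\]
This realizes the equivalence by an honest orthogonal transformation, which is stronger than the coincidence of theta functions already recorded in Proposition~\ref{prop:theta_shift_lat}.

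I expect the only delicate points to be the parity bookkeeping in the first and third steps: confirming that $O_S$ sends the even sublattice exactly to $\Lambda_{(O_S\delta)\text{-even}}$ rather than mixing the two cosets, and that the singly-even hypothesis on $C$ is precisely what guarantees both the existence of $c^\ast\in C_2$ and that $v^\ast$ lands in the odd coset, so that the $(+)$ glue is carried onto the $(-)$ glue. Everything else is routine manipulation of Construction~A.
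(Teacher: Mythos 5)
Your proof is correct and uses essentially the same isometry as the paper: a reflection of the coordinates supported on a singly-even codeword $c^\ast\in C_2$, with the key facts being $c^\ast\in C^\perp$ (so the $\delta$-even sublattice is preserved) and $\mathrm{wt}(c^\ast)\equiv 2\bmod 4$ (so the glue coset is carried from the $(+)$ to the $(-)$ form). Your packaging via the identity $O_S(\Lambda_{\delta}^{\text{orb}+})=\Lambda_{O_S\delta}^{\text{orb}+}$ is a cleaner way to organize the same change-of-variables computation the paper performs element by element.
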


\begin{proof}
    Let us construct an isomorphism between $\Lambda_{\delta\text{-even}}+\tfrac{\delta}{2}$ and $\Lambda_{\delta\text{-odd}}+\tfrac{\delta}{2}$ by reflection.
    Any element in $\Lambda_{\delta\text{-even}}+\tfrac{\delta}{2}$ can be written as $\lambda = \sqrt{2}(m+\tfrac{c}{2}+\tfrac{1}{4})$ where $(m\in\BZ_+^n\; \mathrm{and} \;c\in C_0)\sqcup (m\in\BZ_-^n\;\mathrm{and}\;c\in C_2)$.
    Given $t\in C_2$, $C_2 = C_0 + t$. This can be used to write an element of $\Lambda_{\delta\text{-even}}+\tfrac{\delta}{2}$ as
    \begin{align}
    \label{eq:elm_equiv}
        \lambda = \sqrt{2}\,(m+\tfrac{c+t}{2}+\tfrac{1}{4})\,,
    \end{align}
    where $(m\in\BZ_+^n\; \mathrm{and} \;c\in C_2)\sqcup (m\in\BZ_-^n\;\mathrm{and}\;c\in C_0)$.
    For simplicity, suppose $t=1^l0^{n-l}\in C_2$ where $l\in 2\BZ$ due to the self-orthogonality of $C$. 
    We can rewrite \eqref{eq:elm_equiv} by $m_i\to -m_i-c_i-1$ $(i=1,2,\cdots,l)$ and $m_i\to m_i$ $(i=l+1,\cdots,n)$. 
    Then, we obtain
    \begin{align}
    \label{eq:bf_rf}
        \lambda = (-m_1-\tfrac{c_1}{2}-\tfrac{1}{4},\cdots,-m_l-\tfrac{c_l}{2}-\tfrac{1}{4},m_{l+1}+\tfrac{c_{l+1}}{2}+\tfrac{1}{4},\cdots,m_n+\tfrac{c_n}{2}+\tfrac{1}{4})\,.
    \end{align}
    Note that this change of variables does not affect the region where $m$ runs because $\sum_{i=1}^l (-m_i-c_i-1)+\sum_{i=l+1}^n m_i=\sum_{i=1}^n m_i + c\cdot t = \sum_{i=1}^l m_i$ mod 2.
    Here, we used $l\in2\BZ$ and $c\cdot t\in2\BZ$ by self-orthogonality.
    The above procedure just changes the representation of an element in $\Lambda_{\delta\text{-even}}+\tfrac{\delta}{2}$.
    
    Now we give a map between $\Lambda_{\delta\text{-even}}+\tfrac{\delta}{2}$ and $\Lambda_{\delta\text{-odd}}+\tfrac{\delta}{2}$ by the reflection of the first $l$ components.
    The reflection maps \eqref{eq:bf_rf} to an element $\lambda = \sqrt{2}(m+\tfrac{c}{2}+\tfrac{1}{4})$ where $(m\in\BZ_+^n\; \mathrm{and} \;c\in C_2)\sqcup (m\in\BZ_-^n\;\mathrm{and}\;c\in C_0)$, which is the definition of $\Lambda_{\delta\text{-odd}}+\tfrac{\delta}{2}$.
    We can easily check that the reflection preserves $\Lambda_{\delta\text{-even}}$, so we conclude the equivalence between the two shifted lattices.
\end{proof}

\subsection{Triality structure in chiral fermionic CFTs}
Now we move on to the equivalence between the reflection orbifold and the shift orbifold for the binary codes.
For bosonic CFTs from binary codes, the equivalence between the reflection and shift orbifolds was shown in \cite{dolan1990conformal,Dolan:1994st}.
We extend this result to chiral fermionic CFTs constructed from binary codes.

Our goal is the following proposition:
\begin{proposition}
    Let $\CT$ be a chiral fermionic CFT constructed from a singly-even self-dual code $C$.
    The reflection orbifold $(\CT/G)_\pm$ and the shift orbifold $\CT/H$ are isomorphic: $(\CT/G)_\pm \cong \CT/H$.
    \label{prop:equivalence}
\end{proposition}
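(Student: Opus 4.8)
The plan is to establish the isomorphism $(\CT/G)_\pm \cong \CT/H$ by matching all the spin-structure-dependent torus partition functions, since two chiral fermionic CFTs with the same partition functions in every sector (NS-NS, NS-R, R-NS, R-R) are isomorphic as fermionic theories. I would begin from the NS-NS sector, where the work is already essentially done: Proposition~\ref{prop:theta_shift_lat} gives
\begin{align}
    Z_{\CT/H}[0,0] = \frac{\Theta_{\Lambda(C)}(q) + (\theta_2\theta_3)^{n/2} + (\theta_3\theta_4)^{n/2}}{2\eta(\tau)^n}\,,
\end{align}
and \eqref{eq:nsns-refl} gives exactly the same expression for $Z_{(\CT/G)_+}[0,0]$ (recall $\Theta_\Lambda(\tau)=\Theta_{\Lambda(C)}(q)$ and $(\theta_2\theta_3)^{n/2}=(\theta_3\theta_2)^{n/2}$). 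So the NS-NS partition functions already agree term by term. This is the anchor of the argument and confirms that the spectra of local operators coincide.

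Next I would verify that the remaining partition functions match. For the shift orbifold, all four spin-structure partition functions follow from the lattice/shadow description in table~\ref{tab:lattice_sectors} together with the general formulas \eqref{eq:lattice_partition_functions}, using the explicit shadow expressions \eqref{eq:S+}--\eqref{eq:S-_2}; for the reflection orbifold they are assembled from \eqref{eq:refl_part_funcs} via the sector-swapping rule of table~\ref{tab:sec_swap}. In particular I would check the R-R partition function, where \eqref{eq:rr-refl} gives $Z_{(\CT/G)_+}[1,1]=\tfrac12 Z_\CT[1,1]$, and confirm the matching NS-R and R-NS functions, which on the reflection side are built from $(2\eta/\theta_3)^{n/2}$, $(2\eta/\theta_4)^{n/2}$. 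The natural strategy is to re-express the theta functions of the shifted lattices and their shadows in terms of $\theta_2,\theta_3,\theta_4$ using the same Jacobi-theta identities (e.g.\ $\theta_4(q^2)=\sqrt{\theta_3(q)\theta_4(q)}$ and $2\rho_0(q^2)^2=\theta_2(q)\theta_3(q)$) that powered the two preceding propositions, and to show these coincide sector by sector with the reflection-orbifold functions. Since both theories are modular covariant (Proposition~\ref{prop:odd_self-dual} for the shift side, $n\in8\BZ$ for the reflection side), it suffices to match a few low-lying sectors and invoke modular covariance for the rest.

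I would then make the isomorphism conceptually transparent using the triality promised in the introduction: the original theory contains $\SU(2)$ currents inherited from the weight-two codewords, and the triality permutes the three $\BZ_2$ involutions realized inside each $\SU(2)$ factor. Concretely, the reflection $g$ and the shift $h$ are conjugate under the triality automorphism of the chiral algebra, so gauging one is equivalent to gauging the other. This gives the honest isomorphism of CFTs (not merely equality of characters) and explains why the two orbifold types $(\pm)$ collapse, consistent with the already-proven $\Lambda_\delta^{\text{orb}+}\cong\Lambda_\delta^{\text{orb}-}$.

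The main obstacle I anticipate is the careful bookkeeping of the twisted R sector and the fermion-parity assignments: on the reflection side the twisted ground states carry the $(-1)^{n/8}$ phases and the delicate $(-1)^F$ splitting recorded in \eqref{eq:refl_part_funcs}, while on the shift side the fermion parity is governed by the characteristic vectors $\chi^{\text{orb}\pm}$ of Proposition~4.2, and the sign ambiguity in $Z_\CT[1,1]$ must be fixed consistently on both sides so that the R-R functions match with the correct sign. Aligning these fermion-parity conventions so that \emph{all} four spin structures agree simultaneously — rather than just the NS-NS sector — is the place where genuine care is required.
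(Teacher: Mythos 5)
Your proposal has a genuine gap at its foundation: equality of all torus partition functions does not imply that two chiral CFTs are isomorphic --- characters do not determine a chiral algebra, the standard example being the two $c=16$ even self-dual lattice theories on $E_8^2$ and $D_{16}^+$, which share a partition function but are not isomorphic. So the sector-by-sector matching that forms the bulk of your argument can at best be consistency evidence, never the proof. You appear to sense this, since you append the triality as the thing that gives ``the honest isomorphism,'' but there you assert exactly what must be proved: that $\sigma$ is an automorphism of the full theory $\CT$ conjugating $g$ into $h$. The paper neither establishes nor needs that strong statement. What it proves is the much weaker Proposition~\ref{prop:triality_op}: $\sigma$ preserves only the \emph{common} fixed subspace $V^+_{\delta\text{-even}}$, and even that requires a nontrivial cocycle computation showing that $\sigma\,(V_\lambda+V_{-\lambda})\,\sigma^{-1}$ for $\lambda\in(C_2+2\BZ_-^n)/\sqrt{2}$ closes on operators preserving $V^+_{\delta\text{-even}}$. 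Conjugation of $g$ into $h$ is manifest only on the $\su(2)_1^n$ currents; on the remaining modules coming from the glue (codeword) vectors, two operators inducing the same automorphism of the current algebra can differ by phases, and nothing in your outline controls this --- nor the induced identification of the genuinely different-looking twisted sectors (half-integer-moded oscillators and a $2^{n/2}$-dimensional ground space on one side, a shifted lattice on the other).

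The actual engine of the paper's proof is Proposition~\ref{prop:eigenvalue_lattice} (Dolan--Goddard--Montague): a CFT whose simultaneous Cartan zero-mode eigenvalues form a rank-$n$ integral lattice \emph{is} the lattice CFT on that lattice. The triality is used only to show that the eigenvalue lattice of $\sqrt{2}J_0^{1i}$ in $\CT/G$ contains $\Lambda_{\delta\text{-even}}$; self-duality then restricts it to exactly two candidates, $\Lambda(C)$ or $\Lambda_{\delta}^{\text{orb}}$, which are separated by counting weight-one operators ($3n+16|C_4|$ versus $n+8|C_4|$, with separate treatment of $n=8$ and $n=16$ where the count can degenerate). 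Only at that final, two-alternative step does the partition function enter, and there it suffices because the comparison is between two explicitly known lattice CFTs. To salvage your route you would need either to prove $\sigma g\sigma^{-1}=h$ as operators on all of $\CH_{\mathrm{NS}}(\Lambda)$, or to adopt the reconstruction argument; as written, neither the character matching nor the triality remark closes the proof.
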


From the above proposition, we can deduce the equivalence $(\CT/G)_+ \cong (\CT/G)_-$ between two types of orbifold in the binary construction. We can simply denote them by $\CT/G$.

This equivalence is expected because the NS-NS partition function of the shift orbifold \eqref{eq:nsns-shift} is the same as that of the reflection orbifold \eqref{eq:nsns-refl}.
By the equivalence between the two orbifolds, we obtain the pyramid structure of the chiral fermionic CFTs from binary codes as shown in Fig.~\ref{fig:isomorphism}.

To hold the equivalence, an $\mathrm{SU}(2)$ symmetry from binary codes is crucial.
Following~\cite{Polchinski:1998rq,Dixon:1988qd}, the equivalence between the reflection and shift orbifolds can be understood intuitively.
The original theory $\CT$ has an $\mathrm{SU}(2)$ symmetry for each direction, generated by
\begin{align}
    \begin{aligned}
    J_1^i(z) &= \frac{\i}{2} \left(V_{\sqrt{2}e_i}(z) + V_{-\sqrt{2}e_i}(z)\right)\,,\\
    J_2^i(z) &= \frac{1}{2} \left(V_{\sqrt{2}e_i}(z) - V_{-\sqrt{2}e_i}(z)\right)\,,\\
    J_3^i(z) &= \frac{\i}{\sqrt{2}}\, \partial X^i(z)\,,
    \end{aligned}
\end{align}
where our notation is $X^i(z)X^j(w) = -\delta^{ij}\log(z-w)$.
After the mode expansion $J_a^i(z) = \sum_{n\in\BZ} J_n^{ai}/z^{n+1}$, these modes realize the commutation relations of $\mathrm{su}(2)_1^n$ algebra
\begin{align}
    [J_m^{ai},J_n^{bj}] = \i\, \epsilon_{abc} \,\delta^{ij} \,J_{m+n}^{i c} + \frac{m}{2}\,\delta^{ij}\,\delta^{ab}\,\delta_{m+n,0}\,,
\end{align}
where $\epsilon_{abc}$ denotes the completely antisymmetric tensor ($\epsilon_{123}=1$).
Also, to compute the OPE \eqref{eq:ope1}, we assumed the gauge $\epsilon (\sqrt{2}e_i,-\sqrt{2}e_i) = \epsilon (-\sqrt{2}e_i,\sqrt{2}e_i) =-1$, which is compatible with \eqref{eq:com_cocy}.
The reflection $g\in G$, the shift $h\in H$, and their product $gh$ act on the $\mathrm{SU}(2)$ symmetry as
\begin{align}
    \begin{aligned}
    g: \quad & J_1^i(z) \to +J_1^i(z)\,,\quad J_2^i(z) \to -J_2^i(z)\,,\quad J_3^i(z) \to -J_3^i(z)\,,\\
    h: \quad & J_1^i(z) \to - J_1^i(z)\,,\quad J_2^i(z) \to -J_2^i(z)\,,\quad J_3^i(z) \to +J_3^i(z)\,,\\
    gh: \quad & J_1^i(z) \to - J_1^i(z)\,,\quad J_2^i(z) \to +J_2^i(z)\,,\quad J_3^i(z) \to -J_3^i(z)\,.
    \end{aligned}
\end{align}
These actions are equivalent under the permutation $S_3$ group that interchanges the $\mathrm{SU}(2)$ symmetry generators.
This triality symmetry suggests the equivalence between the $\BZ_2$ orbifold theories by $G$ and $H$.
For a compact boson theory with $c=\bar{c}=1$, starting from the radius with $\mathrm{SU}(2)$ symmetry, the triality gives the intersection point of the torus branch and orbifold branch by the shift and reflection orbifold~\cite{Ginsparg:1987eb}.

To relate the $\BZ_2$ actions $g$ and $h$, we need to consider an operator swapping the $\mathrm{SU}(2)$ generators $J^i_1(z)$ and $J^i_3(z)$. The triality operator is given by
\begin{align}
    \sigma^i = \exp\left\{\frac{\i\pi}{\sqrt{2}}(J_0^{1i}+J_0^{3i})\right\} \,.
\end{align}
By taking the product $\sigma = \prod_{i=1}^n \sigma^i$, we can construct the operator that swaps the $\mathrm{SU}(2)$ generators $J^i_1(z)$ and $J^i_3(z)$ for all directions:
\begin{align}
    \label{eq:ac_triali}
    \sigma \,J_m^{1i}\,\sigma^{-1} = J_m^{3i}\,,\quad \sigma\, J_m^{2i} \,\sigma^{-1} = -J_m^{2i}\,,\quad \sigma\, J_m^{3i} \,\sigma^{-1} = J_m^{1i}\,.
\end{align}

Let $V_{\delta\text{-even}}^+\subset \CH_\mathrm{NS}$ be the space of states fixed by the action of $g$ and $h$ in the NS sector of the theory $\CT$:
for a state $\ket{\psi}\in V_{\delta\text{-even}}^+$, $\ket{\psi} = g\ket{\psi} = h\ket{\psi}$.
For a codeword $c\in C$, we define operators that map $V_{\delta\text{-even}}^+$ to itself:
\begin{align}
    V_{ij}^\pm (z) &= \frac{1}{\sqrt{2}}\left(:e^{\i\sqrt{2}\,(e_i\pm e_j)\cdot X(z)}: + :e^{-\i\sqrt{2}\,(e_i\pm e_j)\cdot X(z)}:\right)\,,\\
    V_{\zeta_c}(z) &= \frac{1}{\sqrt{2}}\left(:e^{\i\frac{c}{\sqrt{2}}\cdot X(z)}: + :e^{-\i\frac{c}{\sqrt{2}}\cdot X(z)}:\right)\,,
\end{align}
where $e_i$ is the unit vector in direction $i=1,2,\cdots,n$.
From \eqref{eq:bin_grade_lat}, any state in $V_{\delta\text{-even}}^+$ is generated by acting $:\partial X^i(z)\partial X^j(z):$\,, $V_{ij}^\pm(z)$ and $V_{\zeta_c}(z)$ ($c\in C_0$: doubly-even codewords) on $\ket{0}$ and $\ket{\lambda}_+ = \ket{\lambda}+\ket{-\lambda}$ where $\lambda \in (C_2+2\BZ^n_-)/\sqrt{2}$.

Now we show the following proposition:
\begin{proposition}[Generalization of Proposition 7.2 in \cite{Dolan:1994st}]
    The triality operator $\sigma$ maps $V_{\delta\text{-even}}^+$ to $V_{\delta\text{-even}}^+$.
    \label{prop:triality_op}
\end{proposition}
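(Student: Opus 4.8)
The plan is to combine the generating description of $V_{\delta\text{-even}}^+$ stated just above with the elementary relation $\sigma\,O_1\cdots O_k\ket{v}=(\sigma O_1\sigma^{-1})\cdots(\sigma O_k\sigma^{-1})\,\sigma\ket{v}$. Since every vector of $V_{\delta\text{-even}}^+$ arises by acting the operators $:\partial X^i\partial X^j:$, $V_{ij}^\pm(z)$ and $V_{\zeta_c}(z)$ (with $c\in C_0$) on the seeds $\ket{0}$ and $\ket{\lambda}_+$, it is enough to establish two claims: (i) for each such generator $O$ the conjugate $\sigma O\sigma^{-1}$ still maps $V_{\delta\text{-even}}^+$ into itself, and (ii) $\sigma$ sends each seed into $V_{\delta\text{-even}}^+$. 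An induction on $k$ then yields $\sigma\,V_{\delta\text{-even}}^+\subseteq V_{\delta\text{-even}}^+$. Both claims are consequences of the single structural fact that $\sigma$ interchanges the reflection and the shift, $\sigma^{-1}g\,\sigma=h$ and $\sigma^{-1}h\,\sigma=g$.

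First I would verify these exchange identities on the $\mathrm{su}(2)_1^n$ current subalgebra. Reading off the actions of $g$, $h$ on the currents together with \eqref{eq:ac_triali}, one computes $\mathrm{Ad}_\sigma\circ\mathrm{Ad}_g\circ\mathrm{Ad}_\sigma:\,J_1^i\mapsto -J_1^i,\ J_2^i\mapsto -J_2^i,\ J_3^i\mapsto J_3^i$ (using $\mathrm{Ad}_{\sigma^{-1}}=\mathrm{Ad}_\sigma$ on currents, as squaring the swap is trivial there), which is exactly $\mathrm{Ad}_h$, and symmetrically $\mathrm{Ad}_\sigma\circ\mathrm{Ad}_h\circ\mathrm{Ad}_\sigma=\mathrm{Ad}_g$; equivalently, a state built from currents on $\ket{0}$ is $g$- and $h$-even precisely when $\#\{J_2\}+\#\{J_3\}$ and $\#\{J_1\}+\#\{J_2\}$ are both even, and the swap $J_1\leftrightarrow J_3$ (with $J_2\mapsto -J_2$) interchanges these two parities while preserving their conjunction. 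Granting the operator identities, claim (i) is automatic: each generator $O$ commutes with $g$ and $h$ (its momenta are $\delta$-even and its vertex-operator pieces are reflection-symmetric), so $g\,\sigma O\sigma^{-1}=\sigma h\,O\sigma^{-1}=\sigma O\,h\sigma^{-1}=\sigma O\sigma^{-1}\,g$, and likewise for $h$; thus $\sigma O\sigma^{-1}$ is again $(g,h)$-invariant and preserves $V_{\delta\text{-even}}^+$. Claim (ii) is equally immediate, since for $\ket{v}\in V_{\delta\text{-even}}^+$ one has $g\,\sigma\ket{v}=\sigma h\ket{v}=\sigma\ket{v}$ and $h\,\sigma\ket{v}=\sigma g\ket{v}=\sigma\ket{v}$; in particular $\sigma\ket{0}=\ket{0}$ because $J_0^{ai}\ket{0}=0$.

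The substantive work, and the main obstacle, is to promote $\sigma^{-1}g\,\sigma=h$ and $\sigma^{-1}h\,\sigma=g$ from the current subalgebra to the momentum sectors that the currents do not generate — the codeword directions carried by $V_{\zeta_c}$ and the seeds $\ket{\lambda}_+$ with $\lambda\in(C_2+2\BZ_-^n)/\sqrt{2}$. On these sectors $\sigma$ acts as the genuine $\mathrm{SU}(2)^n$ group element implementing the $\pi$-rotation that swaps the first and third axes, so it mixes momentum eigenstates nontrivially: on a spin-$\tfrac12$ factor it carries a $J_3^i$-eigenstate to a superposition, and on higher-weight momenta one must control the full rotation matrix. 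I would compute $\sigma\ket{\lambda}$ explicitly, organize the result into $\mathrm{SU}(2)^n$ multiplets, and check term by term that the reflection $g$ and the phase action of $h$ are exchanged — the delicate point being to track the cocycle factors (in the gauge $\epsilon(\sqrt{2}e_i,-\sqrt{2}e_i)=-1$ fixed above) together with the rotation phases, so that they conspire to land $\sigma\ket{\lambda}_+$ in the common $(g,h)$-fixed subspace rather than merely somewhere in its $\mathrm{SU}(2)^n$ orbit.
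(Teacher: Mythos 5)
Your overall architecture matches the paper's: decompose $V_{\delta\text{-even}}^+$ into states built by the generators $:\partial X^i\partial X^j:$, $V_{ij}^\pm$, $V_{\zeta_c}$ acting on the seeds $\ket{0}$ and $\ket{\lambda}_+$, import the statement about conjugated generators from Dolan--Goddard--Montague, and reduce everything to showing that $\sigma$ sends the new fermionic seeds $\ket{\lambda}_+$, $\lambda\in(C_2+2\BZ_-^n)/\sqrt{2}$, back into $V_{\delta\text{-even}}^+$. But that last step is exactly the content of the proposition, and your proposal does not carry it out: you explicitly flag the extension of $\sigma^{-1}g\sigma=h$, $\sigma^{-1}h\sigma=g$ beyond the current subalgebra as ``the substantive work, and the main obstacle,'' and then only say you ``would compute $\sigma\ket{\lambda}$ explicitly \dots and check term by term.'' The paper's proof consists precisely of that computation: it writes $\sigma^i$ as a $2\times 2$ matrix $\tfrac{\i}{\sqrt2}\bigl(\begin{smallmatrix}\pm1 & \i\epsilon\\ -\i\epsilon & \mp1\end{smallmatrix}\bigr)$ on the doublet $(V_\lambda,V_{\lambda'})$ in each nonzero direction, takes the product over directions to expand $\sigma V_{\pm\lambda}\sigma^{-1}$ over sign-flips $\lambda'\in\Delta(\lambda)$, and then uses the cocycle identity $\eta(\lambda,\lambda')=(-1)^{n(\lambda,\lambda')}\eta(\lambda,-\lambda')$ together with $n(\lambda,\lambda')+n(\lambda,-\lambda')=4\ell+2$ to show that only even sign-flips survive in $\sigma(V_\lambda+V_{-\lambda})\sigma^{-1}$, so the result is a combination of the $(g,h)$-invariant operators $V_{\lambda'}+V_{-\lambda'}$. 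Without this cancellation the claim fails --- the individual $\sigma V_{\lambda}\sigma^{-1}$ contains odd sign-flip terms that are \emph{not} reflection-even --- so the step you defer is not routine bookkeeping but the actual proof.

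There is also a circularity in your claims (i) and (ii) as written. Both are derived from the global operator identities $g\sigma=\sigma h$ and $h\sigma=\sigma g$, which you verify only on the $\mathrm{su}(2)_1^n$ currents; the codeword momentum states $\ket{\lambda}$ with $\lambda\in(C+2\BZ^n)/\sqrt{2}$ are not generated by the currents, so the current-algebra check does not imply the identities on the full $\CH_{\mathrm{NS}}$. In particular your one-line argument for (i), $g\,\sigma O\sigma^{-1}=\sigma h O\sigma^{-1}=\dots$, already invokes the unproven global identity, whereas the paper avoids this by citing Proposition 7.2 of Dolan et al.\ for the conjugated generators and proving the seed statement directly. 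If you could establish $\sigma^{-1}g\sigma=h$ on all of $\CH_{\mathrm{NS}}$ (with the cocycle gauge fixed as in the paper), the proposition would indeed follow in one line since $V_{\delta\text{-even}}^+$ is by definition the common fixed subspace --- that would be an elegant strengthening --- but proving that identity requires the same explicit computation on momentum states that the paper performs, so as it stands the proposal has a genuine gap at its central step.
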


\begin{proof}
    The space $V_{\delta\text{-even}}^+$ is divided into two parts $V_{\delta\text{-even}}^+ = V_{\delta\text{-even}}^{+,\,\mathrm{bos}} \oplus V_{\delta\text{-even}}^{+,\,\mathrm{ferm}}$. The bosonic sector $V_{\delta\text{-even}}^{+,\,\mathrm{bos}}$ is
    generated by acting $:\partial X^i(z)\partial X^j(z):$\,, $V_{ij}^\pm(z)$ and $V_{\zeta_c}(z)$ ($c\in C_0$) on the vacuum $\ket{0}$.
    The fermionic sector $V_{\delta\text{-even}}^{+,\,\mathrm{ferm}}$ is generated by acting the same operators on a state $\ket{\lambda}_+$ where $\lambda \in (C_2+2\BZ^n_-)/\sqrt{2}$.
    For the bosonic sector, the proposition has already been shown in~\cite{Dolan:1994st}.
    The only difference with our case is the presence of the fermionic sector $V_{\delta\text{-even}}^{+,\,\mathrm{ferm}}$ in $V_{\delta\text{-even}}^+$.

    The strategy of proof is to show that the operators $\sigma :\partial X^i(z)\partial X^j(z): \sigma^{-1}$, $\sigma \,V_{ij}^\pm(z)\,\sigma^{-1}$ and $\sigma \,V_{\zeta_c}(z)\, \sigma^{-1}$ map $V_{\delta\text{-even}}^+$ to itself and $\sigma \ket{\lambda}_+ \in V_{\delta\text{-even}}^+$.
    Once these are true, using $\sigma\ket{0} = \ket{0}$, we can conclude that $\sigma$ maps $V_{\delta\text{-even}}^+$ to itself.
    From Proposition 7.2 in~\cite{Dolan:1994st}, we know that $\sigma :\partial X^i(z)\partial X^j(z): \sigma^{-1}$, $\sigma \,V_{ij}^\pm(z)\,\sigma^{-1}$ and $\sigma \,V_{\zeta_c}(z)\, \sigma^{-1}$ are operators that map $V_{\delta\text{-even}}^+$ to itself.
    Thus, we only need to show that $\sigma \ket{\lambda}_+ \in V_{\delta\text{-even}}^+$ for our proof.

    Below, we show that $\sigma \ket{\lambda}_+ \in V_{\delta\text{-even}}^+$. This statement is equivalent to that the operator $\sigma \, (V_{\lambda}(z)+V_{-\lambda}(z))\, \sigma^{-1}$ maps $V_{\delta\text{-even}}^+$ to itself since $\sigma\ket{0}=\ket{0}$.
    Without loss of generality, we assume an element $\lambda \in (C_2+2\BZ^n_-)/\sqrt{2}$ by
    \begin{align}
        \lambda = \frac{1}{\sqrt{2}}\left(1,\cdots,1,-1,\cdots,-1,\,0,\cdots,0\right)\in \frac{C_2+2\BZ_-^n}{\sqrt{2}} \,,
    \end{align}
    where the first $2\ell +1$ components are $1$, the next $2\ell+1$ components are $-1$, and the others are $0$.
    Setting $\lambda'= \lambda-\sqrt{2}e_i$ for $i=1,2,\cdots2\ell+1$ and $\lambda' = \lambda+\sqrt{2}e_i$ for $i=2\ell+2,\cdots,4\ell+2$, we have
    \begin{align}
    \begin{aligned}
        [J_0^{1i},V_\lambda(z)] = \frac{\i}{2}\,\epsilon(\mp\sqrt{2}e_i,\lambda)\,V_{\lambda'}(z)\,, \quad & 
        [J_0^{1i},V_{\lambda'}(z)] =\frac{\i}{2}\,\epsilon(\pm\sqrt{2}e_i,\lambda')\,V_{\lambda}(z)\,,\\
        [J_0^{3i},V_\lambda(z)] = \pm\frac{1}{2}V_{\lambda}(z)\,,\qquad\qquad\qquad &
        [J_0^{3i},V_{\lambda'}(z)] = \mp\frac{1}{2}V_{\lambda'}(z)\,,
    \end{aligned}
    \end{align}
    where the upper and lower signs are for $i=1,2,\cdots2\ell+1$ and $i=2\ell+2,\cdots,4\ell+2$, respectively.
    Since we are using the gauge $\epsilon(\sqrt{2}e_i,-\sqrt{2}e_i)=\epsilon (-\sqrt{2}e_i,\sqrt{2}e_i) =-1$, the cocycle condition~\eqref{eq:cocycle} tells us $\epsilon(-\sqrt{2}e_i,\lambda) = -\epsilon(\sqrt{2}e_i,\lambda-\sqrt{2}e_i)$ and $\epsilon(\sqrt{2}e_i,\lambda) = -\epsilon(-\sqrt{2}e_i,\lambda+\sqrt{2}e_i)$. Thus, we have
    \begin{align}
    \begin{aligned}
        \sigma^i\, \begin{pmatrix}
            V_\lambda(z)\\
            V_{\lambda'(z)}
        \end{pmatrix} \,(\sigma^i)^{-1} &= \exp\left\{\frac{\i\pi}{2\sqrt{2}}
        \begin{pmatrix}
            \pm1 & \i\,\epsilon(\mp\sqrt{2}e_i,\lambda)\\
            -\i\,\epsilon(\mp\sqrt{2}e_i,\lambda) & \mp1
        \end{pmatrix}
        \right\}
        \begin{pmatrix}
            V_\lambda(z)\\
            V_{\lambda'(z)}
        \end{pmatrix}\,,\\
        &= \frac{\i}{\sqrt{2}}\begin{pmatrix}
            \pm1 & \i\,\epsilon(\mp\sqrt{2}e_i,\lambda)\\
            -\i\,\epsilon(\mp\sqrt{2}e_i,\lambda) & \mp1
        \end{pmatrix}
        \begin{pmatrix}
            V_\lambda(z)\\
            V_{\lambda'(z)}
        \end{pmatrix}\,.
    \end{aligned}
    \end{align}
    By taking the products with respect to $i$, we obtain
    \begin{align}
        \sigma \,V_{\lambda}(z)\, \sigma^{-1} = -2^{-|\lambda|^2} \sum_{\lambda'\in \Delta(\lambda)} \i^{n(\lambda,\lambda')} \,(-1)^{n_R(\lambda,\lambda')}\, \eta(\lambda,\lambda')\, V_{\lambda'}(z)\,,
    \end{align}
    where $\Delta(\lambda)$ is the set of vectors obtained by the sign-flip of the components such that $\lambda_j=\pm1$.
    We denote by $n(\lambda,\lambda')$ the number of components such that $\lambda_j\neq \lambda_j'$.
    We define by $\CI_L$ and $\CI_R$ the set of sign-flipped components from $\lambda$ to $\lambda'$ in $j=1,2,\cdots,2\ell+1$ and $j=2\ell+2,\cdots,4\ell+2$, respectively. 
    Here, $n_R (\lambda,\lambda')= |\CI_R|$ and
    \begin{align}
        \eta(\lambda,\lambda') = \prod_{\CI_L}\epsilon(-\sqrt{2}e_i,\lambda)\,\prod_{\CI_R}\epsilon(\sqrt{2}e_i,\lambda)\,.
    \end{align}
    Similarly, we have
    \begin{align}
        \sigma \,V_{-\lambda}(z)\, \sigma^{-1} = -2^{-{|\lambda|^2}} \sum_{\lambda'\in \Delta(\lambda)} (-\i)^{n(\lambda,\lambda')} \,(-1)^{n_R(\lambda,\lambda')}\, \eta(-\lambda,-\lambda')\, V_{-\lambda'}(z)\,,
    \end{align}
    where
    \begin{align}
        \eta(-\lambda,-\lambda') = \prod_{\CI_L}\epsilon(\sqrt{2}e_i,-\lambda)\,\prod_{\CI_R}\epsilon(-\sqrt{2}e_i,-\lambda)\,.
    \end{align}
    Now we would like to see $\sigma \, (V_{\lambda}(z)+V_{-\lambda}(z))\, \sigma^{-1}$ by taking the sum of the above two equations.
    Note that $n(\lambda,\lambda') + n(\lambda,-\lambda') = 4\ell+2$ and $n_R(\lambda,\lambda') + n_R(\lambda,-\lambda') = 2\ell + 1$. Also, we have $\eta(\lambda,\lambda') = (-1)^{n(\lambda,\lambda')}\,\eta(\lambda,-\lambda')$ because of the complex conjugation $\epsilon(\lambda,\mu)^* = \epsilon(-\mu,-\lambda)\in \{\pm1\}$ and \eqref{eq:com_cocy}.
    Finally, we obtain
    \begin{align}
        \sigma \, (V_{\lambda}(z)+V_{-\lambda}(z))\, \sigma^{-1} = -2^{-|\lambda|^2} \sum_{\lambda'\in \Delta_+(\lambda)} \i^{n(\lambda,\lambda')} \,(-1)^{n_R(\lambda,\lambda')}\, \eta(\lambda,\lambda')\, (V_{\lambda'}(z)+V_{-\lambda'}(z))\,.
    \end{align}
    Here, $\Delta_+(\lambda)$ is the subset of $\Delta(\lambda)$ satisfying $n(\lambda,\lambda')\in 2\BZ$.
    The operators appearing in the right-hand side are operators that map $V_{\delta\text{-even}}^+$ to itself, which concludes the proof.
 \end{proof}

In the rest of this subsection, we give a proof of Proposition~\ref{prop:equivalence}. For this purpose, we consider a CFT with weight-one operators $P^i(z)$ for $i=1,2,\cdots,n$ whose mode expansion is
\begin{align}
    P^i(z) = \sum_{m\in\BZ} \frac{\alpha^i_m}{z^{m+1}}\,,
\end{align}
where the oscillators $\alpha^i_m$ satisfy the commutation relations $[\alpha^i_m,\alpha^j_n] = m\delta^{ij}\delta_{m+n,0}$.
Then, we can introduce the following proposition, which has been shown in~\cite{Dolan:1994st}.

\begin{proposition}[{Proposition~6.4 in \cite{Dolan:1994st}}]
    Let $\CT$ be a conformal field theory with central charge $n$, whose simultaneous eigenvalues of $p^i := \alpha_0^i$ form an integral lattice $\Lambda$ such that $\mathrm{rank}(\Lambda) = n$.
    Then, the CFT $\CT$ is isomorphic to the lattice CFT based on the lattice $\Lambda$.
    \label{prop:eigenvalue_lattice}
\end{proposition}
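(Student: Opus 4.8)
The plan is to reconstruct the lattice CFT $\CT_\Lambda$ associated with $\Lambda$ (built as in Section~\ref{sec:lattice_CFT}) directly from the data of $\CT$. First I would pin down the stress tensor using the central charge. The $n$ currents $P^i(z)$ with $[\alpha^i_m,\alpha^j_n]=m\,\delta^{ij}\delta_{m+n,0}$ generate a rank-$n$ Heisenberg algebra whose free-field stress tensor $T_0(z)=\tfrac12\sum_{i=1}^n :P^i(z)P^i(z):$ already carries central charge $n$. Writing $K_m=L_m-L^{0}_m$, the modes $K_m$ form a Virasoro algebra of central charge $c-n=0$ commuting with every $\alpha^i_m$, since the $P^i$ are weight-one primaries for both $T$ and $T_0$. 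In a theory with a unique vacuum and positive-definite inner product the only $c=0$ representation is trivial, so $K_m=0$ and $T=T_0$. Hence $L_0=\tfrac12\sum_i (p^i)^2+N_{\mathrm{osc}}$, and the minimal conformal weight in the joint $p^i$-eigenspace $\CH_\lambda$ with $\lambda\in\Lambda$ equals $\tfrac12\lambda^2$, attained exactly on the Heisenberg highest-weight vectors.

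Next I would build the vertex operators. Decomposing $\CH_\CT=\bigoplus_{\lambda\in\Lambda}\CH_\lambda$ along the spectrum of the $p^i$, each $\CH_\lambda$ is a module over the Heisenberg algebra, hence a sum of Fock modules on ground states of weight $\tfrac12\lambda^2$. For such a ground state $\ket{\lambda}$ the state--operator correspondence yields a field $V_\lambda(z)$ whose operator product with the currents is forced to be $P^i(z)\,V_\lambda(w)\sim \tfrac{\lambda^i}{z-w}\,V_\lambda(w)$, precisely the defining relation of $:e^{\i\lambda\cdot X(z)}:$. This fixes $V_\lambda$ up to normalisation and shows that $V_\lambda(z)V_\mu(w)$ must close on $V_{\lambda+\mu}(w)$ with some structure constant $\epsilon(\lambda,\mu)$; imposing mutual locality and associativity of the operator product then reproduces the cocycle conditions~\eqref{eq:com_cocy}--\eqref{eq:cocycle} that define $\CT_\Lambda$. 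As any two solutions of those conditions give isomorphic algebras, the assignment $\alpha^i_m\mapsto\alpha^i_m$ and $\ket{\lambda}\mapsto\ket{\lambda}$ extends to a vertex-algebra homomorphism of $\CT$ onto $\CT_\Lambda$.

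I expect the main obstacle to be completeness: one must exclude hidden ground-state degeneracy in each $\CH_\lambda$, so that $\CH_\lambda$ is a single Fock module $F_\lambda$, and confirm that no states outside $\bigoplus_\lambda F_\lambda$ survive, so that the homomorphism above is an isomorphism rather than a proper embedding. The cleanest way to close this gap is a character comparison. Once $T=T_0$ is established, $\Tr_{\CH_\lambda}[q^{L_0-n/24}]$ is bounded below by the single-Fock character $q^{\lambda^2/2-n/24}/\prod_{m\geq1}(1-q^m)^n$, and matching the total against the modular-covariant partition functions of the chiral theory $\CT$, whose spin-structure blocks are rigidly constrained as in Section~\ref{sec:lattice_CFT}, forces this bound to be saturated for every $\lambda\in\Lambda$. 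This leaves no room for additional states and gives the isomorphism $\CT\cong\CT_\Lambda$.
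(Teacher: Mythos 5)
First, a point of comparison: the paper does not actually prove this statement --- it is imported verbatim as Proposition~6.4 of \cite{Dolan:1994st} and used as a black box --- so your attempt has to be measured against the argument in that reference rather than against anything in the text. Your first two steps are sound and essentially reproduce that argument. The commutant Virasoro $K_m=L_m-L^0_m$ commutes with all $\alpha^i_k$ because the weight-one fields $P^i$ are automatically primary in a unitary theory, it has central charge $c-n=0$, and positivity ($\lVert K_{-2}\ket{0}\rVert^2=c_K/2=0$) forces $T=T_0$; after that every state sits in a positive-energy Fock module, and the momentum-$\lambda$ ground states produce vertex operators whose OPEs with $P^i$ and with each other reproduce the cocycle relations \eqref{eq:com_cocy}--\eqref{eq:cocycle}, with the cocycle unique up to coboundary.

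The genuine gap is in your completeness step, which you correctly identify as the crux but then close by the wrong mechanism. You propose to exclude ground-state multiplicity $d_\lambda>1$ in $\CH_\lambda$ by matching $\Tr\,q^{L_0-n/24}$ against a modular-covariant partition function. But the proposition assumes only that $\Lambda$ is \emph{integral} of full rank, not self-dual, and makes no assumption about the modular behavior of $Z_\CT$; for integral non-self-dual $\Lambda$ the target character $\Theta_\Lambda/\eta^n$ is not $S$-covariant, so there is no ``rigid constraint'' to saturate, and even in the self-dual case you would be assuming the conclusion $Z_\CT=\Theta_\Lambda/\eta^n$ in order to derive it. The argument that actually closes the gap (and is the content of the cited proof) is algebraic: uniqueness of the vacuum gives $d_0=1$; positivity makes the leading coefficient of $V_\lambda(z)V_{-\lambda}(w)$ --- the Gram matrix of the momentum-$\lambda$ ground states --- non-degenerate, so the zero-mode operators $\hat{\sigma}_\lambda e^{\i\lambda\cdot q}$ are invertible maps between the ground spaces of $\CH_\mu$ and $\CH_{\mu+\lambda}$; hence $d_\lambda=d_0=1$ for every $\lambda\in\Lambda$ and each $\CH_\lambda$ is a single Fock module. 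With that substitution your outline becomes a complete proof; as written, the last third does not follow from the stated hypotheses.
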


This proposition ensures that the CFT $\CT$, which is not a lattice CFT by definition, can be isomorphic to a lattice CFT, when the simultaneous eigenvalues of $\CT$ form an integral lattice.
Since the NS-NS partition function of a lattice CFT is invariant under modular $S$ transformation: $\tau\to-1/\tau$ only when the lattice is self-dual, a chiral fermionic CFT $\CT$, whose NS-NS partition function is modular $S$ invariant, should be isomorphic to a lattice CFT based on a self-dual lattice.

\begin{proof}[Proof of Proposition~\ref{prop:equivalence}]
    The proof is the same for the two types of orbifold $(\CT/G)_\pm$ and we denote them by $\CT/G$ below for simplicity.
    Let $\CT$ be a chiral fermionic CFT constructed from a singly-even self-dual code $C$.
    The shift orbifold $\CT/H$ has the Cartan subalgebra $\sqrt{2}J_0^{3i}$ for each direction $i=1,2,\cdots,n$, while the reflection orbifold $\CT/G$ has the Cartan subalgebra $\sqrt{2}J_0^{1i}$ for each direction $i=1,2,\cdots,n$.
    Both orbifold theories contain $V_{\delta\text{-even}}^+$ because this is fixed by reflection $g$ and shift $h$.
    Correspondingly, for the shift orbifold $\CT/H$, the lattice of simultaneous eigenvalues of $\sqrt{2} J_0^{3i}$ contains $\Lambda_{\delta\text{-even}}$.

    Under the action of the triality operator $\sigma$, the Cartan subalgebra $\sqrt{2}J_0^{3i}$ is mapped to $\sqrt{2}J_0^{1i}$. For an eigenstate $\ket{\lambda}_+ \in V_{\delta\text{-even}}^+$ of $\sqrt{2}J_0^{3i}$, the triality operator maps it to an eigenstate $\sigma \ket{\lambda}_+ \in V_{\delta\text{-even}}^+$ of $\sqrt{2}J_0^{1i}$ from Proposition~\ref{prop:triality_op}.
    Since the reflection orbifold includes the space $V_{\delta\text{-even}}^+$, the lattice of simultaneous eigenvalues of $\sqrt{2}J_0^{1i}$ in the reflection orbifold $\CT/G$ contains $\Lambda_{\delta\text{-even}}$ as a sublattice.
    Note that we have
    \begin{align}
        (\Lambda_{\delta\text{-even}})^* = \Lambda_{\delta\text{-even}} \sqcup \Lambda_{\delta\text{-odd}} \sqcup (\Lambda_{\delta\text{-even}}+\tfrac{\delta}{2}) \sqcup (\Lambda_{\delta\text{-odd}}+\tfrac{\delta}{2})\,.
    \end{align}
    Thus, the lattice of simultaneous eigenvalues of the reflection orbifold $\CT/G$, which are a self-dual lattice containing $\Lambda_{\delta\text{-even}}$, is restricted to $\Lambda(C)$ and $\Lambda_{\delta}^{\text{orb}}:=\Lambda_{\delta}^{\text{orb}+} \cong \Lambda_{\delta}^{\text{orb}-}$.
    In the rest of proof, we identify which lattice gives simultaneous eigenvalues in the reflection orbifold $\CT/G$.

    One can distinguish the two lattices $\Lambda(C)$ and $\Lambda_{\delta}^{\text{orb}}$ by the number of the weight-one operators $|V_1(\Lambda(C))|$, $|V_1(\Lambda_{\delta}^{\text{orb}})|$ where we denote by $V_1(\Lambda)$ the space of weight-one operators in the lattice CFT based on $\Lambda$.
    For the lattice CFT based on $\Lambda(C)$, we have \begin{align}
        |V_1(\Lambda(C))| = 3n + 16|C_4|\,, \quad n\in8\BZ\,,
    \end{align}
    where $|C_4|$ is the number of codewords $c$ with $\mathrm{wt}(c) = 4$.
    Here, $3n$ comes from the generators of $\mathrm{SU}(2)^n$ symmetry and $16|C_4|$ comes from the vertex operators $:\exp\{\i \lambda \cdot X(z)\}:$ where $\lambda = ((\pm1)^4,0^{n-4})/\sqrt{2}$.
    On the other hand, the corresponding number for the lattice CFT based on $\Lambda_{\delta}^{\text{orb}}$ is 
    \begin{align}
        |V_1(\Lambda_{\delta}^{\text{orb}})| = n+8|C_4|\,,\quad n\in 8\BZ_{\geq3}\,,
    \end{align}
    since after orbifolding, $J_1^i(z)$ and $J_2^i(z)$ are projected out and only the half of vertex operators $:\exp\{\i \lambda \cdot X(z)\}:$ where $\lambda = ((\pm1)^4,0^{n-4})/\sqrt{2}$ survive.
    This shows that $|V_1(\Lambda_{\delta}^{\text{orb}})|$ is strictly smaller than $|V_1(\Lambda(C))|$ for $n\geq24$, and we can distinguish the two lattices by their spectra, i.e., the partition functions.
    We can observe that the NS-NS partition function of the lattice CFT with $\Lambda_{\delta}^{\text{orb}}$ \eqref{eq:nsns-shift} is the same as that of the reflection orbifold \eqref{eq:nsns-refl}.
    Thus, the simultaneous eigenvalues of the reflection orbifold $\CT/G$ form the lattice $\Lambda_{\delta}^{\text{orb}}$.
    From Proposition~\ref{prop:eigenvalue_lattice}, we can conclude that, for $n\geq24$, the reflection orbifold $\CT/G$ is isomorphic to the lattice CFT based on $\Lambda_{\delta}^{\text{orb}}$, which is the shift orbifold $\CT/H$.

    To complete the proof, we separately consider $n=8$ and $n=16$.
    For $n=8$, the odd self-dual lattice is unique: $\Lambda(C) = \Lambda_{\delta}^{\text{orb}} = \BZ^8$. Thus, the simultaneous eigenvalues of the reflection orbifold form the lattice $\BZ^8$ and the equivalence is trivial.
    Next, consider $n=16$ (see Fig.~\ref{fig:n16} for the result). 
    Compared with $n\geq 24$, the weight-one operators for $\Lambda_{\delta}^{\text{orb}}$ additionally come from the vertex operators $:\exp\{\i \lambda \cdot X(z)\}:$ where $\lambda = (\pm1,\pm1,\cdots,\pm1)/2\sqrt{2}$ such that $\delta\cdot \lambda\in2\BZ$.
    The number of such operators is $|C_0|$: the total number of doubly-even codewords.
    Thus, the total number of the weight-one operators in the lattice CFT based on $\Lambda_{\delta}^{\text{orb}}$ is 
    \begin{equation}
        |V_1(\Lambda_{\delta}^{\text{orb}})| = n+8|C_4| + |C_0|\,, \quad n=16\,.
    \end{equation}
    When $3n+16|C_4|$ for $\Lambda(C)$ and $n+8|C_4| + |C_0|$ for $\Lambda_{\delta}^{\text{orb}}$ are different, we can straightforwardly show the equivalence between the reflection and shift orbifolds from the same argument as $n\geq24$.
    There are five singly-even self-dual codes of length 16. Four of them satisfy $3n+16|C_4|\neq n+8|C_4| + |C_0|$.
    For the last one, $3n+16|C_4| = n+8|C_4| + |C_0|$, and $\Lambda(C)$ and $\Lambda_{\delta}^{\text{orb}}$ cannot be distinguished from the number of weight-one operators.
    However, we can show that $\Lambda(C) = \Lambda_{\delta}^{\text{orb}} = (D_8^2)^+$ where $ (D_8^2)^+$ is the extremal odd self-dual lattice with minimum norm $2$.
    Thus, the reflection orbifold $\CT/G$ is isomorphic to the shift orbifold $\CT/H$ for $n=8,16$.
\end{proof}

\section{Shift orbifold for nonbinary codes}
\label{sec:nonbinary}

In this section, we give the expression of the NS-NS partition function $Z_{(\CT/H)_\pm}[0,0]$ and the R-R partition function $Z_{(\CT/H)_\pm}[1,1]$ of the shift orbifold theory constructed from a code over $\BZ_k$, using the weight enumerator of the code. The other partition functions can be obtained in a similar method or by the modular transformation from $Z_{(\CT/H)_\pm}[0,0]$.

Let $C\subset\BZ_k^n$ be a Type I code with even $k$ or a self-dual code with odd $k$, $\CT$ a CFT constructed from the lattice $\Lambda:=\Lambda(C)$, and $H$ a shift $\BZ_2$ symmetry generated by $h:X\to X+\pi\delta$ where $\delta\in\Lambda(C)$ will be specified later. From the construction, any vector in $\Lambda(C)$ can be written as
\begin{equation}
    \lambda = \frac{1}{\sqrt{k}} (c+km) \,, \quad c\in C \,, \quad m\in\BZ^n \,.
\end{equation}
For $\delta\in\Lambda(C)$, we define
\begin{gather}
    C_{\delta\text{-even}} = \left\{ c\in C \;\middle|\; \frac{1}{\sqrt{k}}c \cdot \delta \in 2\BZ \right\} \,,\quad
    C_{\delta\text{-odd}} = \left\{ c\in C \;\middle|\; \frac{1}{\sqrt{k}}c \cdot \delta \in 2\BZ+1 \right\} \,, \\
    \Gamma_{\delta\text{-even}} = \left\{ \sqrt{k}m\in\sqrt{k}\BZ^n \;\middle|\; \sqrt{k}m \cdot \delta \in 2\BZ \right\} \,,\quad
    \Gamma_{\delta\text{-odd}} = \left\{ \sqrt{k}m\in\sqrt{k}\BZ^n \;\middle|\; \sqrt{k}m \cdot \delta \in 2\BZ+1 \right\} \,,
\end{gather}
which satisfy $C = C_{\delta\text{-even}} \sqcup C_{\delta\text{-odd}}$ and $\sqrt{k}\BZ^n = \Gamma_{\delta\text{-even}} \sqcup \Gamma_{\delta\text{-odd}}$.
Then, $\Lambda_{\delta\text{-even/odd}}$ can be written as
\begin{align}
\begin{aligned}
    \Lambda_{\delta\text{-even}} &= \left( \frac{1}{\sqrt{k}} C_{\delta\text{-even}} + \Gamma_{\delta\text{-even}} \right) \sqcup \left( \frac{1}{\sqrt{k}} C_{\delta\text{-odd}} + \Gamma_{\delta\text{-odd}} \right) \,, \\
    \Lambda_{\delta\text{-odd}} &= \left( \frac{1}{\sqrt{k}} C_{\delta\text{-even}} + \Gamma_{\delta\text{-odd}} \right) \sqcup \left( \frac{1}{\sqrt{k}} C_{\delta\text{-odd}} + \Gamma_{\delta\text{-even}} \right) \,.
\end{aligned}
\end{align}
When we write
\begin{equation}
    \Gamma_{\delta\text{-even}}' = \Gamma_{\delta\text{-even}} + \frac{1}{2}\delta \,,\quad
    \Gamma_{\delta\text{-odd}}' = \Gamma_{\delta\text{-odd}} + \frac{1}{2}\delta \,,
\end{equation}
the momentum lattices of the orbifold theories are
\begin{align}
    \Lambda_{\delta}^{\text{orb}+} &= 
    \left( \frac{1}{\sqrt{k}} C_{\delta\text{-even}} + \left( \Gamma_{\delta\text{-even}} \sqcup \Gamma_{\delta\text{-even}}' \right) \right) \sqcup \left( \frac{1}{\sqrt{k}} C_{\delta\text{-odd}} + \left( \Gamma_{\delta\text{-odd}} \sqcup \Gamma_{\delta\text{-odd}}' \right) \right) \label{eq:orb_lat+_from_code} \,, \\
    \Lambda_{\delta}^{\text{orb}-} &= 
    \left( \frac{1}{\sqrt{k}} C_{\delta\text{-even}} + \left( \Gamma_{\delta\text{-even}} \sqcup \Gamma_{\delta\text{-odd}}' \right) \right) \sqcup \left( \frac{1}{\sqrt{k}} C_{\delta\text{-odd}} + \left( \Gamma_{\delta\text{-odd}} \sqcup \Gamma_{\delta\text{-even}}' \right) \right) \label{eq:orb_lat-_from_code} \,.
\end{align}

Throughout this section, we do not care about the ambiguity of the fermion parity $(-1)^F$, i.e. the choice of the characteristic vector $\chi$, since it only causes the sign in the R-R partition function $Z_{(\CT/H)_\pm}[1,1]$ for our results.

\subsection{$k$ : even}

For a more detailed analysis, it is convenient to choose $\delta$ that does not depend on $C$. The following proposition for codes is useful for this purpose.

\begin{proposition}
    Every self-dual code $C\subset \BZ_{2m}^n$ contains an all-$m$ vector.
\end{proposition}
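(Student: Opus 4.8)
The plan is to exploit self-duality directly rather than to construct the all-$m$ vector by hand. Write $u=(m,m,\ldots,m)\in\BZ_{2m}^n$ for the all-$m$ vector. Since $C$ is self-dual, $C=C^\perp$, so the membership $u\in C$ is \emph{equivalent} to $u\in C^\perp$, i.e.\ to the single family of congruences $u\cdot c\equiv 0\pmod{2m}$ for every $c\in C$. The whole problem thus reduces to a congruence check, and I never need to exhibit $u$ as a concrete combination of generators.

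First I would compute $u\cdot c=\sum_{i=1}^n m\,c_i=m\sum_i c_i$. For this to vanish in $\BZ_{2m}$ it is necessary and sufficient that $\sum_i c_i$ be even. So the claim is equivalent to the parity statement: every codeword of a self-dual $\BZ_{2m}$-code has even coordinate sum.

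Next I would extract that parity from self-orthogonality. Because $c\in C=C^\perp$, in particular $c\cdot c=\sum_i c_i^2\equiv 0\pmod{2m}$, hence $\sum_i c_i^2$ is even. Then I use the elementary identity $c_i^2\equiv c_i\pmod 2$ (valid for any integer lift of $c_i$, since $c_i^2-c_i=c_i(c_i-1)$ is even). Summing gives $\sum_i c_i\equiv\sum_i c_i^2\equiv 0\pmod 2$, which is exactly the parity I need. Combining, $u\cdot c=m\sum_i c_i\equiv 0\pmod{2m}$ for all $c\in C$, so $u\in C^\perp=C$.

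I do not expect a serious obstacle here; the only points requiring care are the well-definedness of the mod-$2$ reductions, namely that $c_i\bmod 2$ and $c_i^2\bmod 2$ depend only on $c_i\in\BZ_{2m}$ (which holds because $2\mid 2m$), and that $c\cdot c=0$ in $\BZ_{2m}$ genuinely forces the integer $\sum_i c_i^2$ to be divisible by $2m$, and in particular even. Everything else is a one-line congruence, so the argument should be short and self-contained.
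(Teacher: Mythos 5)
Your argument is correct and is essentially the same as the paper's: both reduce membership of the all-$m$ vector in $C=C^\perp$ to the parity of $\sum_i c_i$, and obtain that parity from $c^2=\sum_i c_i^2\in 2m\BZ\subset 2\BZ$ via $c_i^2\equiv c_i\pmod 2$. Your write-up just spells out the mod-$2$ well-definedness details that the paper leaves implicit.
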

\begin{proof}
    For any $c\in C$,
    \begin{equation}
        c \cdot (m,\dots, m) \in 2m\BZ \quad\Leftrightarrow\quad
        \sum_{i=1}^n c_i \in 2\BZ \quad\Leftrightarrow\quad
        \sum_{i=1}^n c_i^2 = c^2 \in 2\BZ
    \end{equation}
    holds and $c^2\in2\BZ$ is always satisfied from the self-orthogonality. Thus, the all-$m$ vector is in $C^\perp=C$.
\end{proof}

In later discussions for even $k=2m$, we consider $\delta=\frac{1}{\sqrt{2m}}(m,\dots,m) \in \Lambda(C)$.

\subsubsection*{$k\in 4\BZ+2$}
When $k\in 4\BZ+2$, we can take $\delta=\frac{1}{\sqrt{k}}(\frac{k}{2},\dots,\frac{k}{2})$ only if $n\in 8\BZ$ because of the condition $\delta^2=\frac{1}{4}kn\in4\BZ$. For other conditions, $\delta$ is not a characteristic vector since $\delta\cdot\lambda\not\equiv\lambda^2 \!\mod 2$ for $\lambda=\sqrt{k}(1,0,\dots,0)\in\Lambda(C)$ and $\frac{\delta}{2} \notin \Lambda(C)$ is always satisfied from $\frac{\delta}{2} \notin \frac{1}{\sqrt{k}}\BZ^n$.

The NS-NS partition function of the orbifold theory can be written as
\begin{align}
    Z_{(\CT/H)_\pm}[0,0] &= \frac{1}{\eta(\tau)^n} \sum_{\lambda\in\Lambda(C)} \left( \frac{1+(-1)^{\lambda\cdot\delta}}{2} q^{\frac{1}{2}\lambda^2} + \frac{1\pm(-1)^{\lambda\cdot\delta}}{2} q^{\frac{1}{2}(\lambda+\frac{1}{2}\delta)^2} \right) \\
    &= \frac{1}{\eta(\tau)^n} \frac{1}{2} \left( W_C(\{g_{a,k}^{0,0}\}) + W_C(\{g_{a,k}^{1,0}\}) + W_C(\{g_{a,k}^{0,1}\}) \pm W_C(\{g_{a,k}^{1,1}\}) \right)
\end{align}
where
\begin{equation}
    g_{a,k}^{\alpha,\beta}(\tau) = \sum_{m\in\BZ} (-1)^{\frac{1}{2}\alpha(km+a)} q^{\frac{k}{2}\left(m+\frac{a}{k}+\beta\frac{1}{4}\right)^2} \,.
\end{equation}

From \eqref{eq:chi_orb} and \eqref{eq:S+_2}, the R-R partition function can be expressed as
\begin{align}
    &Z_{(\CT/H)_+}[1,1] \nonumber \\
    &= \frac{1}{\eta(\tau)^n} \sum_{\xi\in S(\Lambda)} \left( \frac{1\pm (-1)^{\xi\cdot\delta}}{2} (-1)^{\xi\cdot\chi^{\text{orb}+}}  q^{\frac{1}{2}\xi^2}
    + \frac{1\pm (-1)^{\xi\cdot\delta}}{2} (-1)^{(\xi+\frac{\delta}{2})\cdot\chi^{\text{orb}+}}  q^{\frac{1}{2}(\xi+\frac{\delta}{2})^2} \right) \\
    &= \frac{1}{\eta(\tau)^n} \sum_{\xi\in S(\Lambda)} \frac{1\pm (-1)^{\xi\cdot\delta}}{2} (-1)^{\xi\cdot\chi} \left( q^{\frac{1}{2}\xi^2} \pm q^{\frac{1}{2}(\xi+\frac{\delta}{2})^2} \right) \\
    &= \frac{1}{\eta(\tau)^n} \frac{1}{2} \left( \widetilde{W}_C(\{g_{a,k}^{0,0}\}) \pm \widetilde{W}_C(\{g_{a,k}^{1,0}\}) \pm \widetilde{W}_C(\{g_{a,k}^{0,1}\}) + \widetilde{W}_C(\{g_{a,k}^{1,1}\}) \right) \,,
\end{align}
where double signs are all $+$ when $n\in16\BZ$ and $-$ when $n\in16\BZ+8$.

Similarly,
\begin{equation}
    Z_{(\CT/H)_-}[1,1] = \frac{1}{\eta(\tau)^n} \frac{1}{2} \left( \widetilde{W}_C(\{g_{a,k}^{0,0}\}) \mp \widetilde{W}_C(\{g_{a,k}^{1,0}\}) \mp \widetilde{W}_C(\{g_{a,k}^{0,1}\}) - \widetilde{W}_C(\{g_{a,k}^{1,1}\}) \right) \,,
\end{equation}
where double signs are all $-$ when $n\in16\BZ$ and $+$ when $n\in16\BZ+8$.

From simple calculations, $g_{a,k}^{\alpha,\beta}$ has properties such as
\begin{equation}
    g_{a,k}^{0,0} = g_{-a,k}^{0,0} \,,\quad 
    g_{a,k}^{1,0} = (-1)^a g_{-a,k}^{1,0} \,,\quad 
    g_{a,k}^{0,1} = g_{-a-\frac{k}{2},k}^{0,1} \,,\quad 
    g_{a,k}^{1,1} = \i^{\frac{k}{2}} (-1)^a g_{-a-\frac{k}{2},k}^{1,1} \,.
\end{equation}

In particular, in the binary case $(k=2)$, the relations become
\begin{equation}
    g_{0,2}^{1,0} = \left(\theta_3\theta_4\right)^{\frac{1}{2}} \,,\quad
    g_{1,2}^{1,0} = 0 \,,\quad
    g_{0,2}^{0,1} = g_{1,2}^{0,1} = \left(\frac{\theta_2\theta_3}{2}\right)^{\frac{1}{2}} \,,\quad
    g_{0,2}^{1,1} = \i g_{1,2}^{1,1} = \left(\frac{\theta_2\theta_4}{2}\right)^{\frac{1}{2}} \,.
\end{equation}
Thus, the NS-NS partition function of the orbifold theory for a binary code is
\begin{align} \label{eq:binary_NSNS_relation}
    Z_{(\CT/H)_\pm}[0,0] &= \frac{1}{\eta(\tau)^n} \frac{1}{2} \left( \Theta_{\Lambda(C)} + \left(\theta_3\theta_4\right)^{\frac{n}{2}} + |C| \left(\frac{\theta_2\theta_3}{2}\right)^{\frac{n}{2}} \pm \left(|C_0|-|C_2|\right) \left(\frac{\theta_2\theta_4}{2}\right)^{\frac{n}{2}} \right) \nonumber \\
    &= \frac{1}{2} \left( Z_{\CT}[0,0] + \frac{(\theta_3\theta_4)^{\frac{n}{2}} + (\theta_2\theta_3)^{\frac{n}{2}}}{\eta(\tau)^n} \right) \,,
\end{align}
which is consistent with Proposition \ref{prop:theta_shift_lat}.
For the R-R partition function, the second term is $0$ since $\vec{0}\notin S(C)$ and the third and fourth terms are also $0$ since the contributions from $C_1$ and $C_3$ cancel each other out as shown using $\mathrm{wt}(s)=n/2=0 \pmod 4$, thus
\begin{equation} \label{eq:binary_RR_relation}
    Z_{(\CT/H)_\pm}[1,1] = \frac{1}{2} \,Z_{\CT}[1,1] \,.
\end{equation}

\subsubsection*{$k\in 4\BZ$}

When $k=4l,\, l\in\BZ$, $\Gamma_{\delta\text{-odd}}$ is empty and $\Gamma_{\delta\text{-even}}=2\sqrt{l}\BZ^n=:\Gamma$.

Let $C\subset\BZ_{4l}^n$ be a Type I code. We can take $\delta=\frac{1}{\sqrt{4l}}(2l,\dots,2l)$ only if $ln\in4\BZ$ and $(l,\dots,l)\notin (C \sqcup S(C))$ because of the gauging conditions for $\delta$. We assume these in the following.
In this case, since $\frac{\delta}{2}=\frac{1}{\sqrt{4l}}(l,\dots,l)\in\frac{1}{\sqrt{4l}}\BZ^n$, the effect of the shift $\delta/2$ can be included in the code. In fact, from \eqref{eq:orb_lat+_from_code} and \eqref{eq:orb_lat-_from_code}, we obtain
\begin{align}
\begin{aligned}
    \Lambda_{\delta}^{\text{orb}+} &= 
    \frac{1}{\sqrt{p}} C_{\delta\text{-even}} + \left( \Gamma \sqcup \Gamma' \right) = \Lambda(C^+) \,, \\
    \Lambda_{\delta}^{\text{orb}-} &= 
    \left( \frac{1}{\sqrt{p}} C_{\delta\text{-even}} + \Gamma \right) \sqcup \left( \frac{1}{\sqrt{p}} C_{\delta\text{-odd}} + \Gamma' \right) = \Lambda(C^-) \,,
\end{aligned}
\end{align}
where
\begin{align}
    \begin{aligned}
        C^+ &= C_{\delta\text{-even}} \sqcup (C_{\delta\text{-even}}+(l,\dots,l)) \,, \\
        C^- &= C_{\delta\text{-even}} \sqcup (C_{\delta\text{-odd}}+(l,\dots,l)) \,.
    \end{aligned}
\end{align}
Thus, the orbifold theories can be constructed from other self-dual codes. Note that we can show $C^\pm$ are Type I directly or from the odd self-duality of $\Lambda(C^\pm)$.

\begin{figure}[t]
    \centering
    \begin{tikzpicture}
        \node at (0,0) {$\Lambda(C)$};
        \node at (2,0) {$\Lambda(C^+)$};
        \node at (4,0) {$\Lambda(C^-)$};
        \node at (0,2) {$C$};
        \node at (2,2) {$C^+$};
        \node at (4,2) {$C^-$};
        \draw [->] (0,1.7) -- (0,0.3);
        \draw [->] (2,1.7) -- (2,0.3);
        \draw [->] (4,1.7) -- (4,0.3);
        \draw [->, snake it] (0.2,1.7) -- (1.8,0.3);
        \draw [->, snake it] (3.8,1.7) -- (2.2,0.3);
        \draw [->, dashed] (0.4,1.7) -- (3.8,0.3);
        \draw [->, dashed] (3.6,1.7) -- (0.2,0.3);
    \end{tikzpicture}
    \caption{The relation between codes $C,\, C^\pm$ over $\BZ_{4l}$ and their Construction A lattices. The arrows mean $C\rightarrow \Lambda(C)$, $C\rightsquigarrow \Lambda(C)_{\delta}^{\text{orb}+}$, and $C\dashrightarrow \Lambda(C)_{\delta}^{\text{orb}-}$ where $\delta=\frac{1}{\sqrt{4l}}(2l,\dots,2l)$. Note that we do not define the orbifold theory by $\delta$ from $C^+$ since $(l,\dots,l)\subset C^+$.}
    \label{fig:p4l}
\end{figure}
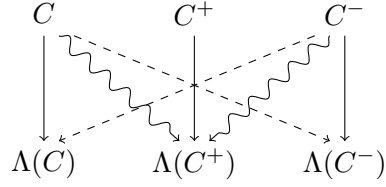

We want to express the weight enumerator polynomial of $C^\pm$ by that of $C$.

For any subset $D\subset\BZ_{4l}^n$ s.t. $c^2\in2\BZ$ for all $c\in D$, $D$ can be divided into two subsets
\begin{equation}
    D_{\delta\text{-even}} = \left\{ c\in D \;\middle|\; \sum_{i=1}^n c_i \in 4\BZ \right\} \,,\quad
    D_{\delta\text{-odd}} = \left\{ c\in D \;\middle|\; \sum_{i=1}^n c_i \in 4\BZ+2 \right\} \,,
\end{equation}
and then the corresponding weight enumerator polynomials are
\begin{equation}
    W_{D_{\delta\text{-even/odd}}}(\{x_a\}) = \frac{1}{2} \left( W_D(\{x_a\}) \pm W_D(\{\i^a x_a\}) \right) \,.
\end{equation}
Using this, the complete weight enumerator of $C^\pm$ can be written as
\begin{align}
    W_{C^\pm}(\{x_a\}) &= W_{C_{\delta\text{-even}}}(\{x_a\}) + W_{C_{\delta\text{-even/odd}}}(\{x_{a+l}\}) \\
    &= \frac{1}{2} \left( W_C(\{x_a\}) + W_C(\{\i^a x_a\}) + W_C(\{x_{a+l}\}) \pm W_C(\{\i^{a+l} x_{a+l}\}) \right) \,.
\end{align}
The NS-NS partition function $Z_{(\CT/H)_\pm}[0,0]$ is given by substituting the theta functions as \eqref{eq:NSNS_peven}.

To calculate the R-R partition function, we need to identify $C_i^\pm,\, i=0,1,2,3$ for $C^\pm$. The subsets $C_0^\pm, C_2^\pm$ of $C^\pm$ are defined by $c^2 \in 8l\BZ$ or $8l\BZ+4l$, thus
\begin{align}
    C^+_0 &= \begin{cases}
        (C_0)_{\delta\text{-even}} \sqcup  ((C_0)_{\delta\text{-even}} + (l,\dots,l)) & (ln\in8\BZ) \\
        (C_0)_{\delta\text{-even}} \sqcup  ((C_2)_{\delta\text{-even}} + (l,\dots,l)) & (ln\in8\BZ+4)
    \end{cases} \\
    C^+_2 &= \begin{cases}
        (C_2)_{\delta\text{-even}} \sqcup  ((C_2)_{\delta\text{-even}} + (l,\dots,l)) & (ln\in8\BZ) \\
        (C_2)_{\delta\text{-even}} \sqcup  ((C_0)_{\delta\text{-even}} + (l,\dots,l)) & (ln\in8\BZ+4)
    \end{cases} \\
    C^-_0 &= \begin{cases}
        (C_0)_{\delta\text{-even}} \sqcup  ((C_2)_{\delta\text{-odd}} + (l,\dots,l)) & (ln\in8\BZ) \\
        (C_0)_{\delta\text{-even}} \sqcup  ((C_0)_{\delta\text{-odd}} + (l,\dots,l)) & (ln\in8\BZ+4)
    \end{cases} \\
    C^-_2 &= \begin{cases}
        (C_2)_{\delta\text{-even}} \sqcup  ((C_0)_{\delta\text{-odd}} + (l,\dots,l)) & (ln\in8\BZ) \\
        (C_2)_{\delta\text{-even}} \sqcup  ((C_2)_{\delta\text{-odd}} + (l,\dots,l)) & (ln\in8\BZ+4)
    \end{cases}
\end{align}
When we define $C_1, C_3$ and $C_1^\pm, C_3^\pm$ by $s\in S(C) \,\cap\, S(C^\pm)$,
\begin{align}
    C^+_1 &= \begin{cases}
        (C_1)_{\delta\text{-even}} \sqcup  ((C_1)_{\delta\text{-even}} + (l,\dots,l)) & (ln\in8\BZ) \\
        (C_1)_{\delta\text{-odd}} \sqcup  ((C_3)_{\delta\text{-odd}} + (l,\dots,l)) & (ln\in8\BZ+4)
    \end{cases} \\
    C^+_3 &= \begin{cases}
        (C_3)_{\delta\text{-even}} \sqcup  ((C_3)_{\delta\text{-even}} + (l,\dots,l)) & (ln\in8\BZ) \\
        (C_3)_{\delta\text{-odd}} \sqcup  ((C_1)_{\delta\text{-odd}} + (l,\dots,l)) & (ln\in8\BZ+4)
    \end{cases} \\
    C^-_1 &= \begin{cases}
        (C_1)_{\delta\text{-odd}} \sqcup  ((C_3)_{\delta\text{-even}} + (l,\dots,l)) & (ln\in8\BZ) \\
        (C_1)_{\delta\text{-even}} \sqcup  ((C_1)_{\delta\text{-odd}} + (l,\dots,l)) & (ln\in8\BZ+4)
    \end{cases} \\
    C^-_3 &= \begin{cases}
        (C_3)_{\delta\text{-odd}} \sqcup  ((C_1)_{\delta\text{-even}} + (l,\dots,l)) & (ln\in8\BZ) \\
        (C_3)_{\delta\text{-even}} \sqcup  ((C_3)_{\delta\text{-odd}} + (l,\dots,l)) & (ln\in8\BZ+4)
    \end{cases}
\end{align}

Therefore, the weight enumerator polynomial $\widetilde{W}$ defined in \eqref{eq:def_wep_RR} can be written as
\begin{align}
    \widetilde{W}_{C^+}(\{x_a\}) &=
    \frac{1}{2} \left( \widetilde{W}_C(\{x_a\}) \pm \widetilde{W}_C(\{\i^a x_a\}) \pm \widetilde{W}_C(\{x_{a+l}\}) +  \widetilde{W}_C(\{i^{a+l} x_{a+l}\}) \right) \,, \\
    \widetilde{W}_{C^-}(\{x_a\}) &=
    \frac{1}{2} \left( \widetilde{W}_C(\{x_a\}) \mp \widetilde{W}_C(\{\i^a x_a\}) \mp \widetilde{W}_C(\{x_{a+l}\}) -  \widetilde{W}_C(\{i^{a+l} x_{a+l}\}) \right) \,,
\end{align}
where double signs correspond to the cases $t\in8\BZ$ (above) and $t\in8\BZ+4$ (below) respectively. The R-R partition function $Z_{(\CT/H)_\pm}[1,1]$ is also given by substituting the theta functions as \eqref{eq:RR_peven}.

\subsection{$k$ : odd}

Let $C\subset\BZ_k^n$ be a self-dual code with odd $k$.

As already mentioned, from \eqref{eq:orbifold_lattice}, the orbifold theories by $\delta$ and $\delta+2\lambda_e,\, \lambda_e\in\Lambda_{\delta\text{-even}}$ are the same. Thus, for any $\delta\in\Lambda(C)\subset\frac{1}{\sqrt{k}}\BZ^n$ which satisfies the anomaly condition $\delta^2\in4\BZ$, it is equivalent to consider $k\delta\in\sqrt{k}\BZ^n$. Moreover, for $u=\sqrt{k}\delta\in\BZ^n$, vectors
\begin{equation}
    \begin{cases}
        \sqrt{k}\,(0^{i-1},1,0^{n-i}) & \mathrm{if}\quad u_i\in2\BZ\\
        \sqrt{k}\,(0^{i-1},2,0^{n-i}) & \mathrm{if}\quad u_i\in2\BZ+1
    \end{cases}
\end{equation}
for $i\in\{1,\dots,n\}$ are in $\Lambda_{\delta\text{-even}}$ and then we can always get a vector consisting of $0$ and $\pm\sqrt{k}$ by adding and subtracting even multiples of these vectors. Therefore, we can consider only $\delta\in\{0,\pm\sqrt{k}\}^n$ without loss of generality.

On the other hand, any $\chi\in\{\pm\sqrt{k}\}^n$ is a characteristic vector of any $\Lambda(C)$.

For simplicity, we assume $\delta\in\{0,+\sqrt{k}\}^n$ where the number of $\sqrt{k}$ is $t\in4\BZ$ (to satisfy $\delta^2\in4\BZ$) and choose $\chi=\sqrt{k}(1,\dots,1)$. 
Let us define
\begin{align}
\begin{aligned}
    \mathrm{wt}_a^{(1)}(c) &= | \{ i\in\{1,\dots,n\} \mid c_i=a \text{ and } \delta_i=\sqrt{k} \} | \,, \\
    \mathrm{wt}_a^{(0)}(c) &= | \{ i\in\{1,\dots,n\} \mid c_i=a \text{ and } \delta_i=0 \} |\,,
\end{aligned}
\end{align}
for $a\in\BZ_k$, which satisfy $\mathrm{wt}_a(c)=\mathrm{wt}_a^{(1)}(c)+\mathrm{wt}_a^{(0)}(c)$, and
\begin{equation}
    W_C(\{x_a\},\{y_a\}) = \sum_{c\in C} \prod_{a\in\BZ_k} x_a^{\mathrm{wt}_a^{(1)}(c)} y_a^{\mathrm{wt}_a^{(0)}(c)} \,.
\end{equation}
From \eqref{eq:orbifold_lattice}, the NS-NS partition function of the orbifold theory is
\begin{align} \label{eq:odd_shift_NSNS}
    &Z_{(\CT/H)_\pm}[0,0] \nonumber \\
    &= \frac{1}{\eta(\tau)^n} \frac{1}{2} \left( W_C(\{f_{a,k}^{0,0}\},\{f_{a,k}^{0,0}\}) + W_C(\{f_{a,k}^{1,0}\},\{f_{a,k}^{0,0}\}) + W_C(\{f_{a,k}^{0,1}\},\{f_{a,k}^{0,0}\}) \pm W_C(\{f_{a,k}^{1,1}\},\{f_{a,k}^{0,0}\}) \right) \,.
\end{align}
From \eqref{eq:S+}, the R-R partition function for $\Lambda_{\delta}^{\text{orb}+}$ is
\begin{align}
\begin{aligned}
    &Z_{(\CT/H)_+}[1,1]  \\
    &= \frac{1}{\eta(\tau)^n} \sum_{\lambda\in\Lambda} \left( \frac{1\pm (-1)^{\lambda\cdot\delta}}{2} (-1)^{(\lambda+\frac{\chi}{2})\cdot\chi}  q^{\frac{1}{2}(\lambda+\frac{\chi}{2})^2}
    \pm \frac{1\pm (-1)^{\lambda\cdot\delta}}{2} (-1)^{(\lambda+\frac{\chi+\delta}{2})\cdot\chi}  q^{\frac{1}{2}(\lambda+\frac{\chi+\delta}{2})^2} \right) \\
    &= \frac{1}{\eta(\tau)^n} \frac{1}{2} \left( W_C(\{f_{a,k}^{1,1}\},\{f_{a,k}^{1,1}\}) \pm W_C(\{f_{a,k}^{0,1}\},\{f_{a,k}^{1,1}\}) \pm W_C(\{f_{a,k}^{1,0}\},\{f_{a,k}^{1,1}\}) + W_C(\{f_{a,k}^{0,0}\},\{f_{a,k}^{1,1}\})  \right) \label{eq:odd_shift+_RR}
    \end{aligned}
\end{align}
where double signs are all $+$ when $t\in8\BZ$ and $-$ when $t\in8\BZ+4$. Technically, to define the fermion parity, we chose $\chi$ when $t\in8\BZ$ and $\chi+2\lambda_o$ where $\lambda_o\in\Lambda_{\delta\text{-odd}}$ s.t. $\lambda_o^2\in2\BZ$ when $t\in8\BZ+4$.

Similarly, from \eqref{eq:S-}, we obtain the R-R partition function for $\Lambda_{\delta}^{\text{orb}-}$
\begin{align} \label{eq:odd_shift-_RR}
    &Z_{(\CT/H)_-}[1,1] \nonumber \\
    &= \frac{1}{\eta(\tau)^n} \frac{1}{2} \left( W_C(\{f_{a,k}^{1,1}\},\{f_{a,k}^{1,1}\}) \mp W_C(\{f_{a,k}^{0,1}\},\{f_{a,k}^{1,1}\}) \mp W_C(\{f_{a,k}^{1,0}\},\{f_{a,k}^{1,1}\}) - W_C(\{f_{a,k}^{0,0}\},\{f_{a,k}^{1,1}\})  \right)
\end{align}
where double signs are all $-$ when $t\in8\BZ$ and $+$ when $t\in8\BZ+4$.

\paragraph{Interpretation by codes}
Let $C\subset\BZ_k^n$ be a Type I code with even $k$ or a self-dual code with odd $k$.
We define a code over $\BZ_{4k}$ with the same length $n$ by
\begin{equation}
    \widehat{C} = \{ \hat{c}\in\BZ_{4k}^n \mid \hat{c}\equiv 2c \;\bmod 2k \,,\ c\in C \} \,,
\end{equation}
which is Type I and satisfies $\Lambda(C) = \Lambda(\widehat{C})$.
After orbifolding by $\delta=\frac{1}{\sqrt{k}}(c+km),\, c\in C,\, m\in\BZ^n$, which is not a characteristic vector and satisfies $\frac{\delta}{2}\notin\Lambda(C)$, $\delta^2\in4\BZ$, the theory can be expressed as
\begin{alignat}{2}
\begin{aligned}
    \Lambda(C)_{\delta}^{\text{orb}+} &= \Lambda(\widehat{C}^+) \,,\quad &
    \widehat{C}^+ &= \widehat{C}_{\delta\text{-even}} \sqcup \left( \widehat{C}_{\delta\text{-even}} + c + k \tilde{m} \right) \\
    \Lambda(C)_{\delta}^{\text{orb}-} &= \Lambda(\widehat{C}^-) \,,\quad &
    \widehat{C}^- &= \widehat{C}_{\delta\text{-even}} \sqcup \left( \widehat{C}_{\delta\text{-odd}} + c + k \tilde{m} \right)
\end{aligned}
\end{alignat}
where $\tilde{m}\in\{0,1,2,3\}^n$ s.t. $\tilde{m}\equiv m \mod 4$.
We can confirm the self-duality of $\widehat{C}^\pm$ explicitly. Note that $(c+k\tilde{m})^2\in4k\BZ$ is guaranteed from $\delta^2\in4\BZ$.

\section{Applications}
\label{sec:application}

This section is devoted to the applications of our code construction of chiral fermionic CFTs.
We first demonstrate the shift orbifolds of the lattice CFT based on $\BZ^8$ using various types of code including both binary and nonbinary ones.
Next, we construct the reflection and shift orbifold theories with central charge 16 from binary codes, which reproduces the classification of lattices and CFTs.
Finally, we give the method searching for supersymmetric CFTs and provide several fermionic CFTs with $c\geq24$ of our interest.

\subsection{Examples of shift orbifold with $\BZ^8$}

\paragraph{$k=2.$}
There is only one odd self-dual lattice in dimension $n=8$ : $\BZ^8$. A singly-even self-dual code $C$ over $\BZ_2$ is also unique, whose generator matrix is
\begin{equation}
    \begin{bmatrix}
        1&1&0&0&0&0&0&0 \\
        0&0&1&1&0&0&0&0 \\
        0&0&0&0&1&1&0&0 \\
        0&0&0&0&0&0&1&1
    \end{bmatrix} \,.
\end{equation}
From the odd self-duality, the lattices $\Lambda(C)$ and $\Lambda(C)_{\delta}^{\text{orb}\pm}$ must be $\BZ^8$ up to rotations for any $\delta\in\Lambda(C)$ satisfying the gauging condition. In particular, the case $\delta=\frac{1}{\sqrt{2}}(1,\dots,1)$ is illustrated in Figure \ref{fig:k2n8_lattice}.
The weight enumerators of $C$ are
\begin{equation}
    W_C(x_0,x_1) = (x_0^2 + x_1^2)^4 \,,\quad
    \widetilde{W}_C(x_0,x_1) = 0 \,,
\end{equation}
and from \eqref{eq:code_part_funcs} the NS-NS and R-R partition functions of the CFT $\CT$ constructed from $\Lambda(C)$ are
\begin{equation}
\label{eq:z8part}
    Z_{\CT}[0,0] = \frac{1}{\eta(\tau)^8} W_C(\theta_3(2\tau), \theta_2(2\tau)) = \frac{\theta_3(\tau)^8}{\eta(\tau)^8} ,\,\quad
    Z_{\CT}[1,1] = \frac{1}{\eta(\tau)^8} \widetilde{W}_C(\theta_3(2\tau), \theta_2(2\tau)) = 0 \,.
\end{equation}
From \eqref{eq:binary_NSNS_relation}, \eqref{eq:binary_RR_relation}, and the identity $\theta_3(\tau)^4 = \theta_2(\tau)^4 + \theta_4(\tau)^4$, those of the shift orbifold theory $(\CT/H)_\pm$ are
\begin{equation} \label{eq:Z8_shift_part_funcs}
    Z_{(\CT/H)_\pm}[0,0] = \frac{\theta_3^8 + (\theta_3\theta_4)^4 + (\theta_2\theta_3)^4}{2\eta(\tau)^8} = \frac{\theta_3(\tau)^8}{\eta(\tau)^8} ,\,\quad
    Z_{(\CT/H)_\pm}[1,1] = 0 \,,
\end{equation}
which are indeed the same as the original theory $\CT$.

\paragraph{$k=3.$}
A similar argument can be made for codes over other $\BZ_k$. For $k=3$, a self-dual code $C'$ over $\BZ_3$ is also unique, whose generator matrix is
\begin{equation}
    \begin{bmatrix}
        1&0&0&0&1&1&0&0 \\
        0&1&0&0&1&2&0&0 \\
        0&0&1&0&0&0&1&1 \\
        0&0&0&1&0&0&1&2
    \end{bmatrix} \,.
\end{equation}
If we choose $\delta=\sqrt{3}(1,1,1,1,0,0,0,0)$, the modified weight enumerator of $C'$ is
\begin{equation}
    W_{C'}(\{x_a\}, \{y_a\}) = 
    (x_0^2 y_0^2 + (x_1 + x_2) y_0 (x_2 y_1 + x_1 y_2) + 
  x_0 (y_1 + y_2) (x_1 y_1 + x_2 y_2))^2
\end{equation}
and by substituting $f_{a,3}^{\alpha,\beta}$ according to \eqref{eq:odd_shift_NSNS}, \eqref{eq:odd_shift+_RR}, and \eqref{eq:odd_shift-_RR}, we obtain the same result as \eqref{eq:Z8_shift_part_funcs}.

\paragraph{$k=4.$}
For $k=4$, we consider a self-dual code $C''$ over $\BZ_4$ generated by
\begin{align}
    \begin{bmatrix}
        1 & 0 & 0 & 0 & 0 & 1 & 1 & 3 \\
0 & 1 & 0 & 0 & 1 & 3 & 0 & 3 \\
0 & 0 & 1 & 0 & 1 & 0 & 1 & 1 \\
0 & 0 & 0 & 1 & 3 & 3 & 1 & 0
    \end{bmatrix}\,.
\end{align}
The weight enumerator is
\begin{align}
\begin{aligned}
    W_{C''}(\{x_a\}) &= x_0^8 + x_1^4 x_2^4 + x_1^7 x_3 + 6 x_1^2 x_2^4 x_3^2 + 7 x_1^5 x_3^3 + 
3 x_0^3 x_2 (x_1 + x_3)^4 \\
&+ 6 x_0^2 x_2^2 (x_1 + x_3)^4 + 
3 x_0 x_2^3 (x_1 + x_3)^4 + x_2^4 (x_2^4 + x_3^4)\\ &+ x_1^3 (4 x_2^4 x_3 + 7 x_3^5) + x_1 (4 x_2^4 x_3^3 + x_3^7) + 
x_0^4 (14 x_2^4 + (x_1 + x_3)^4)\,.
\end{aligned}
\end{align}
By substituting $x_a = \Theta_{a,2}/\eta$, we arrive at \eqref{eq:z8part}.
One can take the shift orbifold by $\delta = (1,1,\cdots,1)$ since this code satisfies $n\in4\BZ$ and $(1,1,\cdots,1)\notin (C''\sqcup S(C''))$. The shift orbifold theories $(\CT/H)_\pm$ are lattice CFTs constructed from new codes $(C'')^\pm$, which are generated by the following matrices, respectively:
\begin{align}
    G^+ = 
    \begin{bmatrix}
0 & 1 & 0 & 0 & 1 & 3 & 0 & 3 \\
0 & 0 & 1 & 0 & 1 & 0 & 1 & 1 \\
0 & 0 & 0 & 1 & 3 & 3 & 1 & 0 \\
3 & 0 & 0 & 3 & 1 & 2 & 0 & 3 \\
\end{bmatrix}\,,\qquad
    G^- = 
    \begin{bmatrix}
        0 & 1 & 0 & 0 & 1 & 3 & 0 & 3 \\
        0 & 0 & 1 & 0 & 1 & 0 & 1 & 1 \\
        0 & 0 & 0 & 1 & 3 & 3 & 1 & 0 \\
        2 & 3 & 2 & 0 & 1 & 3 & 0 & 1 \\
        2 & 0 & 2 & 3 & 3 & 1 & 1 & 2 \\
        \end{bmatrix}\,.
\end{align}
Note that after shift orbifolding, the Construction A lattices are $\BZ^8$ again: $\Lambda((C'')^\pm) = \BZ^8$ since an $8$-dimensional odd self-dual lattice is unique.

\begin{figure}[t]
\centering
\hspace{-2cm}
\begin{tikzpicture}[scale=1.4]

\def\originaleven#1{\fill[black] #1 circle (2pt);}
\def\originalodd#1{\fill[red] #1 circle (2pt);}
\def\shifteven#1{\draw[black] #1+(-0.1,-0.1,0) -- +(0.1,0.1,0) #1+(-0.1,0.1,0) -- +(0.1,-0.1,0);}
\def\shiftodd#1{\draw[red] #1+(-0.1,-0.1,0) -- +(0.1,0.1,0) #1+(-0.1,0.1,0) -- +(0.1,-0.1,0);}

\draw[->] (-1.5,0,0) -- (-0.8,0,0) node[pos=1.4] {\(u_1\)};
\draw[->] (-1.5,0,0) -- (-1.5,0,-1) node[pos=1.4] {\(u_2\)};
\draw[->] (-1.5,0,0) -- (-1.5,0.7,0) node[pos=1.4] {\(u_3=\dots=u_8\)};

\begin{scope}[densely dotted,thick]
\foreach \x in {0,...,2} {
  \foreach \y in {0,...,2} {
    \foreach \z in {0,...,-2} {
      \draw[gray] (\x,\y,\z) -- ++(1,0,0) -- ++(0,0,-1) -- ++(-1,0,0) -- cycle;
      \draw[gray] (\x,\y+1,\z) -- ++(1,0,0) -- ++(0,0,-1) -- ++(-1,0,0) -- cycle;
      \draw[gray] (\x,\y,\z) -- ++(0,1,0);
      \draw[gray] (\x,\y,\z-1) -- ++(0,1,0);
      \draw[gray] (\x+1,\y,\z) -- ++(0,1,0);
      \draw[gray] (\x+1,\y,\z-1) -- ++(0,1,0);
    }
  }
}
\end{scope}

\foreach \x in {0,...,3} {
  \foreach \y in {0,...,3} {
    \foreach \z in {0,...,-3} {
      \pgfmathsetmacro{\mody}{mod(\y,4)}
      \pgfmathsetmacro{\modxz}{mod(\x-\z,4)}
      \ifthenelse{0 = \mody}{
        \ifthenelse{0 = \modxz}{
          \originaleven{(\x,\y,\z)}
        }{}
        \ifthenelse{2 = \modxz}{
          \originalodd{(\x,\y,\z)}
        }{}
      }{}
      \ifthenelse{2 = \mody}{
        \ifthenelse{0 = \modxz}{
          \originalodd{(\x,\y,\z)}
        }{}
        \ifthenelse{2 = \modxz}{
          \originaleven{(\x,\y,\z)}
        }{}
      }{}
    }
  }
}

\foreach \x in {0.5,...,2.5} {
  \foreach \y in {0,...,3} {
    \foreach \z in {-0.5,...,-2.5} {
      \pgfmathsetmacro{\mody}{mod(\y,4)}
      \pgfmathsetmacro{\modxz}{mod(\x-\z,4)}
      \ifthenelse{1 = \mody}{
        \ifthenelse{1 = \modxz}{
          \shifteven{(\x,\y,\z)}
        }{}
        \ifthenelse{3 = \modxz}{
          \shiftodd{(\x,\y,\z)}
        }{}
      }{}
      \ifthenelse{3 = \mody}{
        \ifthenelse{1 = \modxz}{
          \shiftodd{(\x,\y,\z)}
        }{}
        \ifthenelse{3 = \modxz}{
          \shifteven{(\x,\y,\z)}
        }{}
      }{}
    }
  }
}

\end{tikzpicture}
\caption{The 3-dimensional slice of $\BR^8$ with $u_3 = u_4 = \dots = u_8$, where $u_i \, (i=1,\dots,8)$ are the coordinates. Black circle: $\Lambda(C)_{\delta\text{-even}}$; Red circle: $\Lambda(C)_{\delta\text{-odd}}$; Black cross: $\Lambda(C)_{\delta\text{-even}}+\frac{\delta}{2}$; Red cross: $\Lambda(C)_{\delta\text{-odd}}+\frac{\delta}{2}$.}
\label{fig:k2n8_lattice}
\end{figure}

\subsection{Classification at $c=16$}
There are 6 odd self-dual lattices in dimension $n=16$ :
\begin{equation}
    \BZ^{16} \,,\quad E_8\times\BZ^8 \,,\quad D_{12}^+\times\BZ^4 \,,\quad (E_7^2)^+\times\BZ^2 \,,\quad A_{15}^+\times\BZ \,,\quad (D_8^2)^+ \,.
\end{equation}
Since norm 1 vectors exist only in $\BZ$, the lattices can be identified by the number of them ($32,16,8,4,2$ and $0$). For simplicity, it is denoted by $N_1$ as $N_1(E_8\times\BZ^8)=16$.

For a lattice constructed from a code $C$ over $\BZ_2$, norm 1 vectors are generated from only $c\in C$ with $\mathrm{wt}(c)=2$. For example, if $(1,1,0,\dots,0)\in C$, then $\frac{1}{\sqrt{2}}(\pm1,\pm1,0,\dots,0)\in\Lambda(C)$.

Let $C_{(i)}\subset\BZ_2^{16},\, i=1,\dots,5$ be the single-even self-dual codes with length $n=16$ (the index of the codes conforms to the order of~\cite{database}). The number of weight $2$ codewords in $C_{(i)}$ is $4,2,8,1$ and $0$ respectively and $N_1(\Lambda(C_{(i)}))$ is four times that.
For the orbifold lattice $\Lambda(C_{(i)})_{\delta}^{\text{orb}\pm}$ by $\delta=\frac{1}{\sqrt{2}}(1,\dots,1)$, from the discussion in section \ref{sec:triality}, the numbers of norm 1 vectors in $\Lambda(C_{(i)})_{\delta\text{-even}}$ and $\Lambda(C_{(i)})_{\delta\text{-odd}}$ are the same and that in $\Lambda(C_{(i)})+\frac{\delta}{2}$ is 0. Therefore, $N_1(\Lambda(C_{(i)})_{\delta}^{\text{orb}\pm})=\frac{1}{2}N_1(\Lambda(C_{(i)}))$.

The relation between codes and lattices are summarized in Figure \ref{fig:n16}. As for lattices to CFTs, the straight construction is obvious from their names and $\BZ$ corresponding to two Majorana-Weyl fermions $2\psi$. The theory $\overline{(E_8)_2} \times 1\psi$ is determined by the fact that the number of $\psi$ is halved after orbifolding, as discussed near \eqref{eq:num_1/2_refl}. It is worth noting that this theory cannot be directly constructed from the lattice.

The NS-NS partition functions can be obtained from \eqref{eq:NSNS_peven} using the weight enumerators of the codes and the relation \eqref{eq:binary_NSNS_relation} between the original theory and the orbifold theory. For example,
\begin{align}
    Z_{\overline{(E_8)_2} \times 1\psi}[0,0]
    &= \frac{\theta_3(\tau)^{16}}{\eta(\tau)^{16}} - 31 \frac{\theta_3(\tau)^{4}}{\eta(\tau)^{4}}
    = q^{-\frac{2}{3}} + q^{-\frac{1}{6}} + 248 q^{\frac{1}{3}} + 4124 q^{\frac{5}{6}} + O(q^{\frac{4}{3}}) \,,\\
    Z_{\overline{(D_8)_1^2}}[0,0]
    &= \frac{(\theta_3(\tau)\theta_4(\tau))^8 +(\theta_2(\tau)\theta_3(\tau))^8}{\eta(\tau)^{16}}
    = q^{-\frac{2}{3}} + 240 q^{\frac{1}{3}} + 4096 q^{\frac{5}{6}} + O(q^{\frac{4}{3}}) \,.
\end{align}
Note that $Z_{\overline{(D_8)_1^2}}[0,0]$ is fixed solely by \eqref{eq:binary_NSNS_relation} since the theory $\overline{(D_8)_1^2}$ returns to itself under orbifolding.
The R-R partition functions are all $0$.

\begin{figure}[t]
    \centering
    \begin{tikzpicture}
        \node at (0,0) {$\BZ^{16}$};
        \node at (2.4,0) {$E_8\times\BZ^8$};
        \node at (4.8,0) {$D_{12}^+\times\BZ^4$};
        \node at (7.2,0) {$(E_7^2)^+\times\BZ^2$};
        \node at (9.6,0) {$A_{15}^+\times\BZ$};
        \node at (13.2,0) {$(D_8^2)^+$};
        \node at (1.2,1.5) {$C_{(3)}$};
        \node at (3.6,1.5) {$C_{(1)}$};
        \node at (6.0,1.5) {$C_{(2)}$};
        \node at (8.4,1.5) {$C_{(4)}$};
        \node at (13.2,1.5) {$C_{(5)}$};
        \draw [->] (1.0,1.2) -- (0.2,0.3);
        \draw [->] (3.4,1.2) -- (2.6,0.3);
        \draw [->] (5.8,1.2) -- (5.0,0.3);
        \draw [->] (8.2,1.2) -- (7.4,0.3);
        \draw [->, snake it] (1.4,1.2) -- (2.2,0.3);
        \draw [->, snake it] (3.8,1.2) -- (4.6,0.3);
        \draw [->, snake it] (6.2,1.2) -- (7.0,0.3);
        \draw [->, snake it] (8.6,1.2) -- (9.4,0.3);
        \draw [->] (13.0,1.2) -- (13.0,0.3);
        \draw [->, snake it] (13.4,1.2) -- (13.4,0.3);
        \node at (-1.2,-1.5) {$32\psi$};
        \node at (1.2,-1.5) {$(E_8)_1 \times 16\psi$};
        \node at (3.6,-1.5) {$\overline{(D_{12})_1} \times 8\psi$};
        \node at (6.0,-1.5) {$\overline{(E_7)_1^2} \times 4\psi$};
        \node at (8.4,-1.5) {$\overline{(A_{15})_1} \times 2\psi$};
        \node at (10.8,-1.5) {$\overline{(E_8)_2} \times 1\psi$};
        \node at (13.2,-1.5) {$\overline{(D_8)_1^2}$};
        \draw [->] (-0.2,-0.3) -- (-1.0,-1.2);
        \draw [->] (2.2,-0.3) -- (1.4,-1.2);
        \draw [->] (4.6,-0.3) -- (3.8,-1.2);
        \draw [->] (7.0,-0.3) -- (6.2,-1.2);
        \draw [->] (9.4,-0.3) -- (8.6,-1.2);
        \draw [->, snake it] (0.2,-0.3) -- (1.0,-1.2);
        \draw [->, snake it] (2.6,-0.3) -- (3.4,-1.2);
        \draw [->, snake it] (5.0,-0.3) -- (5.8,-1.2);
        \draw [->, snake it] (7.4,-0.3) -- (8.2,-1.2);
        \draw [->, snake it] (9.8,-0.3) -- (10.6,-1.2);
        \draw [->] (13.0,-0.3) -- (13.0,-1.2);
        \draw [->, snake it] (13.4,-0.3) -- (13.4,-1.2);
    \end{tikzpicture}
    \caption{All singly-even self-dual codes over $\BZ_2$, odd self-dual lattices, and chiral fermionic CFTs at $n=16$. The arrows in the first row mean $C_{(i)}\rightarrow \Lambda(C_{(i)})$ and $C_{(i)}\rightsquigarrow \Lambda(C_{(i)})_{\delta}^{\text{orb}\pm}$ for $\delta=\frac{1}{\sqrt{2}}(1,\dots,1)$, and that in the second row mean $\rightarrow:$ the straight construction and $\rightsquigarrow:$ the reflection orbifold. See \cite{lattice_catalogue} and \cite{BoyleSmith:2023xkd} for the names of lattices and CFTs, respectively.}
    \label{fig:n16}
\end{figure}
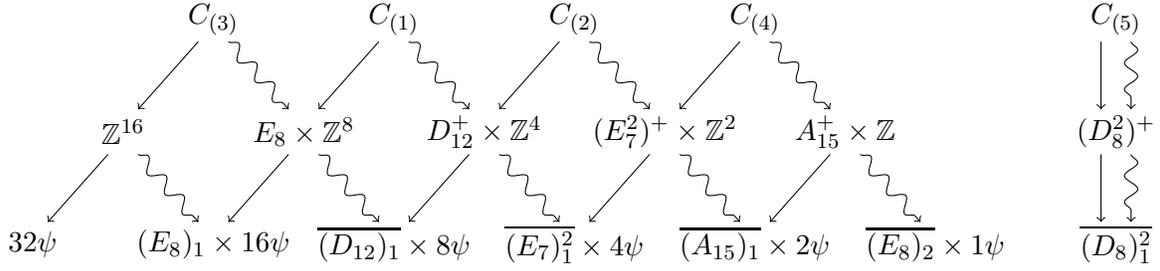

\subsection{Search for supersymmetry} \label{sec:susy}

In \cite{Bae:2020xzl,Bae:2021lvk}, three conditions that strongly suggest the existence of the $\CN=1$ supersymmetry are proposed. In this section, we discuss whether the orbifold theory satisfies these conditions. In particular, we consider the shift orbifold with $\delta=\frac{1}{\sqrt{2}}(1,\dots,1)$ for lattices constructed from binary codes and the reflection orbifold for general self-dual lattices. The result from the shift orbifold is contained within that from the reflection orbifold since $\CT/H \cong \CT/G$ in Proposition~\ref{prop:equivalence}; however, the considerations in both orbifolds provide different implications regarding the structure of the theory.

\paragraph{Shift orbifold}

When $k=2,\, n\in8\BZ$ and $\delta=\frac{1}{\sqrt{2}}(1,\dots,1)$, from the fact that $\mathrm{wt}(s)=n/2=0 \pmod 4$, $S(C)_{\delta\text{-even}}=S(C)$ and $S(C)_{\delta\text{-odd}}$ is empty. Thus, from \eqref{eq:S+_2} and \eqref{eq:S-_2},
\begin{align}
    S(\Lambda_{\delta}^{\text{orb}+}) &=
    \begin{cases}
        \frac{1}{\sqrt{2}} S(C) + \left( \Gamma_{\delta\text{-even}} \sqcup \Gamma_{\delta\text{-even}}' \right) & (n\in16\BZ) \\
        \frac{1}{\sqrt{2}} S(C) + \left( \Gamma_{\delta\text{-odd}} \sqcup \Gamma_{\delta\text{-odd}}' \right) & (n\in16\BZ+8)
    \end{cases} \\
    S(\Lambda_{\delta}^{\text{orb}-}) &=
    \begin{cases}
        \frac{1}{\sqrt{2}} S(C) + \left( \Gamma_{\delta\text{-odd}} \sqcup \Gamma_{\delta\text{-even}}' \right) & (n\in16\BZ) \\
        \frac{1}{\sqrt{2}} S(C) + \left( \Gamma_{\delta\text{-even}} \sqcup \Gamma_{\delta\text{-odd}}' \right) & (n\in16\BZ+8)
    \end{cases}
\end{align}

In the following, in discussions that are valid for either $\Lambda_{\delta}^{\text{orb}+}$ or $\Lambda_{\delta}^{\text{orb}-}$, we will simply write $\Lambda_{\delta}^{\text{orb}}$.

The first condition is that the NS sector contains a spin-$\frac{3}{2}$ Virasoro primary operator. For a CFT constructed from an odd self-dual lattice $\Lambda$, this is equivalent to the existence of $\lambda\in\Lambda$ s.t. $\lambda^2=3$.
When $\Lambda(C)$ contains such a vector, from $\Lambda(C)\subset\frac{1}{\sqrt{2}}\BZ^n$, it should be
\begin{equation} \label{eq:norm3_vec}
    \frac{1}{\sqrt{2}}(1,1,2,0,\dots,0) \quad\text{or}\quad \frac{1}{\sqrt{2}}(1,1,1,1,1,-1,0,\dots,0)
\end{equation}
up to permutation and signs. From $\sqrt{2}\BZ^n\subset\Lambda(C)$, vectors with different signs for arbitrary elements are also in $\Lambda(C)$ and among them there is always a vector in $\Lambda(C)_{\delta\text{-even}}$. (We chose \eqref{eq:norm3_vec} as such an example.)
Since $\Lambda(C)_{\delta\text{-even}}\subset\Lambda(C)_{\delta\text{-orb}}$, we can conclude that if the original theory contains a spin-$\frac{3}{2}$ primary operator, then the orbifold theory also does.

The second condition is that any primary operator in the R sector satisfies $h\geq\frac{n}{24}$ where $h$ is the conformal weight. For the CFT from the lattice $\Lambda$, this is equivalent to any $\xi\in S(\Lambda)$ satisfying $\xi^2\geq \frac{n}{12}$. For $S(\Lambda_{\delta}^{\text{orb}})$, $\xi\in\frac{1}{\sqrt{2}} S(C) + \Gamma_{\delta\text{-even/odd}}'$ always satisfies it because the norm is greater than or equal to $(\frac{1}{2\sqrt{2}})^2 \,n=\frac{n}{8}$ for $\Gamma_{\delta\text{-even}}'$ and $(\frac{1}{2\sqrt{2}})^2 (n-1) + (\frac{3}{2\sqrt{2}})^2 = \frac{n}{8}+1$ for $\Gamma_{\delta\text{-odd}}'$. Since the other sector $\frac{1}{\sqrt{2}} S(C) + \Gamma_{\delta\text{-even/odd}}$ is in the shadow of the original lattice $S(\Lambda(C))$, we can conclude that if the original theory satisfies $h\geq\frac{n}{24}$, then the orbifold theory also does.

The third condition is that the R-R partition function is constant, which can already be confirmed from \eqref{eq:binary_RR_relation}. More precisely, in the original theory, the contribution from $\frac{1}{\sqrt{2}} S(C) + \Gamma_{\delta\text{-even}}$ and $\frac{1}{\sqrt{2}} S(C) + \Gamma_{\delta\text{-odd}}$ are equal from \eqref{eq:aux_eq} and thus both are half of $Z_{\CT}[1,1]$. In addition, the contributions from $\frac{1}{\sqrt{2}} C_1 + \Gamma_{\delta\text{-even(odd)}}'$ and $\frac{1}{\sqrt{2}} C_3 + \Gamma_{\delta\text{-even(odd)}}'$ cancel each other out from \eqref{eq:aux_2}.

Thus, if the original theory has the $\CN=1$ supersymmetry, then the orbifold theory also satisfies all ``SUSY conditions'', which strongly suggests the existence of the $\CN=1$ supersymmetry.
Note that for nonbinary cases, the SUSY conditions are not necessarily preserved by the shift orbifolding.

\paragraph{Reflection orbifold}

Let $\Lambda\subset\BR^n$ be an odd self-dual lattice. First, we assume that the lattice CFT constructed from $\Lambda$ satisfies the SUSY conditions and prove its reflection orbifold theory also does.

For the first condition, the lattice $\Lambda$ contains a vector $\lambda\in\Lambda$ s.t. $\lambda^2=3$, thus the reflection orbifold theory has the state $\ket{\lambda}+\ket{-\lambda}$ with the same spin.

For the second condition, the twisted R sector always satisfies the condition since the conformal weights of the ground states $\ket{\Omega_j}$ is $h=\frac{n}{16}$ from the discussion in section \ref{ss:reflection}. The untwisted R sector is contained in the original theory and therefore also satisfies the condition.

The third condition is obvious from \eqref{eq:rr-refl}.

Next, we examine the converse direction: whether the original theory satisfies the SUSY conditions given that the reflection orbifold theory does.

For the first condition, even if the orbifold theory includes a spin-$3/2$ operator in the NS sector, it may come from the twisted sector. Since the weight of the twisted ground states is $h=\frac{n}{16}$, when $n\geq 32$, the spin-$3/2$ operator must be in the untwisted sector.
In the cases of $n=8, 16$ and $24$, the complete classification of self-dual lattices is known \cite{data_unimodular}, and it can be directly verified that all such lattices contain vectors of norm 3. Consequently, the original theory satisfies the condition.

For the second condition, the R sector of the orbifold theory includes all states of the form $\ket{\chi}\pm\ket{-\chi},\, \chi\in S(\Lambda)$ ($+$ for $(\CT/G)_+$ and $-$ for $(\CT/G)_-$). Therefore, all states in the R sector of the original theory, spanned by $\ket{\chi}$ and its descendants, satisfy the condition.

The third condition is also obvious from \eqref{eq:rr-refl}.

Thus, we conclude that the SUSY conditions for the original theory and those for the reflection orbifold theory are equivalent.

\subsection{Chiral fermionic CFTs with $c=24$}

In this subsection, we construct chiral fermionic CFTs with central charge $24$ from singly-even self-dual codes over $\BZ_2$ and their orbifolds.
As an example, we identify binary codes of length 24 that yield the ``Beauty and the Beast" SCFT~\cite{Dixon:1988qd} and the Baby Monster CFT~\cite{hohn2007selbstduale} with a Majorana-Weyl fermion.

\renewcommand{\arraystretch}{1}
\subsubsection{``Beauty and the Beast" SCFT}

Let $C\subset\BZ_2^{24}$ be a singly-even self-dual code generated by (\cite{database})
\begin{equation}
    \begin{bmatrix}
        1&0&0&0&0&0&0&0&0&0&0&0&0&1&1&1&1&1&0&0&0&0&1&1 \\
        0&1&0&0&0&0&0&0&0&0&0&0&0&1&0&1&1&1&1&1&1&0&0&0 \\
        0&0&1&0&0&0&0&0&0&0&1&0&0&0&0&0&1&0&1&0&0&1&1&0 \\
        0&0&0&1&0&0&0&0&0&0&1&0&0&0&1&1&1&1&1&1&1&1&0&0 \\
        0&0&0&0&1&0&0&0&0&0&1&0&0&1&1&1&1&0&0&1&0&1&1&1 \\
        0&0&0&0&0&1&0&0&0&0&0&0&0&1&0&0&0&1&1&0&1&1&0&0 \\
        0&0&0&0&0&0&1&0&0&0&1&0&0&0&0&1&1&1&0&1&0&1&1&0 \\
        0&0&0&0&0&0&0&1&0&0&1&0&0&1&0&1&0&0&0&1&1&0&1&1 \\
        0&0&0&0&0&0&0&0&1&0&0&0&0&0&1&1&0&1&0&0&0&1&1&0 \\
        0&0&0&0&0&0&0&0&0&1&0&0&0&0&1&0&0&0&1&1&0&1&0&1 \\
        0&0&0&0&0&0&0&0&0&0&0&1&0&1&1&1&1&0&1&0&0&0&0&0 \\
        0&0&0&0&0&0&0&0&0&0&0&0&1&1&1&1&0&1&1&1&1&0&1&1
    \end{bmatrix} \,,
\end{equation}
which is the only code with the minimum weight $d(C)=6$ at $n=24$ up to permutations.

The lattice $\Lambda(C)$ can be identified as the sixth lattice in Table 17.1a \cite{conway2013sphere} since the neighbors $\Lambda(C_0 \sqcup C_1)$ and $\Lambda(C_0 \sqcup C_3)$ are $D_4^6$ and $A_1^{24}$, which can be confirmed from the Coxeter number (see Table 16.1 \cite{conway2013sphere}) and the fact that there is no odd self-dual lattice s.t. its even neighbors are $A_5^4D_4$ and $A_1^{24}$.

\begin{figure}
    \centering
    \begin{tikzpicture}[transform shape, scale=1]
    \begin{scope}[yshift=1cm]
    \draw (-0.5,2.5)node[font=\normalsize]{Code};
    \draw (2.5,2.5)node[font=\normalsize]{Lattice};
    \draw (6,2.5)node[font=\normalsize]{CFT};
    \end{scope}
    \draw[->,>=stealth] (0,0.75)node[left=0.3cm,font=\normalsize]{$C$} to (2,1.5)node[right,font=\normalsize]{\,\,$\Lambda_{A_1^{24}}$};
    \draw[->,>=stealth,decorate,decoration={snake,amplitude=.4mm,segment length=2mm,post length=1mm}] (0,0.75) to (2,0)node[right,font=\normalsize]{\,\,$O_{24}$};
    \begin{scope}[xshift=3.2cm,yshift=0.8cm]
    \draw[->,>=stealth] (0,0.75) to (2,1.5)node[right,font=\normalsize]{$\,\,\CT(\Lambda_{A_1^{24}})$};
    \draw[->,>=stealth,decorate,decoration={snake,amplitude=.4mm,segment length=2mm,post length=1mm}] (0,0.75) to (2,0.1);
    \end{scope}
    \begin{scope}[xshift=3.2cm,yshift=-0.8cm]
    \draw[->,>=stealth] (0,0.75) to (2,1.5);
    \draw[->,>=stealth,decorate,decoration={snake,amplitude=.4mm,segment length=2mm,post length=1mm}] (0,0.75) to (2,0) node[right,font=\normalsize]{\,\;\,B\&B};
    \end{scope}
    \node[right,font=\normalsize] at (5.25,0.75) {$\,\,\CT(O_{24})$};
    \begin{scope}[yshift=1cm,xshift=1.5cm]
    \node at (7,2.5) {\#\,spin-$\frac{3}{2}$};
        \node at (9,2.5) {$\min E_R$};
        \node at (11,2.5) {$Z_{\CT}[1,1]$};
        \end{scope}
        \begin{scope}[xshift=1.5cm,yshift=-0.1cm]
        \node at (7,2.5) {4096};
        \node at (9,2.5) {0};
        \node at (11,2.5) {96};
        \end{scope}
        \begin{scope}[xshift=1.5cm,yshift=-1.7cm]
        \node at (7,2.5) {4096};
        \node at (9,2.5) {0};
        \node at (11,2.5) {48};
        \end{scope}
        \begin{scope}[xshift=1.5cm,yshift=-3.3cm]
        \node at (7,2.5) {4096};
        \node at (9,2.5) {0};
        \node at (11,2.5) {24};
        \end{scope}
    \end{tikzpicture}
    \caption{The ``Beauty and the Beast" CFT constructed from the code $C$. $O_{24}$ is the odd Leech lattice and $\Lambda_{A_1^{24}}$ is the sixth lattice in Table 17.1a \cite{conway2013sphere}, which is associated with the root system $A_1^{24}$. Here, ``B\&B" stands for ``Beauty and the Beast" $\CN=1$ supersymmetric CFT~\cite{Dixon:1988qd}. We also show the number of spin-$3/2$ operators in the NS sector, the minimum conformal weight in the R sector, and the R-R partition function for each theory.}
    \label{fig:n24_BB}
\end{figure}

When we consider the shift orbifold theory from $\Lambda(C)$ by $\delta = \frac{1}{\sqrt{2}}(1,\dots,1)$, $\Lambda_{\delta}^{\text{orb}\pm}$ is the odd Leech lattice $O_{24}$, which is the only lattice with the minimum norm $3$ at $n=24$. This can be shown from the fact that $d(C)=6$ and vectors of norm $2$ such as $\sqrt{2}(1,0,\dots,0)$ are in $\Lambda_{\delta\text{-odd}}\not\subset \Lambda_{\delta}^{\text{orb}\pm}$. Note that this lattice cannot be directly generated by Construction A from codes over $\BZ_2$.
In addition, the reflection orbifold theory from the odd Leech lattice is known to have the $\CN=1$ supersymmetry~\cite{Gaiotto:2018ypj} and is called ``Beauty and the Beast" \cite{Dixon:1988qd}.
These relations are summarized in Figure \ref{fig:n24_BB}.

Since the weight enumerators of the code $C$ are
\begin{align}
\begin{aligned}
    W_C(x_0,x_1) &= x_0^{24} + 64 \,x_0^{18}\, x_1^{6} + 375\, x_0^{16}\, x_1^{8} + 960 \,x_0^{14} \,x_1^{10} + 1296\, x_0^{12}\, x_1^{12}  \\
    &\qquad + 960 \,x_0^{10}\, x_1^{14} + 375 \,x_0^{8} \,x_1^{16} + 64\, x_0^{6}\, x_1^{18} + x_1^{24} \,, \\
    \widetilde{W}_C(x_0,x_1) &= 6 \,x_0^{20}\, x_1^{4} - 24 \,x_0^{16}\, x_1^{8} + 36 \,x_0^{12} \,x_1^{12} - 24 \,x_0^{8}\, x_1^{16} + 6 \,x_0^{4} \,x_1^{20} \,,
\end{aligned}
\end{align}
the partition functions of the theory $\CT(\Lambda_{A_1^{24}})$ are
\begin{align}
\begin{aligned}
    Z_{\CT(\Lambda_{A_1^{24}})}[0,0] &= \frac{1}{\eta(\tau)^{24}} W_C(\theta_3(2\tau),\theta_2(2\tau)) = q^{-1} + 72 + 4096 q^{\frac{1}{2}} + 98580 q + O(q^{\frac{3}{2}}) \,, \\
     Z_{\CT(\Lambda_{A_1^{24}})}[1,1] &= \frac{1}{\eta(\tau)^{24}} \widetilde{W}_C(\theta_3(2\tau),\theta_2(2\tau)) = 96 \,.
    \end{aligned}
\end{align}
From \eqref{eq:binary_NSNS_relation} and \eqref{eq:binary_RR_relation}, the partition functions of $\CT(O_{24})$ and ``B\&B" must be
\begin{equation}
    Z_{\CT(O_{24})}[0,0] = q^{-1} + 24 + 4096 q^{\frac{1}{2}} + 98580 q + O(q^{\frac{3}{2}}) \,,\quad 
    Z_{\CT(O_{24})}[1,1] = 48 \,,
\end{equation}
and
\begin{equation}
    Z_{\text{``B\&B"}}[0,0] = q^{-1} + 4096 q^{\frac{1}{2}} + 98580 q + O(q^{\frac{3}{2}}) \,,\quad
    Z_{\text{``B\&B"}}[1,1] = 24 \,,
\end{equation}
which are consistent with the known values (see Appendix C in \cite{Lin:2019hks} for ``B\&B"). In Figure \ref{fig:n24_BB}, the values corresponding to the SUSY conditions are shown: the number of spin-$\frac{3}{2}$ primary operators in the NS sector, the minimum conformal weight in the R sector, and the partition function $Z_{\CT}[1,1]$. As discussed in the previous section, since $\CT(\Lambda_{A_1^{24}})$ satisfies the SUSY conditions, $\CT(O_{24})$ and ``B\&B" do as well.

\subsubsection{Baby Monster CFT}
Let us consider a singly-even self-dual code $C$ generated by (\cite{database})
\begin{equation}
    \begin{bmatrix}
    1 & 0 & 0 & 0 & 0 & 0 & 0 & 0 & 0 & 0 & 0 & 0 & 0 & 1 & 0 & 1 & 0 & 1 & 1 & 1 & 0 & 1 & 1 & 0 \\
    0 & 1 & 0 & 0 & 0 & 0 & 0 & 0 & 0 & 0 & 0 & 0 & 0 & 1 & 0 & 1 & 1 & 1 & 0 & 0 & 1 & 1 & 0 & 1 \\
    0 & 0 & 1 & 0 & 1 & 0 & 0 & 0 & 0 & 0 & 0 & 0 & 0 & 0 & 0 & 0 & 0 & 0 & 0 & 0 & 0 & 0 & 0 & 0 \\
    0 & 0 & 0 & 1 & 0 & 0 & 0 & 0 & 0 & 0 & 0 & 0 & 0 & 0 & 1 & 1 & 1 & 1 & 0 & 1 & 1 & 0 & 1 & 0 \\
    0 & 0 & 0 & 0 & 0 & 1 & 0 & 0 & 0 & 0 & 0 & 0 & 0 & 1 & 0 & 0 & 0 & 1 & 1 & 0 & 1 & 1 & 0 & 0 \\
    0 & 0 & 0 & 0 & 0 & 0 & 1 & 0 & 0 & 0 & 0 & 0 & 0 & 0 & 1 & 1 & 0 & 1 & 1 & 1 & 0 & 0 & 0 & 0 \\
    0 & 0 & 0 & 0 & 0 & 0 & 0 & 1 & 0 & 0 & 0 & 0 & 0 & 1 & 1 & 1 & 1 & 0 & 0 & 0 & 1 & 0 & 0 & 0 \\
    0 & 0 & 0 & 0 & 0 & 0 & 0 & 0 & 1 & 0 & 0 & 0 & 0 & 0 & 0 & 1 & 1 & 1 & 0 & 0 & 0 & 1 & 1 & 0 \\
    0 & 0 & 0 & 0 & 0 & 0 & 0 & 0 & 0 & 1 & 0 & 0 & 0 & 1 & 1 & 1 & 1 & 0 & 1 & 1 & 0 & 0 & 0 & 1 \\
    0 & 0 & 0 & 0 & 0 & 0 & 0 & 0 & 0 & 0 & 1 & 0 & 0 & 0 & 1 & 0 & 1 & 0 & 0 & 1 & 0 & 0 & 1 & 1 \\
    0 & 0 & 0 & 0 & 0 & 0 & 0 & 0 & 0 & 0 & 0 & 1 & 0 & 1 & 0 & 1 & 0 & 0 & 0 & 1 & 0 & 1 & 0 & 1 \\
    0 & 0 & 0 & 0 & 0 & 0 & 0 & 0 & 0 & 0 & 0 & 0 & 1 & 1 & 0 & 1 & 1 & 1 & 1 & 1 & 1 & 0 & 1 & 1
    \end{bmatrix}\,.
\end{equation}
Note that for this code $C=C_0\sqcup C_2$, both associated codes $C_0\sqcup C_1$ and $C_0\sqcup C_3$ turn out to be the extended Golay code, which is the Type II self-dual code with minimum weight $8$.
The weight enumerators of this code $C$ are
\begin{align}
    \begin{aligned}
    W_C(x_0,x_1) &= x_0^{24} + x_0^{22}\,x_1^2 + 77\,x_0^{18}\,x_1^6 + 407\,x_0^{16}\,x_1^8 + 
    946\,x_0^{14}\,x_1^{10} + 1232\,x_0^{12}\,x_1^{12} \\
    &\qquad+ 946\,x_0^{10}\,x_1^{14} + 407\,x_0^8\,x_1^{16} + 77\,x_0^6\,x_1^{18} + x_0^2\,x_1^{22} + x_1^{24}\,,
    \end{aligned}
\end{align}
and $\widetilde{W}_C(x_0,x_1)=0$.
The NS-NS partition function of the theory $\CT$ is
\begin{align}
\begin{aligned}
    Z_{\CT}[0,0] &= \frac{1}{\eta(\tau)^{24}} W_C(\theta_3(2\tau),\theta_2(2\tau)) \\
    &= q^{-1} + 4\,q^{-\frac{1}{2}} + 72 + 5200 \,q^{\frac{1}{2}} + 106772\, q + O(q^{\frac{3}{2}}) \,.
\end{aligned}
\end{align}
The R-R partition function is vanishing due to $\widetilde{W}_C(x_0,x_1)=0$.
The Construction A lattice $\Lambda(C)$ can be decomposed into $\BZ^2\times \Lambda_{22}$. The lattice theta function of $\Lambda_{22}$ is
\begin{align}
    \Theta_{\Lambda_{22}}(\tau) = 1+44 \,q+4928 \,q^{\frac{3}{2}}+85404\, q^2+788480 \,q^{\frac{5}{2}}+4900896 \,q^3+O(q^{\frac{13}{4}})\,,
\end{align}
where this lattice can be identified with the last row in Table 16.7~\cite{conway2013sphere}.
The shift orbifold $\CT/H$ and reflection orbifold $\CT/G$ has the NS-NS partition function
\begin{align}
    Z_{\CT/G}[0,0] = Z_{\CT/H}[0,0] = q^{-1} + 2\,q^{-\frac{1}{2}} + 24 + 4648\, q^{\frac{1}{2}} + 102676\, q + O(q^{\frac{3}{2}}) \,.
\end{align}
The odd self-dual lattice after orbifolding is $\BZ\times O_{23}$ where $O_{23}$ is the shorter Leech lattice whose lattice theta function is
\begin{align}
    \Theta_{O_{23}}(\tau) = 1+4600 \,q^{\frac{3}{2}}+93150 \,q^2+953856\, q^{\frac{5}{2}}+6476800 \,q^3+O(q^{\frac{13}{4}})\,.
\end{align}
After taking both orbifolds, we obtain the NS-NS partition function
\begin{align}
    Z_{\CT/H/G}[0,0] = q^{-1} + q^{-\frac{1}{2}} + 4372 \,q^{\frac{1}{2}} + 100628\,q + O(q^{\frac{3}{2}}) \,.
\end{align}
This exactly agrees with the partition function of the $\mathrm{Baby}\times \mathrm{\psi}$ fermionic CFT~\cite{Lin:2019hks}, where $\mathrm{Baby}$ denotes the Baby Monster CFT with $c = 47/2$ and $\psi$ denotes the Majorana-Weyl fermion.
The Baby Monster CFT was constructed using the Monster CFT in \cite{hohn2007selbstduale}.
The partition function of the Baby Monster CFT is
\begin{align}
    Z_{\mathrm{Baby}}[0,0] = q^{-\frac{47}{48}} + 4371\,q^{\frac{25}{48}} + 96256 \,q^{\frac{49}{48}} + 1143745\,q^{\frac{73}{48}} + O(q^{\frac{97}{48}}) \,.
\end{align}
Each coefficient of the above can be decomposed into dimensions of irreducible representations of the Baby Monster group since the first few of them are $1,4371,96255,1139374,\cdots$.
The global symmetry of this CFT has been shown to be the direct product of the Baby Monster group and the cyclic group of order 2~\cite{hohn2002group}.

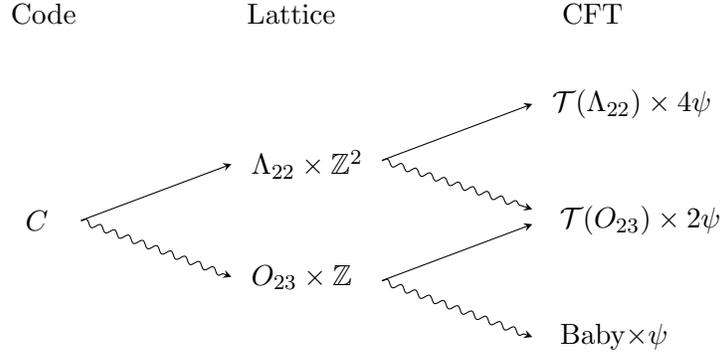
\begin{figure}
    \centering
    \begin{tikzpicture}[transform shape, scale=1]
    \begin{scope}[yshift=1cm]
    \draw (-0.5,2.5)node[font=\normalsize]{Code};
    \draw (2.8,2.5)node[font=\normalsize]{Lattice};
    \draw (6.8,2.5)node[font=\normalsize]{CFT};
    \end{scope}
    \draw[->,>=stealth] (0,0.75)node[left=0.3cm,font=\normalsize]{$C$} to (2,1.5)node[right,font=\normalsize]{\,\,$\Lambda_{22}\times \BZ^2$};
    \draw[->,>=stealth,decorate,decoration={snake,amplitude=.4mm,segment length=2mm,post length=1mm}] (0,0.75) to (2,0)node[right,font=\normalsize]{\,\,$O_{23}\times \BZ$};
    \begin{scope}[xshift=4cm,yshift=0.8cm]
    \draw[->,>=stealth] (0,0.75) to (2,1.5)node[right,font=\normalsize]{$\,\,\CT(\Lambda_{22})\times4\psi$};
    \draw[->,>=stealth,decorate,decoration={snake,amplitude=.4mm,segment length=2mm,post length=1mm}] (0,0.75) to (2,0.1);
    \end{scope}
    \begin{scope}[xshift=4cm,yshift=-0.8cm]
    \draw[->,>=stealth] (0,0.75) to (2,1.5);
    \draw[->,>=stealth,decorate,decoration={snake,amplitude=.4mm,segment length=2mm,post length=1mm}] (0,0.75) to (2,0) node[right,font=\normalsize]{\,\;\,Baby$\times\psi$};
    \end{scope}
    \node[right,font=\normalsize] at (6.1,0.75) {$\,\,\CT(O_{23})\times2\psi$};
\end{tikzpicture}
\caption{The Baby Monster CFT constructed from the code $C$. $O_{23}$ is the shorter Leech lattice and $\Lambda_{22}$ is the 22-dimensional odd self-dual lattice with 44 norm 2 vectors. Here, ``Baby$\times\psi$" stands for the Baby Monster CFT~\cite{hohn2007selbstduale} with a Majorana-Weyl fermion.}
\label{fig:n24_Baby}
\end{figure}

\subsection{Chiral fermionic CFTs with $c\geq 32$}
\label{ss:cgeq32}

We give examples of chiral fermionic CFTs with central charge $c\geq32$ constructed from binary codes.
Our starting point is a list of generator matrices of singly-even self-dual codes~\cite{database}.
By taking the reflection and shift orbifolds, we construct chiral fermionic CFTs with large energy gaps and show their torus partition functions.

\subsubsection{$c=32$}

\renewcommand{\arraystretch}{0.79}
Let us consider a singly-even self-dual code $C$ generated by (\cite{database})
\begin{align}
    \begin{bmatrix}1&1&0&0&0&0&0&0&0&0&0&0&0&0&0&0&0&0&0&0&0&0&0&0&0&0&0&0&0&0&0&0\\0&0&1&0&0&0&0&0&0&1&0&0&0&1&0&0&0&1&0&0&0&0&1&1&0&1&1&1&0&0&0&1\\0&0&0&1&0&0&0&0&0&1&0&0&0&1&0&0&0&1&0&0&0&0&1&1&0&1&1&1&1&1&1&0\\0&0&0&0&1&0&0&0&0&0&0&0&0&1&0&0&0&1&0&0&0&1&0&1&1&1&1&0&1&1&1&1\\0&0&0&0&0&1&0&0&0&0&0&0&0&1&0&0&0&1&0&0&0&1&0&1&1&1&0&1&0&0&1&1\\0&0&0&0&0&0&1&0&0&1&0&0&0&0&0&0&0&0&0&0&1&1&0&1&0&0&0&1&1&0&0&1\\0&0&0&0&0&0&0&1&0&1&0&0&0&0&0&0&0&0&0&0&1&1&0&1&1&1&0&1&1&0&1&0\\0&0&0&0&0&0&0&0&1&1&0&0&0&0&0&0&0&0&0&0&0&0&0&0&1&1&0&0&1&1&0&0\\0&0&0&0&0&0&0&0&0&0&1&0&0&1&0&0&0&0&0&0&1&0&1&0&0&1&0&1&0&1&1&0\\0&0&0&0&0&0&0&0&0&0&0&1&0&1&0&0&0&0&0&0&1&0&0&1&1&0&1&0&1&0&0&1\\0&0&0&0&0&0&0&0&0&0&0&0&1&1&0&0&0&0&0&0&0&0&1&1&1&1&0&0&1&1&0&0\\0&0&0&0&0&0&0&0&0&0&0&0&0&0&1&0&0&1&0&0&0&0&1&1&1&0&1&0&1&0&1&0\\0&0&0&0&0&0&0&0&0&0&0&0&0&0&0&1&0&1&0&0&1&1&1&1&0&1&0&1&1&0&0&1\\0&0&0&0&0&0&0&0&0&0&0&0&0&0&0&0&1&1&0&0&1&1&0&0&1&1&0&0&0&0&0&0\\0&0&0&0&0&0&0&0&0&0&0&0&0&0&0&0&0&0&1&0&0&1&1&1&0&1&1&0&0&1&1&0\\0&0&0&0&0&0&0&0&0&0&0&0&0&0&0&0&0&0&0&1&1&0&0&0&1&0&1&0&0&1&0&1
    \end{bmatrix}\,.
\end{align}
The weight enumerators of this code are
\begin{align}
\begin{aligned}
    W_C(x_0,x_1) &= x_0^{32} + x_0^{30}\,x_1^2 + 19\,x_0^{26}\,x_1^6 + 412\,x_0^{24}\,x_1^8 + 2241\,x_0^{22}\,x_1^{10} + 7040\,x_0^{20}\,x_1^{12} \\
    &+ 14123\,x_0^{18}\,x_1^{14} + 17862\,x_0^{16}\,x_1^{16} + 14123x_0^{14}x_1^{18} + 7040x_0^{12}x_1^{20} \\
    &+ 2241\,x_0^{10}\,x_1^{22}+ 412\,x_0^8\,x_1^{24} + 19\,x_0^6\,x_1^{26} + x_0^2\,x_1^{30} + x_1^{32}\,,\\
    \widetilde{W}_C(x_0,x_1) &= 0\,.
\end{aligned}
\end{align}
This code contains only a codeword with $\mathrm{wt}(c)=2$ and no codewords with $\mathrm{wt}(c)=4$.
The NS-NS partition function of the theory $\CT$ is
\begin{align}
\begin{aligned}
    Z_\CT[0,0] &= \frac{1}{\eta(\tau)^{32}} W_C(\theta_3(2\tau),\theta_2(2\tau)) \\
    &=
    q^{-\frac{4}{3}}+4\,q^{-\frac{5}{6}}+96\,q^{-\frac{1}{3}}+1584 \,q^{\frac{1}{6}}+110064\, q^{\frac{2}{3}} + O(q^{\frac{7}{6}})\,,
\end{aligned}
\end{align}
and the R-R partition function is vanishing.
After an orbifold by the reflection symmetry $G$ or the shift symmetry $H$, the theory $\CT/G\cong \CT/H$ has the partition function
\begin{align}
    Z_{\CT/G}[0,0] = q^{-\frac{4}{3}}+2\,q^{-\frac{5}{6}}+32\,q^{-\frac{1}{3}}+792 \,q^{\frac{1}{6}}+88048\, q^{\frac{2}{3}} + O(q^{\frac{7}{6}})\,.
\end{align}
Further, we can take the orbifold and obtain the theory $\CT/H/G$ whose partition function is
\begin{align}
\label{eq:c32orb}
    Z_{\CT/H/G}[0,0] = q^{-\frac{4}{3}}+\,q^{-\frac{5}{6}}+396 \,q^{\frac{1}{6}}+77040\, q^{\frac{2}{3}} + O(q^{\frac{7}{6}})\,.
\end{align}
Since this theory contains an excitation of weight $1/2$, we can decompose this theory into a Majorana-Weyl fermion and the rest part $\CT_{R}$ with central charge $c = 63/2$~\cite{Goddard:1988wv}.

Dividing \eqref{eq:c32orb} by the contribution from a Majorana-Weyl fermion, the latter part has the NS-NS partition function
\begin{align}
    Z_{\CT_R}[0,0] = q^{-\frac{21}{16}} + 395 \,q^\frac{3}{16} + 76644 \,q^\frac{11}{16} + 2099673\, q^\frac{19}{16}+ O(q^{\frac{27}{16}})\,.
\end{align}
This shows that any non-trivial operator in this theory has the conformal weight $h\geq 3/2$.
Due to the absence of spin one operators, the theory cannot have continuous symmetry.

\subsubsection{$c=40$}
Let us consider a singly-even self-dual code of length $40$ generated by
\begin{align}
    \begin{bmatrix}
        1&0&0&0&0&0&0&0&0&0&0&0&0&0&0&0&0&0&0&0&1&0&
        1&0&1&0&1&0&0&1&0&0&1&0&0&0&1&0&0&0\\
       0&1&0&0&0&0&0&0&0&0&0&0&0&0&0&0&0&0&0&0&1&1&
        1&0&1&0&1&1&1&0&1&0&1&0&1&0&1&1&0&1\\
       0&0&1&0&0&0&0&0&0&0&0&0&0&0&0&0&0&0&0&0&1&0&
        1&1&0&0&1&0&0&1&1&0&1&1&1&0&1&0&0&1\\
       0&0&0&1&0&0&0&0&0&0&0&0&0&0&0&0&0&0&0&0&1&0&
        0&0&0&1&1&0&0&1&1&1&1&0&1&1&0&1&1&0\\
       0&0&0&0&1&0&0&0&0&0&0&0&0&0&0&0&0&0&0&0&0&1&
        1&1&1&0&1&0&1&0&0&0&0&0&1&0&1&0&0&1\\
       0&0&0&0&0&1&0&0&0&0&0&0&0&0&0&0&0&0&0&0&0&1&
        0&1&1&1&1&1&0&1&1&1&0&0&1&1&1&1&0&0\\
       0&0&0&0&0&0&1&0&0&0&0&0&0&0&0&0&0&0&0&0&1&0&
        1&1&0&1&1&1&0&1&1&1&0&1&0&1&0&0&1&1\\
       0&0&0&0&0&0&0&1&0&0&0&0&0&0&0&0&0&0&0&0&1&1&
        0&1&0&1&0&1&0&1&0&0&0&1&1&1&0&1&0&1\\
       0&0&0&0&0&0&0&0&1&0&0&0&0&0&0&0&0&0&0&0&1&0&
        1&0&0&1&0&1&1&0&1&0&1&0&0&0&1&0&0&1\\
       0&0&0&0&0&0&0&0&0&1&0&0&0&0&0&0&0&0&0&0&0&1&
        0&1&1&1&0&1&1&0&0&1&1&0&0&1&0&1&1&0\\
       0&0&0&0&0&0&0&0&0&0&1&0&0&0&0&0&0&0&0&0&1&1&
        1&1&1&1&1&1&1&0&1&1&0&1&0&0&0&0&1&0\\
       0&0&0&0&0&0&0&0&0&0&0&1&0&0&0&0&0&0&0&0&0&1&
        1&1&0&1&1&0&0&1&1&0&0&1&1&0&1&0&1&0\\
       0&0&0&0&0&0&0&0&0&0&0&0&1&0&0&0&0&0&0&0&0&0&
        1&0&1&1&0&1&0&0&1&0&1&1&1&1&1&1&0&0\\
       0&0&0&0&0&0&0&0&0&0&0&0&0&1&0&0&0&0&0&0&0&0&
        0&1&1&0&0&0&0&0&0&0&0&1&0&0&1&1&1&1\\
       0&0&0&0&0&0&0&0&0&0&0&0&0&0&1&0&0&0&0&0&1&1&
        0&0&1&0&1&1&1&0&0&0&1&1&0&1&0&1&1&0\\
       0&0&0&0&0&0&0&0&0&0&0&0&0&0&0&1&0&0&0&0&0&1&
        1&1&0&1&0&0&0&1&0&0&0&0&1&0&0&1&0&0\\
       0&0&0&0&0&0&0&0&0&0&0&0&0&0&0&0&1&0&0&0&0&1&
        0&1&1&1&0&1&1&0&1&0&0&0&0&0&1&1&1&1\\
       0&0&0&0&0&0&0&0&0&0&0&0&0&0&0&0&0&1&0&0&0&1&
        1&0&1&1&0&1&0&0&1&1&1&0&0&0&1&1&1&0\\
       0&0&0&0&0&0&0&0&0&0&0&0&0&0&0&0&0&0&1&0&1&0&
        1&0&0&1&1&1&0&0&1&0&1&1&0&0&0&0&1&0\\
       0&0&0&0&0&0&0&0&0&0&0&0&0&0&0&0&0&0&0&1&1&0&
        1&0&0&0&0&1&0&1&1&1&1&0&0&0&0&0&1&1
    \end{bmatrix}\,,
\end{align}
which is the top matrix of $\BF_2$, $n=40$, $d(C)=8$ without shadows with $\mathrm{wt}(s)=4$ in~\cite{database}.
The weight enumerators of this code are
\begin{align*}
\begin{aligned}
    W_C(x_0,x_1) &= x_0^{40}+125\, x_0^{32}\, x_1^8+1664 \,x_0^{30}\, x_1^{10}+10720 \,x_0^{28} \,x_1^{12}+44160\, x_0^{26} \,x_1^{14}\\
    &+119810\, x_0^{24} \,x_1^{16}+216320 \,x_0^{22}\, x_1^{18}+262976 \,x_0^{20} \,x_1^{20}+216320 \,x_0^{18}\, x_1^{22}\\
    &+119810\, x_0^{16}\, x_1^{24}+44160 \,x_0^{14} \,x_1^{26}+10720 \,x_0^{12}\, x_1^{28}+1664 \,x_0^{10}\, x_1^{30}+125 \,x_0^8 \,x_1^{32}+x_1^{40}\,,
\end{aligned}
\end{align*}
and $\widetilde{W}_C(x_0,x_1)=0$.
The NS-NS partition function of the theory $\CT$ is
\begin{align}
    Z_\CT[0,0] = q^{-\frac{5}{3}} + 120\,q^{-\frac{2}{3}} + 39180\, q^\frac{1}{3} + 1703936\, q^\frac{5}{6} + O(q^\frac{4}{3})\,,
\end{align}
and the R-R partition function is zero.
After orbifolding, the theory $\CT/G\cong\CT/H$ has the NS-NS partition function
\begin{align}
    Z_{\CT/G}[0,0] = q^{-\frac{5}{3}} + 40\,q^{-\frac{2}{3}} + 19980\, q^\frac{1}{3} + 1376256\, q^\frac{5}{6} + O(q^\frac{4}{3})\,.
\end{align}
Finally, the theory $\CT/H/G$ has the NS-NS partition function
\begin{align}
    Z_{\CT/H/G}[0,0] = q^{-\frac{5}{3}} + 10380\, q^\frac{1}{3} + 1212416\, q^\frac{5}{6} + O(q^\frac{4}{3})\,.
\end{align}
The orbifold reduces the number of excitations with relatively small conformal weights. While the original theory $\CT$ and the orbifold $\CT/G\cong\CT/H$ have operators with $h =1$, the theory $\CT/H/G$ does not have such operators, and the spectral gap becomes $\Delta = 2$.

\section{Discussion}
\label{sec:discussion}

We have considered chiral fermionic CFTs based on lattices and investigate their orbifolds by two $\mathbb{Z}_2$ symmetries: the reflection symmetry $G$ and the shift symmetry $H$.
We have shown the equivalence between the reflection and shift orbifolds using a triality structure, when a lattice is constructed from a binary error-correcting code.
We have also systematically computed the partition functions of the orbifold theories for binary and nonbinary codes.
Finally, we have provided applications to the code construction of supersymmetric CFTs and chiral fermionic CFTs with no continuous symmetries.

In section~\ref{sec:triality}, we established the triality in chiral fermionic CFTs from binary codes by explicitly giving the triality operator. The triality also appears in the case of bosonic theories~\cite{dolan1990conformal}. When constructing the Monster CFT from the extended Golay code through the Leech lattice, the triality together with the automorphism of the Leech lattice (an extension of Conway's group) generates the whole global symmetry of the theory $\CT/H/G$, the Monster group.
A possible interesting direction is to extend this story to our construction. 
From the automorphism of lattices through codes, we may give an alternative proof of the discrete global symmetry of the ``Beauty and the Beast" SCFT and the Baby Monster CFT, and further identify the global symmetry of the orbifold theories with large central charges such as we have constructed in section~\ref{ss:cgeq32}.

As discussed in section~\ref{sec:susy}, we proposed the conditions that a CFT constructed from a code must satisfy when it has $\CN=1$ superconformal symmetry \cite{Kawabata:2023nlt}. In \cite{Kawabata:2023rlt}, we examined the modular transformation and the spectral flaw of the elliptic genus, i.e., the R-R partition function graded by the $\U(1)$ current, of a CFT with $\CN=2$ superconformal symmetry, and organized the conditions under which the CFT constructed from a code satisfies these properties. Although these conditions strongly suggest the existence of supersymmetry, they are merely necessary conditions and do not guarantee the existence of the supercurrent with appropriate OPEs. Deriving sufficient conditions for supersymmetry in CFTs constructed from lattices and the orbifold theories discussed in this paper remains a goal for future work.

\acknowledgments
We are grateful to Y.\,Matsuo for their valuable discussions.
The work of K.\,K. and S.\,Y. was supported by FoPM, WINGS Program, the University of Tokyo.
The work of K.\,K. was supported by JSPS KAKENHI Grant-in-Aid for JSPS fellows Grant No.\,23KJ0436.

\bibliographystyle{JHEP}
\bibliography{ref}
\end{document}